\newtheorem{theoremm}{Theorem}
\newtheorem{eqed}{Example}
\newtheorem {lemmaa}{Lemma}
\newtheorem {defnn}{Definition}
\newtheorem {corollaryy}{Corollary}
\newtheorem {conjecturee}[theoremm]{Conjecture}
\newtheorem {procd}{Procedure}
\newenvironment{example}{\begin{eqed}{\bf :}\sl}{\hfill\end{eqed}}
\newenvironment{lemma}{\begin{lemmaa} \sl}{\end{lemmaa}}
\newenvironment{theorem}{\begin{theoremm}{\bf :}\sl}{\end{theoremm}}
\newenvironment{definition}{\begin{defnn}{\bf :}\sl}{\end{defnn}}
\newenvironment{algo}{\begin{algorithm}}{\end{algorithm}}
\newenvironment{corollary}{\begin{corollaryy}{\bf :}\sl}{\end{corollaryy}}
\providecommand{\floor}[1]{\left \lfloor #1 \right \rfloor }
\author{Raju Hazari~~~~ and~~~~ Sukanta Das}
\title{On Number Conservation of Non-uniform Cellular Automata}
\address{Department of Information Technology,\\ Indian Institute of Engineering Science and Technology (IIEST), Shibpur, India \\ Email: \email{hazariraju0201@gmail.com  , sukanta@it.iiests.ac.in}}
\keywords{Number conserving cellular automata (NCCAs), non-uniform CAs, Rule Min Term (RMT), reachability tree}
\begin{document}
\maketitle
\begin{abstract}
This paper studies the number conservation property of 1-dimensional non-uniform cellular automata (CAs). In a non-uniform cellular automaton (CA), different cells may follow different rules. The present work considers that the cells follow Wolfram's CAs rules. A characterization tool, named Reachability tree is used to discover the number conservation property of non-uniform CAs. Then a decision algorithm is reported to conclude whether a given non-uniform CA with $n$ cells is number conserving or not. Finally, a synthesis scheme is developed to get an $n$-cell number conserving non-uniform CA. 
\end{abstract}

\section{Introduction}
\label{sec:in}

A cellular automaton (CA) is a regular lattice of simple finite state machines that update their states according to a local update rule. The local rule specifies the new state of each cell based on the states of its neighbors. Classically, cells of a CA follow same rule to generate their next states. Since late 1980s, however, a group of researchers had started to explore a new type of cellular automata (CAs) where different cells of a CA may follow different rules \cite{ppc1,sipper96evnca}. These CAs are commonly known as {\em non-uniform CAs}. Primary focus of this non-uniform CA research was on the one-dimensional CA, where the cells follow Wolfram's CA rules \cite{Wolfram86}. This work also considers non-uniform one-dimensional 3-neighborhood two-state (finite) CAs, and studies their number conservation property.

The number conserving cellular automata (NCCAs) are the CAs where the number of 1s (0s) of initial configuration is conserved during the evolution of the CA. Due to their similarity with the physical law of conservation, the NCCAs have received a great attention of the researchers in last two decades \cite{Hattori:1991, boccara-1998-31,Boccara:2002,Durand03}. The NCCAs essentially model the particle systems that are governed by local interaction rules. One such area where the NCCAs are widely utilized is the development of road traffic models \cite{NS92,FI96,SMC09,CSF11}.

In 1998, Boccara and Fuk\'{s} gave necessary and sufficient conditions for a one-dimensional CA to be NCCA, first for two states per cell \cite{boccara-1998-31} and then for an arbitrary number of states \cite{Boccara:2002}. Pivato \cite{Pivato:2002} gave a general treatment of conserved quantities in 1D CA, showing how to construct all the CAs which satisfy a given conservation law.
Durand et. al. showed that number-conservation is a decidable property through the generalization of the characterization of NCCA to 2 and then $d$ dimensions \cite{Durand03}. The idea of extending conserved quantities to that of monotone quantities was presented by Kurka \cite{Kurka:2003} where he considered CA with vanishing particles. Morita, Imai and other collaborators \cite{Morita98,Morita:99} use partitioned number-conserving (and reversible) CA. This is not exactly the same as an NCCA, since the reduction from a partitioned CA to a non-partitioned one does not preserve the number. In these works they show the computational universality of this class of automata, by simulating a counter machine which is known to be universal.

However, all the works on number conservation deal with classical CAs, where the cells of a CA follow a single rule. The concepts of deciding and constructing uniform (that is, classical) NCCAs can not be applied to non-uniform cases. In fact, there is no efficient algorithm to verify whether a given non-uniform CA is NCCA or to {\em synthesize} a non-uniform NCCA. This scenario motivates us to undertake this research. We develop here the theories for non-uniform NCCAs, and design efficient algorithms for deciding and synthesizing of non-uniform NCCA. By {\em synthesis}, however, here we mean the selection of individual cell rules of a non-uniform NCCA.

The paper is organized as follows. Section~\ref{RTncca} introduces a characterization tool named reachability tree for exploring non-uniform CAs, and identifies its role in characterizing non-uniform NCCAs. Section~\ref{property} further analyzes the behavior of non-uniform NCCAs, and based on this analysis, Section~\ref{algorithms} develops an efficient decision algorithm. Finally, a synthesis scheme is reported in Section~\ref{Synthesis_ncca} to construct an n-cell non-uniform NCCA.

Before progressing further, let us introduce some preliminary concepts and definitions about CAs.


\section{Cellular Automata Preliminaries}
\label{preli}

A cellular automaton (CA) is a discrete, spatially-extended dynamical system that has been studied extensively as a model of physical system. It evolves in discrete space and time. A CA consists of a lattice of cells, each of which stores a discrete variable at time $t$ that refers to the present state of the CA cell \cite{Neuma66}. The next state of a cell is affected by its present state and the present states of its neighbors at time $t$.
In two-state 3-neighborhood (self, left and right neighbors) 1-dimensional CA, next state of a cell is determined as:
\begin{equation}
S^{t+1}_i = f_i({S^t_{i-1}},{S^t_i},{S^t_{i+1}})
\end{equation}
where $f_i$ is the next state function of $i^{th}$ cell; ${S^t_{i-1}}$, ${S^t_i}$ and ${S^t_{i+1}}$ are the present states of the
left neighbor, self and right neighbor of the $i^{th}$ CA cell at time $t$.
Therefore, the function $f_i:\{0,1\}^3 \mapsto \{0, 1\}$ can be expressed as a look-up table.
The decimal equivalent of the 8 outputs is called `rule' \cite{wolfram86}.
Two such rules are 136 and 252 (Tab.~\ref{Trules}).

Traditionally, each of the cells of a CA follows same next state function. Such a CA is called as {\em uniform} CA. On the other hand, if the CA cells are allowed to follow different next state functions (rules), the CA is a {\em non-uniform} (or  {\em hybrid}) CA. For an $n$-cell hybrid CA, we need a {\em rule vector} ${\mathcal{R}}= \langle {\mathcal{R}}_0, {\mathcal{R}}_1, \cdots, {\mathcal{R}}_i, \cdots, {\mathcal{R}}_{n-1}\rangle$, where the CA cell $i$ ($0\le i\le n-1$) uses rule ${\mathcal{R}}_i$. In case of an $n$-cell uniform CA,  ${\mathcal{R}}_0= {\mathcal{R}}_1= \cdots = {\mathcal{R}}_i = \cdots = {\mathcal{R}}_{n-1}$. Hence, traditional uniform CA is a special case of non-uniform CA. This work deals with finite binary CAs of size $n$ under {\em periodic boundary} condition where first and last cells are neighbors of each other.

{\small
\begin{table*}
\caption{Look-up table for rule 136, 252, 238 and 226}
\label{Trules}
\[
\begin{array}{|cccccccccc|}
\hline
{\rm Present~ State:}    &  111 & 110 & 101 & 100 & 011 &  010 &  001 &  000 & Rule \\
(RMT)	& (7) & (6) & (5) & (4) & (3) & (2) & (1) & (0) & \\
  {\rm ~ (i)~ Next~ State:}    &   1  &  0 &  0  &  0  &   1  & 0  &   0  &   0  & 136 \\
{\rm (ii)~ Next~ State:}    & 1  &  1 &  1  &  1  &   1  & 1  &   0  &   0   & 252 \\
{\rm (iii)~ Next~ State:}    & 1  &  1 &  1  &  0  &   1  & 1  &   1  &   0   & 238 \\
{\rm (iv)~ Next~ State:}    & 1  &  1 &  0  &  0  &   0  & 0  &   0  &   0   & 192 \\
\hline
\end{array}
\]
\end{table*}
}

A collection of states of the cells ${\mathcal S}^t = (S^t_0,S^t_1,\cdots,S^t_{n-1})$ at time $t$
is the present configuration or (global) state of the CA. In case of a NCCA, the number of 1s in any seed remains unaltered during evolution of the CA. This implies, for each pair of ${\mathcal S}^t$ and ${\mathcal S}^{t+1}$, the number of 0s and 1s in ${\mathcal S}^t$ remain unchanged in ${\mathcal S}^{t+1}$. 

\begin{figure}[h]
\begin{center}
\includegraphics[height=1.2in]{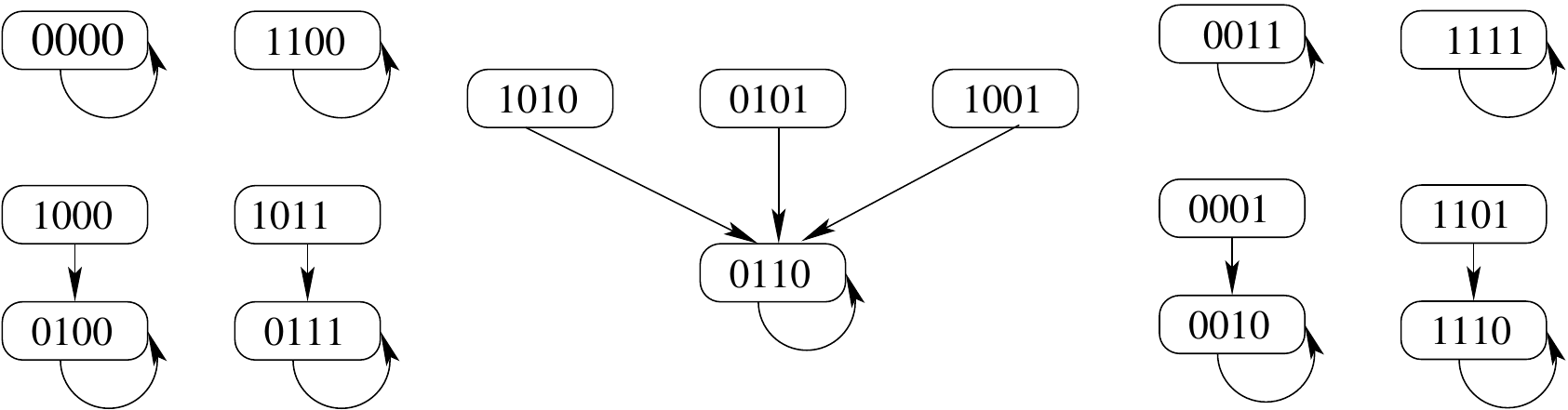}
\caption{State transition diagram of the CA $\langle136, 252, 238, 192\rangle$}
\label{NCCA4}
\end{center}
\end{figure}

Fig.~\ref{NCCA4} shows the state transition diagram of a 4-cell CA with rule vector $\langle136, 252, 238, 192\rangle$. The number of 0s and 1s of each state remain unchanged in its next state. For example, number of 0s and 1s of state 1010 are conserved in its next state 0110 (Fig.~\ref{NCCA4}). Hence, it is an NCCA. The state transition diagram of a CA classifies its states as {\em reachable} and {\em non-reachable}. A state is reachable if it has at least one predecessor. On the other hand, a non-reachable state can not be reached from any state. The states 1000, 1011, 1010, 0101, 1001, 0001 and 1101 of Fig.~\ref{NCCA4} are non-reachable. The rest states of Fig.~\ref{NCCA4} are reachable.

As mentioned before, the rules can be expressed in tabular form (Tab.~\ref{Trules}). Note that the table has an entry for each value of $S^t_{i-1}$, $S^t_i$  and $S^t_{i+1}$. In this work, we call the combination of present state as {\em Rule Min Term} (RMT) because this representation can be viewed as {\em Min Term} of three variable {\em Switching function}. For our convenience, we generally represent RMTs by their corresponding decimal equivalents. The RMTs have binary values (0/1) which correspond to the next states for these RMTs. For example, the RMT 011 (RMT 3) in Tab.~\ref{Trules} has the next state value 1 for rule 136 and 0 for rule 192. We write the next state of an RMT $r$ of a rule ${\mathcal R}_i$ as ${\mathcal R}_i[r]$. Hence, 136[3] = 1 and 192[3] = 0.

%

A CA state can also be viewed as a sequence of RMTs (RS). For example, the state 1110 in periodic boundary condition can be viewed as $\langle3765\rangle$, where 3, 7, 6 and 5 are corresponding RMTs on which the transition of first, second, third and forth cells can be made. For an $n$-bit state, we get a sequence of $n$ RMTs. To get an RMT sequence of a state, we consider an imaginary 3-bit window that slides over that state. To get the $i^{th}$ RMT of the sequence, the window is loaded with the $i-1$, $i$ and $(i+1)^{th}$ bits of the state. The window slides one bit right to report the $(i+1)^{th}$ RMT of the sequence. During the finding of RS, however, one can observe a relation between two consecutive RMTs of a sequence. If $i^{th}$ RMT is 0, then $(i+1)^{th}$ RMT can either be 0 or 1. Similarly, if the $i^{th}$ RMT is 3 or 7, $(i+1)^{th}$ RMT is either 6 or 7. In general, if $r$ is the $i^{th}$ RMT, then $(i+1)^{th}$ RMT is $2r \pmod{8}$ or $(2r+1) \pmod{8}$. Such a relation of RMTs is shown in Tab.~\ref{possible_weight}.

\begin{defnn}
\label{equi}
Two RMTs $r$ and $s$ ($r$ $\neq$ $s$) are said to be equivalent to each other if $2r \equiv 2s     \pmod{8}$.
\end{defnn}

The RMTs 0 and 4 of ${\mathcal R}_i$ are equivalent to each other(Tab.~\ref{relation}). Similarly, RMTs 1 \& 5, 2 \& 6, and 3 \& 7 are the equivalent RMT pairs.

\begin{defnn}
\label{sibl}
Two RMTs $r$ and $s$ ($r$ $\neq$ $s$) are said to be sibling RMT if $\lfloor \frac{r}{2} \rfloor$ = $\lfloor \frac{s}{2} \rfloor$
\end{defnn}

The RMTs 0 and 1 of ${\mathcal R}_i$ are sibling of each other. Similarly, RMTs 2 \& 3, 4 \& 5 and 6 \& 7 are the sibling RMT pairs.

We represent $Equi_i$ as a set of RMTs that contains RMT $i$ and all of its {\em equivalent} RMTs. That is, $Equi_i$ = \{$i$, 4+$i$\}, where 0 $\leq$ $i$ $\leq$ 3. Similarly, $Sibl_j$ represent a set of sibling RMTs where $Sibl_j$ = \{2j, 2j+1\} and 0 $\leq$ $j$ $\leq$ 3. The $Equi_i$ and $Sibl_j$ maintain an interesting relationship. Tab.~\ref{relation} elaborates this relationship. We use these relations in next sections.

{\small
\begin{table}[h]
\caption{Relationship among the RMTs for 2-state CA}
\begin{center}
\label{relation}
\begin{tabular}{|c|c|c||c|c|c|}\hline
 \multicolumn{3}{|c||}{RMT at $i^{th}$ rule}&
   \multicolumn{3}{|c|}{RMTs at $(i+1)^{th}$ rule}\\\hline
\# Set & Equivalent & Decimal & \# Set & Sibling & Decimal\\
 & RMTs & Equivalent & & RMTs & Equivalent \\\hline
$Equi_0$ & 000, 100 & 0, 4 & $Sibl_0$ & 000, 001 & 0, 1 \\\hline
$Equi_1$ & 001, 101 & 1, 5 & $Sibl_1$ & 010, 011 & 2, 3\\\hline
$Equi_2$ & 010, 110 & 2, 6 & $Sibl_2$ & 100, 101 & 4, 5\\\hline
$Equi_3$ & 011, 111 & 3, 7 & $Sibl_3$ & 110, 111 & 6, 7\\\hline
\end{tabular}
\end{center}
\end{table}
}

\section{Reachability tree and Number conservation}
\label{RTncca}

Reachability tree \cite{Acri04,SukantaTh}, is a discrete tool of characterizing 1-dimensional CA. It is a rooted and edge-labelled binary tree that represents reachable states of a binary CA. For an $n$-cell CA, there are $(n+1)$ levels - root at level 0, and leaves at level $n$. 

\begin{defnn}
\label{Rtree_def}
Reachability tree for an $n$-cell cellular automaton under periodic boundary condition is a rooted and edge-labeled binary tree with $n+1$ levels, where
each node $ N_{i.j} ~ (0 \leq i \leq n,~ 0 \leq j \leq 2^{i}-1)$ is an ordered list of $4$ sets of RMTs, and the root $N_{0.0}$ is the ordered list of all sets of sibling RMTs. We denote the edges between $N_{i.j} ~ (0 \leq i \leq n-1,~ 0 \leq j \leq 2^{i}-1)$ and its children as  $E_{i.2j}=(N_{i.j},N_{i+1.2j},l_{i.2j})$ and $E_{i.2j+1}=(N_{i.j},N_{i+1.2j+1},l_{i.2j+1})$ where $l_{i.2j}$ and $l_{i.2j+1}$ are the labels of the edges. Like nodes, the labels are also ordered list of $4$ sets of RMTs. Let us consider that ${\Gamma_{p}}^{N_{i.j}}$ is the $p^{th}$ set of the node $N_{i.j}$, and ${\Gamma_{q}}^{E_{i.2j+m}}$ is the $q^{th}$ set of the label on edge $E_{i.2j+m}$ $(0 \leq p,q \leq 3)$. So,  $N_{i.j} = ( {\Gamma_{p}}^{N_{i.j}})_{0 \leq p \leq 3}$ and $l_{i.2j+m} =  ( {\Gamma_{q}}^{E_{i.2j+m}})_{0 \leq q \leq 3}$ ($m \in \{0, 1\}$). Following are the relations which exist in the tree :

\begin{enumerate}
	\item \label{rtd1} [For root] $N_{0.0} = ({\Gamma_{k}}^{N_{0.0}})_{0 \leq k \leq 3}$, where $\Gamma_0^{N_{0.0}}=\{0,1\}$, $\Gamma_1^{N_{0.0}}=\{2,3\}$, $\Gamma_2^{N_{0.0}}=\{4,5\}$ and  $\Gamma_3^{N_{0.0}}=\{6,7\}$.
	
	\item \label{rtd2} $\forall r \in {\Gamma_{k}}^{N_{i.j}}$, RMT $r$ of ${\mathcal R}_{i}$  is included in ${\Gamma_{k}}^{E_{i.2j}}$ (resp. ${\Gamma_{k}}^{E_{i.2j+1}}$), if ${\mathcal R}_{i}[r]$ = 0 (resp. 1), where ${\mathcal R}_{i}$ is the rule of the $i^{th}$ cell of the CA. That means, $ {\Gamma_{k}}^{N_{i.j}} = {\Gamma_{k}}^{E_{i.2j}}$ ${\bigcup}$ ${\Gamma_{k}}^{E_{i.2j+1}}$ $(0 \leq k \leq 3)$.

	\item \label{rtd4}$\forall r \in {\Gamma_{k}}^{E_{i.2j}}$ (resp. ${\Gamma_{k}}^{E_{i.2j+1}}$), RMTs $2r \pmod{8}$ and $ 2r+1 \pmod{8}$ of ${\mathcal R}_{i+1}$ are in ${\Gamma_{k}}^{N_{i+1.2j}}$ (resp. ${\Gamma_{k}}^{N_{i+1.2j+1}}$).

	\item \label{rtd5} [For level $n-2$] ${\Gamma_{k}^{N_{n-2.j}}} = \lbrace s ~|~$ if $ r \in {\Gamma_{k}^{E_{n-3.j}}} $ then $ s \in \lbrace 2r \pmod{8}, 2r+1 \pmod{8}\rbrace \cap \lbrace i, i+2, i+4, i+6 \rbrace \rbrace$ ($i= \floor{\frac{k}{2}}, 0 \leq k \leq 3, 0 \leq j \leq 2^{n-2}-1 $).

	\item \label{rtd6} [For level $n-1$] ${\Gamma_{k}}^{N_{n-1.j}} = \lbrace s ~|~$ if $ r \in {\Gamma^{E_{n-2.j}}}_{k} $ then $ s \in \lbrace 2r \pmod{8}, 2r+1 \pmod{8}\rbrace \cap \lbrace k, k+4 \rbrace \rbrace, { 0 \leq k \leq 3}$.
	
\end{enumerate}

\end{defnn}

Note that the nodes of levels $n-2$ and $n-1$ are different from other intermediate nodes (Points~\ref{rtd5} and \ref{rtd6} of Definition~\ref{Rtree_def}). Only $\frac{1}{2}$ of selective RMTs can play as $ {\Gamma_{k}}^{N_{i.j}}$ in a node $ {N_{i.j}}, (0\leq k \leq 3, 0 \leq j \leq 2^i-1)$ when $i = n-2$ or $n-1$. Finally, we get the leaves with ${\Gamma_{k}}^{N_{n.j}}$, where ${\Gamma_{k}}^{N_{n.j}}$ is either empty or a set of sibling RMTs. Note that, ${\Gamma_{k}}^{N_{0.0}}$ is a set of sibling RMTs (Point~\ref{rtd1} of Definition~\ref{Rtree_def}) and ${\Gamma_{k}}^{N_{0.0}} = \bigcup _{j} \Gamma_k^{N_{n.j}} $. That is, the RMTs of leaves are the RMTs of ${\mathcal R}_0$. We call $E_{i.2j}$ as 0-edge, because the next state values of all the RMT at the label of $E_{i.2j}$ are 0 (Point~\ref{rtd2} of Definition~\ref{Rtree_def}). Similarly, $E_{i.2j+1}$ is called as 1-edge. If not stated otherwise, ``${N_{i.j}}$'' will mean an arbitrary node in the reachability tree of an $n$-cell CA in our further discussion, where $ 0 \leq i \leq n$ and $0 \leq j \leq 2^n-1$.

\begin{figure}
\begin{center}
\includegraphics[height=2.5in, width=5.6in]{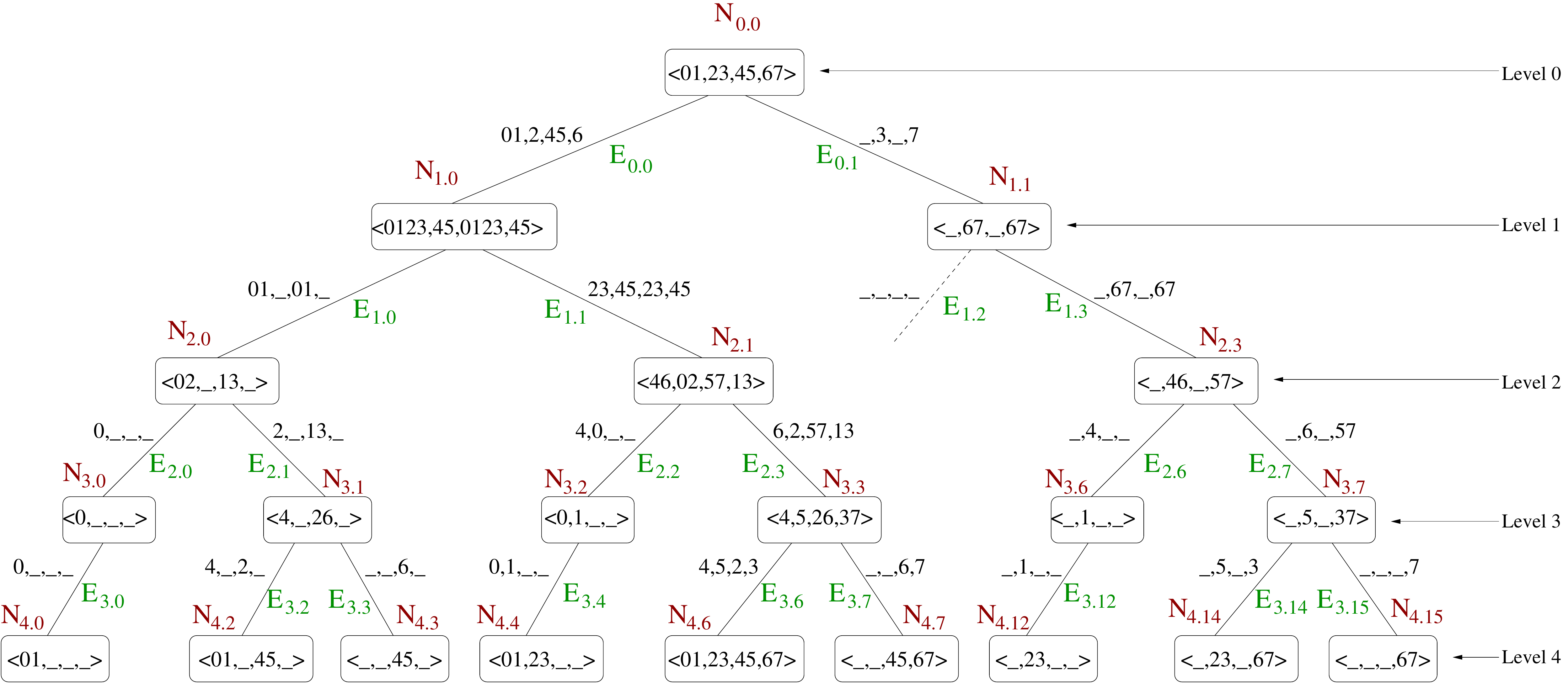}
\caption{Reachability Tree for the CA $\langle136, 252, 238, 192\rangle$}
\label{treestructureP}
\end{center}
\end{figure}

Consider a CA with rule vector $\langle136, 252, 238, 192\rangle$ (Fig.~\ref{NCCA4}). The RMTs of CA rules are noted in Tab.~\ref{Trules}. The reachability tree of the CA is shown in Fig.~\ref{treestructureP}. The root of the tree is $N_{0.0}$ = ($\Gamma_0^{N_{0.0}}$, $\Gamma_1^{N_{0.0}}$, $\Gamma_2^{N_{0.0}}$ , $\Gamma_3^{N_{0.0}}$). Note that the root is independent of CA rule, whereas other nodes are rule dependent. Here, $l_{0.0}$, the label of $E_{0.0}$, is ($\{0,1\}, \{2\}, \{4,5\}, \{6\}$) and the corresponding child $N_{1.0}$ is ($\{0,1,2,3\}, \{4,5\}, \{0,1,2,3\}, \{4,5\}$) (Point~\ref{rtd4} of Definition~\ref{Rtree_def}).
If no RMT is the member of $\Gamma_k^{N_{i.j}}$ ($0 \leq k \leq 3$, $0 \leq i \leq n-1$, $0\leq j \leq  2^i-1$) (i.e, the set is empty), then $\Gamma_k^{N_{i.j}}$ is noted as `$\_$' in the tree. For example, $l_{0.1}$, the label of $E_{0.1}$ is ($\emptyset$, \{3\}, $\emptyset$, \{7\}) and the corresponding child $N_{1.1}$ is ($\emptyset$, \{6,7\}, $\emptyset$, \{6,7\}). However, an arbitrary RMT can not be a part of nodes of level $n-2$ and $n-1$. For example, $l_{1.0}$ = (\{0,1\}, $\emptyset$, \{0,1\}, $\emptyset$) but the corresponding node $N_{2.0}$ (that is, $N_{n-2.1}$, since $n$ = 4) is (\{0,2\}, $\emptyset$, \{1,3\}, $\emptyset$ ). Observe that $\Gamma_2^{N_{2.0}}$ of the node $N_{2.0}$ of Fig.~\ref{treestructureP} is \{1,3\}. If we follow Point~\ref{rtd4} of Definition~\ref{Rtree_def}, then RMTs 0 and 2 should also be part of $\Gamma_2^{N_{2.0}}$. But they could not be, because the node is at level $n-2$ (Point~\ref{rtd5} of Definition~\ref{Rtree_def}). Similarly, $l_{2.3}$ = (\{6\}, \{2\}, \{5,7\}, \{1,3\}) and $N_{3.3}$ = (\{4\}, \{5\}, \{2,6\}, \{3,7\}) (Point~\ref{rtd6} of Definition~\ref{Rtree_def}). Here, $\Gamma_k^{N_{0.0}}$ = $\cup_j \Gamma_k^{N_{4.j}}$ for each $k \in \{0, 1, 2, 3\}$.

Reachability tree gives us information about reachable states of the CA. However, some nodes in a reachability tree may not be present, which we call non-reachable nodes, and the corresponding missing edges are non-reachable edges. For them $\Gamma_k^{N_{i.j}}$ = $\emptyset$ and $\Gamma_k^{E_{i.j}}$ = $\emptyset$ for each $k \in \{0, 1, 2, 3\}$. For example, in Fig.~\ref{treestructureP}, the edge $E_{1.2}$ and corresponding node $N_{2.2}$ are not present. The dotted edge indicates the non-reachable edge.

A sequence of edges $\langle  E_{0.j_0},E_{1.j_1},\cdots,E_{i.j_i},E_{i+1.j_{i+1}}, \cdots, E_{n-1.j_{n-1}} \rangle$ from root to a leaf associates a reachable state and at least one RMT Sequence (RS) $\langle r_0r_1 \cdots r_ir_{i+1} \cdots r_{n-1} \rangle$, where $r_i \in \Gamma_p^{E_{i.j_i}}$ for any $p \in \{0, 1, 2, 3\}$ and $r_{i+1} \in \Gamma_q^{E_{i+1.j_{i+1}}}$ for any $q \in \{0, 1, 2, 3\}$ $(0 \leq i < $n$-1, 0 \leq j_i \leq 2^i-1,$ and $j_{i+1} = 2j_i$ or $2j_i+1)$. That is, a sequence of edges represents at least two CA states. Note that if ${\mathcal R}_i[r_i]$ = 0 (resp. 1), then $E_{i.j_i}$ is 0-edge (resp. 1-edge). Therefore, the reachable state is the next (resp. present) state of the current (resp. predecessor) state, represented as RMT sequence. Interestingly, there are $2^n$ RSs in the tree, but number of reachable states may be less than $2^n$. However, a sequence of edges may associate $m$-number of RSs ($m\geq 1$), which implies, this state is reachable from $m$-number of different states. For example, the edge sequence $\langle E_{0.0},E_{1.1},E_{2.2},E_{3.4}\rangle$ of Fig.~\ref{treestructureP} represents the reachable state 0100, and associates two RMT sequences 1240 and 2401, which are the states 0100 and 1000 respectively. Both the states are the predecessors of 0100 (see Fig.~\ref{NCCA4} for verification). All of the $2^n$ states of an $n$-cell CA are present in the tree in the form of RMT sequences, and their next states as reachable states.

Reachability tree can be utilized for deciding an $n$-cell CA as NCCA. Following is our approach: Develop reachability tree of the CA to get the RMT sequences. If we observe that any state, represented as RMT sequence, is not having same number of 1s as it has in its next state, we decide the CA as not an NCCA. To facilitate this task, we assign {\em weights} to RMTs. Before defining weights formally, we report the following result.

\begin{lemmaa}
\label{universal}
For each rule ${\mathcal R}_i$ of an $n$-cell NCCA, ${\mathcal R}_i[0]$ = 0 and ${\mathcal R}_i[7]$ = 1.
\end{lemmaa}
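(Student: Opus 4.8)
The plan is to exploit the two homogeneous configurations, the all-zeros state $\mathbf{0} = 00\cdots0$ and the all-ones state $\mathbf{1} = 11\cdots1$, and to apply the number-conservation property directly to each. The key observation is that under the periodic boundary condition every cell of $\mathbf{0}$ sees the neighborhood $(0,0,0)$, which is exactly RMT $0$, and every cell of $\mathbf{1}$ sees $(1,1,1)$, which is RMT $7$. So on these two configurations the entire next state is governed, cell by cell, solely by the values ${\mathcal R}_i[0]$ and ${\mathcal R}_i[7]$.

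First I would treat ${\mathcal R}_i[0]$. Since ${\mathcal S}^t = \mathbf{0}$ contains zero $1$s, and cell $i$ updates to $S^{t+1}_i = f_i(0,0,0) = {\mathcal R}_i[0]$, the successor configuration ${\mathcal S}^{t+1}$ has exactly $\sum_{i=0}^{n-1} {\mathcal R}_i[0]$ ones. Number conservation forces this count to equal the number of $1$s in $\mathbf{0}$, namely $0$. As each term is in $\{0,1\}$, a sum of nonnegative terms equalling $0$ means every term is $0$, hence ${\mathcal R}_i[0] = 0$ for all $i$. Symmetrically, for ${\mathcal R}_i[7]$ I would start from ${\mathcal S}^t = \mathbf{1}$, which has $n$ ones; here $S^{t+1}_i = f_i(1,1,1) = {\mathcal R}_i[7]$, so ${\mathcal S}^{t+1}$ has $\sum_{i=0}^{n-1} {\mathcal R}_i[7]$ ones. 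Conservation forces this to equal $n$, and since each ${\mathcal R}_i[7] \le 1$, every term must be $1$, giving ${\mathcal R}_i[7] = 1$ for all $i$.

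Honestly, there is no real obstacle here: the result is a necessary condition that falls out immediately once the two extremal configurations are identified, because they pin down the boundary RMTs $0$ and $7$ uniformly across all cells. The only point requiring a moment of care is the periodic boundary condition itself, which is what guarantees that cells $0$ and $n-1$ (the wrap-around neighbors) also see the uniform neighborhood, so that no cell is exempt from the constraint; this is why the conclusion holds for \emph{every} rule ${\mathcal R}_i$ in the vector rather than only for the interior cells.
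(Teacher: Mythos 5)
Your proof is correct and uses exactly the same idea as the paper: apply number conservation to the two homogeneous configurations $0^n$ and $1^n$, whose uniform neighborhoods force every cell's RMT~0 and RMT~7 values. You simply spell out the counting argument that the paper's one-line proof leaves implicit.
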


\begin{proof}
Since the number of 1s (and 0s) of any initial state is conserved in NCCA, two homogeneous states $0^n$ and $1^n$ can not follow this condition without having ${\mathcal R}_i[0]$ = 0 and ${\mathcal R}_i[7]$ = 1 for each rule ${\mathcal R}_i$. 
\end{proof}

\noindent
\textbf{The Weight:}
The weight of an RMT notes the surplus or deficiency of 1s in the RMT sequence with respect to its next state.
Let us consider an RMT sequence $\langle r_0r_1 \cdots r_ir_{i+1} \cdots r_{n-1} \rangle$ and its next state $b_0 b_1 \cdots b_i \cdots b_{n-1}$ for a CA ${\mathcal R}$ = $\langle {\mathcal R}_0, {\mathcal R}_1, \cdots, {\mathcal R}_i, \cdots, {\mathcal R}_{n-1}\rangle$. That is, ${\mathcal R}_i[r_i] = b_i$, $i \in \{0, 1, \cdots, n-1\}$. Initially, there is no surplus or deficiency of 1s. So, weight of $r_0$ is 0. However, when we compare $r_0$ and $b_0$, then we can understand whether we have surplus or deficiency of 1. For example, if $r_0 \in \{2, 3, 6, 7\}$ (that is, the first bit of the state, represented as $\langle r_0 r_1 \cdots r_i \cdots r_{n-1} \rangle$ is 1) and $b_0$ = 0, then we understand that the first bit of the state $r_0 r_1 \cdots r_i \cdots r_{n-1}$ contributes one additional 1, which is carried by the RMR $r_1$. We assign weight 1 to RMT $r_1$. That is, if we have surplus (resp. deficiency) of 1, we assign weight to $r_1$ as 1 (resp. -1). Next we compare $r_1$ and $b_1$, and assign weight to $r_2$ after taking the weight of the $r_1$ into account. For example, if $r_1 \in \{2, 3, 6, 7\}$ and $b_1$ = 0, and weight of $r_1$ is 1, then weight of $r_2$ is 2. In this case, first two bits of state $r_0 r_1 \cdots r_i \cdots r_{n-1}$ are 11, but those of next state are 00. In this way we proceed, and finally we get the weight of RMT $r_{n-1}$. In case of NCCA, this weight is to be dismissed after comparing $r_{n-1}$ and $b_{n-1}$, to conclude that both the states have equal number of 1s. As an example, consider two consecutive states, RMT sequence $\langle4012\rangle$ and its next state 0010 of the CA of Fig.~\ref{NCCA4}. Weights of RMTs of the sequence are 0, 0, 0, -1 respectively. The weight -1 is dismissed when we compare RMT 2 and 0, and conclude that both the states have same number of 1s.

The reachability tree can efficiently implement this idea. To do so, we assign weights of RMTs at the nodes of reachability tree. Initially, weights of RMTs at root are 0. An RMT $r_i$, present in $\Gamma_k^{N_{i.j}}$ is also present at a label of an edge- either in $\Gamma_k^{E_{i.2j}}$ or in $\Gamma_k^{E_{i.2j+1}}$ (Point~\ref{rtd2} of Definition~\ref{Rtree_def}). If the $r_i$ is in $\Gamma_k^{E_{i.2j}}$ (resp. $\Gamma_k^{E_{i.2j+1}}$), the next state of $r_i$ for rule ${\mathcal{R}}_{i}$ is 0 (resp. 1). So the weight of RMT $r_{i+1}$ can be understood by knowing the weight of $r_i$ and observing the edge ($E_{i.2j}$ or $E_{i.2j+1}$) on which the RMT exists. However, RMTs $r_i$ and $r_{i+1}$ are in an RMT sequence, and they are on some nodes as well. Suppose $W_k(i, r)$ denotes the weight of RMT $r \in \Gamma_k^{N_{i.j}}$. Following is the definition of the weight, for each $k \in \{0, 1, 2, 3\}$.

\begin{enumerate}
\item  \hspace{2.2in}$W_k(0, r)$ = 0 ~~~~~~~~~:$\forall r \in \Gamma_k^{N_{0.0}}$
\item	
\begin{small}
 \begin{displaymath}
 W_k(i, 2r\pmod{8}) = W_k(i, 2r+1\pmod{8}) = \left\{
 \begin{array}{lr}
 W_k(i-1, r) + 1  & : r \in \{2, 3, 6, 7 \} ~~and~~  {\mathcal{R}}_{i-1}[r] =0 \\
 W_k(i-1, r) - 1  & : r \in \{0, 1, 4, 5 \} ~~and~~  {\mathcal{R}}_{i-1}[r] =1 \\
 W_k(i-1, r) & :   otherwise 
 \end{array}
 \right.
\end{displaymath} 
\end{small} 
\end{enumerate}

We use the above definition to get the weights of the RMTs, present in nodes. Since the RMTs at leaves are the RMTs of ${\mathcal R}_0$, that is, of root, the weight of such an RMT $r$ is the additional 1s which the state, represented as RMT sequence that involves this $r$, is having compared to its next state. Therefore, if all the weights of leaves are 0, the CA is an NCCA.

\begin{example}
Consider the CA with rule vector $\langle136, 252, 238, 192\rangle$ (Tab.~\ref{Trules}). We assign the weights to the RMTs at nodes according to the above rule. The reachability tree with such weight is noted in Fig.~\ref{NCCAweight}. The weights are shown in the bottom of the RMTs (within first brackets). The weights of the RMTs of the sequence $\langle4012\rangle$ are 0, 0, 0, and -1 respectively. The weight -1 is nullified at the leaf $N_{4.2}$, as 192[2] = 0. Many of the RMTs of intermediate nodes have non-zero weights. However, the leaves of the tree have RMTs with zero weight. Hence, the CA is an NCCA.
\end{example}

\begin{figure}
\includegraphics[height=2.3in,width=5.5in]{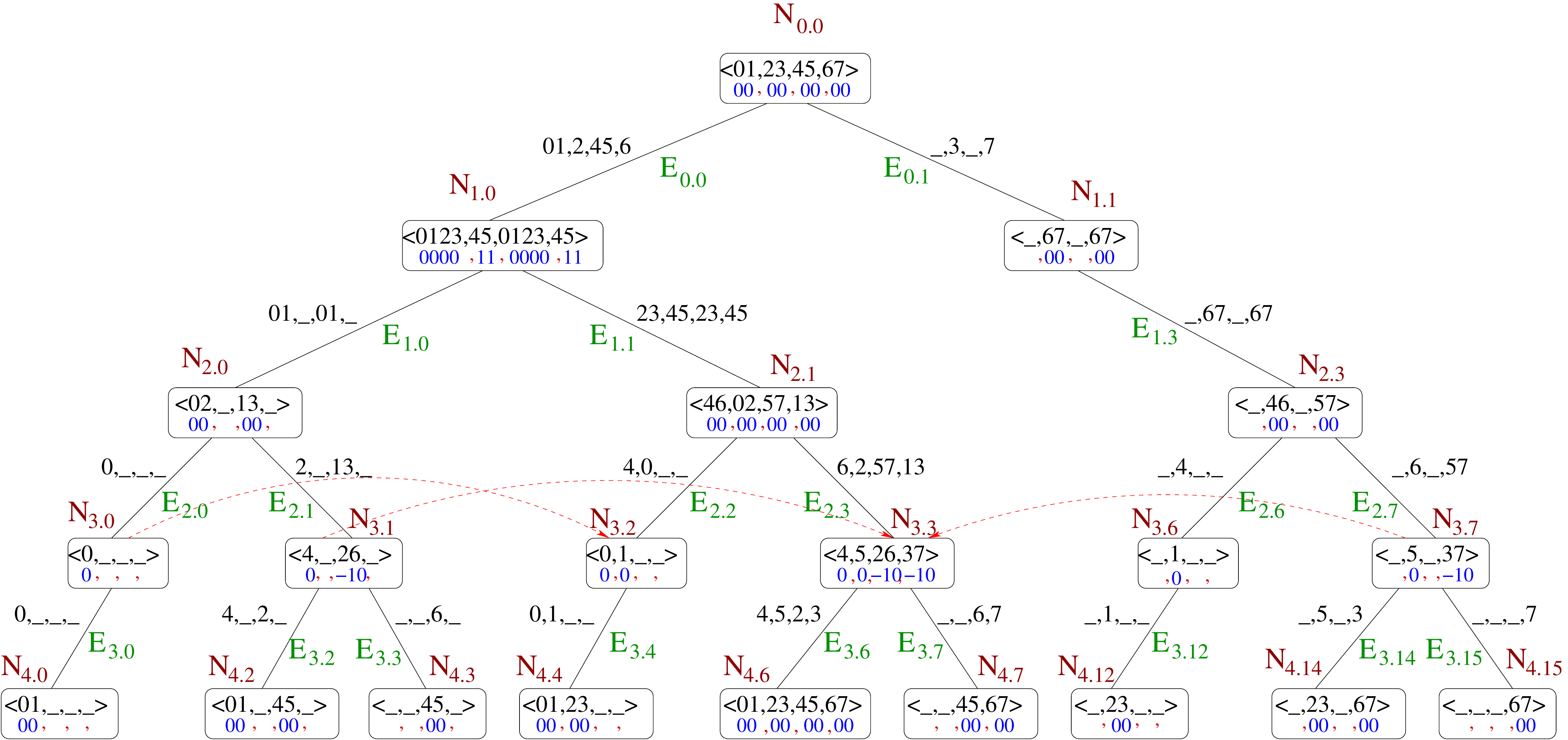}
\caption{Reachability tree of NCCA with weights of RMTs at different nodes}
\label{NCCAweight}
\end{figure}

According to Point~\ref{rtd4} of Definition~\ref{Rtree_def}, an RMT $r \in \Gamma_k^{E_{i.j}}$ contributes two RMTs~ $2r\pmod{8}$ and $2r+1\pmod{8}$ to $\Gamma_k^{N_{i+1.j}}$. If another RMT $s$, which is equivalent to $r$, present in $\Gamma_k^{E_{i.j}}$, two more RMTs $2s\pmod{8}$ and $2s+1\pmod{8}$ are contributed to $\Gamma_k^{N_{i+1.j}}$ where $2r$ $\equiv$ $2s\pmod{8}$. However, the weights of the RMTs $2r\pmod{8}$ and $2s\pmod{8}$ may not be same. If this happens, the CA can not be an NCCA. Following is an important result.

\begin{theorem}
\label{basic1}
The weight of an RMT in $\Gamma_k^{N_{i.j}}$ is unique, where $k \in \{0, 1, 2, 3\}$ and $N_{i.j}$ is any node in the reachability tree of an NCCA.
\end{theorem}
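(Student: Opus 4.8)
The plan is to argue by contradiction, converting any failure of weight-uniqueness directly into a violation of number conservation. First I would make explicit the intrinsic meaning of the weight. Unwinding the recursion, the weight $W_k(i,r)$ of an RMT $r=r_i$ carried by an RMT sequence $\langle r_0 r_1 \cdots r_{n-1}\rangle$ equals the running surplus $\sum_{l=0}^{i-1}\bigl(m(r_l) - {\mathcal R}_l[r_l]\bigr)$, where $m(r_l)$ is the middle (self) bit of $r_l$; this is exactly the number of $1$s among the first $i$ present-state cells minus the number of $1$s among the first $i$ cells of the image. The second quantity is pinned down by the node alone: a $0$-edge records next-state bit $0$ and a $1$-edge records next-state bit $1$ (Point~\ref{rtd2} of Definition~\ref{Rtree_def}), and since the path from the root to $N_{i.j}$ is unique, the next-state prefix $b_0 b_1 \cdots b_{i-1}$ is the binary expansion of $j$ and is common to every RMT sequence passing through $N_{i.j}$. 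Hence the weight of $r$ at $N_{i.j}$ is determined once the present-state prefix count $\sum_{l<i} m(r_l)$ is determined, and uniqueness of the weight is equivalent to the constancy of that prefix count over all paths reaching $r$ at $N_{i.j}$.

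Next I would take a minimal counterexample: let $i$ be the least level at which some $t \in \Gamma_k^{N_{i.j}}$ receives two distinct weights. By minimality every RMT at level $i-1$ has a unique weight, so the two values of $t$ cannot come from a single parent; they must come from two distinct equivalent parents $r,s$ ($2r \equiv 2s \pmod 8$, Definition~\ref{equi}) that both lie on the edge $E_{i-1.j}$. Equivalent RMTs share their middle bit, and lying on the same edge they share the rule output, so the increment applied to $t$ is the same along both, namely $m(r)-{\mathcal R}_{i-1}[r]=m(s)-{\mathcal R}_{i-1}[s]$; the discrepancy is therefore forced into the (well-defined) weights of $r$ and $s$ themselves, i.e. $W_k(i-1,r)\neq W_k(i-1,s)$. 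Unwinding the surplus at level $i-1$ and using that $r,s$ sit in the same node (hence share the next-state prefix), this says precisely that the two partial sequences ending in $r$ and in $s$ run along the same edge path yet carry different numbers of present-state $1$s in their length-$(i-1)$ prefixes.

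Because $r$ and $s$ both produce $t$, and because the downward continuation of the tree depends only on the pair $(N_{i.j},t)$, I can extend both partial sequences by one common suffix $\langle t, r_{i+1}, \cdots, r_{n-1}\rangle$ down to a leaf, obtaining two full RMT sequences, i.e. two present states $X$ and $X'$. Both traverse the same root-to-leaf edge path, so they have the identical next state $Z$: the next-state bits agree on the prefix because the edge path is the same, and on the suffix because the RMTs (and hence the cells' rule outputs) are the same. On the other hand, $m(r)=m(s)$ at position $i-1$ and the suffix is shared, so $X$ and $X'$ differ in $1$-count only through their length-$(i-1)$ prefixes, which contribute different numbers of $1$s. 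Thus $X$ and $X'$ are distinct present states with different numbers of $1$s, both mapping to $Z$. This is impossible in an NCCA, where every predecessor of $Z$ must carry exactly as many $1$s as $Z$. The contradiction shows that no RMT can receive two weights, so the weight in $\Gamma_k^{N_{i.j}}$ is unique.

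The step I expect to require the most care is the construction of the two witnessing configurations, not the conservation argument itself. I must verify that a common valid completion from $(N_{i.j},t)$ to a leaf always exists and that it respects the periodic-boundary bookkeeping encoded at levels $n-2$, $n-1$ and $n$ (Points~\ref{rtd5}--\ref{rtd6} of Definition~\ref{Rtree_def}), so that the resulting leaf sequences are genuine cyclic states; the identity $\Gamma_k^{N_{0.0}} = \bigcup_j \Gamma_k^{N_{n.j}}$ is what guarantees this closure. I must also apply the running-surplus identity consistently across the wrap-around so that the final $1$-count comparison is exact. Lemma~\ref{universal} is available to dispose of the homogeneous and boundary RMTs should they need separate handling, but the core of the proof is simply that a non-unique weight exhibits two predecessors of one state with unequal $1$-counts, which is exactly what number conservation forbids. (A direct induction on the level stalls at the inductive step, since it reduces to showing that equivalent RMTs in a node have equal weight, and that fact itself needs the global conservation argument above.)
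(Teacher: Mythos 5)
Your proposal is correct and follows essentially the same route as the paper's proof: assume a non-unique weight, push the discrepancy down to the leaves, and contradict number conservation there (the paper phrases the contradiction as ``leaf weights must all be $0$,'' which is exactly your statement that every predecessor of a state must carry the same number of $1$s as that state). Your write-up is substantially more rigorous than the paper's two-line sketch --- in particular the running-surplus identity $W_k(i,r_i)=\sum_{l<i}\bigl(m(r_l)-{\mathcal R}_l[r_l]\bigr)$, the localization of the discrepancy to two equivalent parents on a common edge, and the explicit construction of two predecessors $X$, $X'$ of the same image $Z$ with unequal $1$-counts are all left implicit in the paper --- but the underlying argument is the same.
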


\begin{proof}
Suppose an RMT $r \in \Gamma_k^{N_{i.j}} $ has more than one weight. Then, the RMTs in $\Gamma_k^{N_{i+1.2j}}$ or in $\Gamma_k^{N_{i+1.2j+1}}$ contributed by $r$ (using Point~\ref{rtd4} of Definition~\ref{Rtree_def}) have more than one weight. Again these RMTs contribute RMTs with multiple weights to the next level, and so on. Finally in the leaf nodes of the reachability tree, we get an RMT having multiple weights. But to be NCCA, in the leaf nodes all RMTs should have weight 0. So, this violates the condition of NCCA. Hence, the weight of an RMT in $\Gamma_k^{N_{i.j}}$ is to be unique.
\end{proof}

Following corollaries can be derived from Theorem~\ref{basic1} for an NCCA.

\begin{corollary}
\label{basic2}
The weights of RMT $r$ of ${\mathcal R}_i$ that belongs to $\Gamma_k^{N_{i.a}}$ and $\Gamma_k^{N_{i.b}}$ are equal.
\end{corollary}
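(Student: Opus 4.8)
The plan is to show that the weight $W_k(i,r)$ is in fact a function of the triple $(k,i,r)$ alone, independent of the node index $j$; the assertion of Corollary~\ref{basic2} is exactly this independence for two prescribed nodes $N_{i.a}$ and $N_{i.b}$. Two ingredients drive the argument. First, the weight increment used to pass from one level to the next depends only on the RMT $r$ and on the value ${\mathcal R}_i[r]$ (equivalently, on whether $r$ sits on the $0$-edge or the $1$-edge), and never on which node $N_{i.j}$ the RMT lives in. Second, for an NCCA every RMT at a leaf carries weight $0$.

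First I would record the node-independence of the increment directly from the definition of the weight. For any node, an RMT $r \in \Gamma_k^{N_{i.j}}$ contributes the two RMTs $2r \bmod 8$ and $2r+1 \bmod 8$ to the appropriate child (Point~\ref{rtd4} of Definition~\ref{Rtree_def}), and the weight assigned to each of them equals $W_k(i,r)+\delta$, where $\delta=+1$ if $r\in\{2,3,6,7\}$ and ${\mathcal R}_i[r]=0$, $\delta=-1$ if $r\in\{0,1,4,5\}$ and ${\mathcal R}_i[r]=1$, and $\delta=0$ otherwise. Since $\delta$ is fixed once $r$ and the cell rule ${\mathcal R}_i$ are fixed, the change of weight across one level depends only on the RMT value, not on the node.

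The main step is then a descent toward the leaves, reusing the mechanism of the proof of Theorem~\ref{basic1}. Suppose, for contradiction, that $r$ appears in both $\Gamma_k^{N_{i.a}}$ and $\Gamma_k^{N_{i.b}}$ but with weights $w_a \neq w_b$. By Points~\ref{rtd4}, \ref{rtd5} and \ref{rtd6} of Definition~\ref{Rtree_def}, the RMTs produced below $r$ at every subsequent level depend only on $r$, on the cell rules ${\mathcal R}_{i+1},\dots,{\mathcal R}_{n-1}$, and on $k$; hence the subtree of RMTs hanging below $r$ has identical content in both $N_{i.a}$ and $N_{i.b}$. Choosing one and the same continuation of RMTs down to a leaf in both subtrees, and applying the node-independent increment established above at each level, the two copies of the resulting leaf RMT receive weights differing by exactly $w_a - w_b \neq 0$. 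But in an NCCA every leaf RMT has weight $0$ (and this common value is forced by Theorem~\ref{basic1}), so one of the two copies would carry a nonzero weight, a contradiction. Therefore $w_a=w_b$.

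The point I expect to require the most care is the claim that the two subtrees below $r$ carry identical RMT content and, in particular, that a common continuation down to a nonempty leaf exists for both nodes. This hinges on the fact that the special truncations at levels $n-2$ and $n-1$ (Points~\ref{rtd5} and \ref{rtd6}) are governed only by the RMT values and by $k$, so they prune the same RMTs in both subtrees; once this is verified, the difference $w_a-w_b$ is genuinely transported unchanged to a shared leaf RMT and the NCCA leaf condition closes the argument. A forward induction on the level $i$ is an alternative, but there the case of two distinct yet equivalent parent RMTs (in the sense of Definition~\ref{equi}) feeding $r$ from different nodes is not covered by Theorem~\ref{basic1}, which again pushes one back to the descent argument above.
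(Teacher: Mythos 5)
Your proposal is correct and follows essentially the same route as the paper's own proof: assume the two weights differ, propagate the discrepancy down through the subtrees below $r$ (using the node-independence of the weight increment), and conclude that some leaf RMT would carry a nonzero weight, contradicting the NCCA condition. Your write-up merely makes explicit the details the paper leaves implicit, in particular that the subtrees of RMTs below $r$ in $N_{i.a}$ and $N_{i.b}$ have identical content so the difference is genuinely transported to a common leaf.
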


\begin{proof}
For a proof by contradiction, assume that $r$ has two different weights in two nodes. Hence, the RMTs contributed by $r$ (following Point~\ref{rtd4} of Definition~\ref{Rtree_def}) to the nodes of next level have different weights. As a result, at least one set of leaves followed from $N_{i.a}$ or $N_{i.b}$ can have RMTs with weight not equal to 0. The RMTs with non-zero weight in any leaves indicate that the CA is not an NCCA. Hence to be NCCA, the RMT $r$ has to have same weight in two nodes.
\end{proof}

\begin{corollary}
\label{equivalentTh1}
Two equivalent RMTs $r \in \Gamma_k^{N_{i.a}}$ and $s \in \Gamma_k^{N_{i.b}}$ ($k \in \{0, 1, 2, 3\}$) carry same weight when ${\mathcal R}_i[r]$ = ${\mathcal R}_i[s]$.
\end{corollary}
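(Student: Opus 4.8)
The plan is to exploit the fact that equivalent RMTs feed \emph{identical} children into the next level of the reachability tree, and then invoke the uniqueness of weights (Theorem~\ref{basic1} and Corollary~\ref{basic2}) to pin down the relation between $W_k(i,r)$ and $W_k(i,s)$. I would not try to compare $r$ and $s$ directly, since as equivalent RMTs they have different parents; instead I would push the comparison forward to their common child.

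First I would record the combinatorial content of equivalence. By Definition~\ref{equi}, $r$ equivalent to $s$ means $2r \equiv 2s \pmod{8}$; writing each RMT in terms of its left-neighbour, self, and right-neighbour bits (so that its decimal value is $4\cdot\mathrm{left}+2\cdot\mathrm{self}+\mathrm{right}$), the congruence $2r\equiv 2s \pmod{8}$ is independent of the left bit and therefore forces $r$ and $s$ to agree on the self and right bits and to differ only in the left bit. Two consequences follow: (i) $r$ and $s$ lie in the same half, both in $\{2,3,6,7\}$ or both in $\{0,1,4,5\}$, according to their common self bit; and (ii) by Point~\ref{rtd4} of Definition~\ref{Rtree_def} they generate exactly the same pair of child RMTs $2r\bmod 8 = 2s\bmod 8$ and $2r+1\bmod 8 = 2s+1\bmod 8$ at level $i+1$.

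Next I would use the hypothesis $\mathcal{R}_i[r]=\mathcal{R}_i[s]$. By Point~\ref{rtd2} this places $r$ and $s$ on edges of the same type (both $0$-edges or both $1$-edges), so each passes its weight to the common child $c=2r\bmod 8$ through the same branch of the weight recurrence. Writing $\delta(r)$ for the increment dictated by the weight definition ($+1$ when $r\in\{2,3,6,7\}$ and $\mathcal{R}_i[r]=0$, $-1$ when $r\in\{0,1,4,5\}$ and $\mathcal{R}_i[r]=1$, and $0$ otherwise), the recurrence gives $c$ the weight $W_k(i,r)+\delta(r)$ along the path through $r$ and the weight $W_k(i,s)+\delta(s)$ along the path through $s$. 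Since $\delta$ depends only on the self bit of the RMT and on its next-state value, observation (i) together with $\mathcal{R}_i[r]=\mathcal{R}_i[s]$ yields $\delta(r)=\delta(s)$.

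Finally I would close with uniqueness. The two occurrences of the child $c$ lie in (possibly different) nodes at level $i+1$, but in an NCCA Theorem~\ref{basic1} and Corollary~\ref{basic2} force every copy of $c$ to carry one and the same weight, so $W_k(i,r)+\delta(r)=W_k(i,s)+\delta(s)$; cancelling the equal increments gives $W_k(i,r)=W_k(i,s)$. The step to get right is precisely this bookkeeping: one must track that $r$ and $s$, although they may sit in distinct nodes $N_{i.a}$ and $N_{i.b}$ and hence funnel into copies of $c$ in distinct level-$(i+1)$ nodes, still contribute to the \emph{same} RMT $c$ whose weight is globally determined. This is exactly where the NCCA assumption (through Theorem~\ref{basic1}/Corollary~\ref{basic2}) is indispensable, and where the hypothesis $\mathcal{R}_i[r]=\mathcal{R}_i[s]$ earns its keep by forcing $\delta(r)=\delta(s)$: dropping it would leave a residual $\delta(s)-\delta(r)$ and the two weights could genuinely differ.
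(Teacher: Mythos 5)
Your proof is correct and follows essentially the same route as the paper's: both arguments observe that equivalent RMTs on edges of the same type contribute the identical pair of sibling RMTs to level $i+1$ (Point~4 of Definition~3) and then invoke the uniqueness of weights (Theorem~1 and Corollary~2) to force $W_k(i,r)=W_k(i,s)$. Your write-up is somewhat more careful than the paper's, in that you make explicit why the weight increments $\delta(r)$ and $\delta(s)$ coincide (equivalent RMTs share the self bit, hence lie in the same half $\{2,3,6,7\}$ or $\{0,1,4,5\}$, and the hypothesis equalizes their next-state values) --- a step the paper leaves implicit.
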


\begin{proof}
Since ${\mathcal R}_i[r]$ = ${\mathcal R}_i[s]$, both the RMTs either on 0-edge or on 1-edge. Further, since the RMTs are equivalent to each other (i.e, $2r \equiv 2s \pmod{8}$), they contribute same set of (sibling) RMTs to $\Gamma_k^{N_{i+1.j_1}}$ ($j_1$ = $2a$ or $2a+1$) and $\Gamma_k^{N_{i+1.j_2}}$ ($j_2$ = $2b$ or $2b+1$)(Point~\ref{rtd4} of Definition~\ref{Rtree_def}). Hence, the weights of RMTs $r$ and $s$ are to be same, otherwise Theorem~\ref{basic1} and corollary~\ref{basic2} will be violated.
\end{proof}

\begin{corollary}
\label{1110}
For two equivalent RMTs $r \in \Gamma_k^{N_{i.a}}$ and $s \in \Gamma_k^{N_{i.b}}$ ($k \in \{0, 1, 2, 3\}$),  ${\mathcal R}_i[r] \neq {\mathcal R}_i[s]$ when $W_k(i, r) \neq W_k(i, s)$. 
\end{corollary}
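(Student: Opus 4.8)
The plan is to read Corollary~\ref{1110} as the exact logical contrapositive of Corollary~\ref{equivalentTh1} and to settle it in one step. Writing $P$ for the statement ``${\mathcal R}_i[r] = {\mathcal R}_i[s]$'' and $Q$ for ``$W_k(i, r) = W_k(i, s)$'', Corollary~\ref{equivalentTh1} asserts $P \Rightarrow Q$, whereas the present statement asserts $\lnot Q \Rightarrow \lnot P$. Since these two implications are logically equivalent, the entire content reduces to invoking the corollary already proved.

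Concretely, I would argue by contradiction. Keeping the standing hypotheses that $r$ and $s$ are equivalent and that both lie in the $k$-th set of the (possibly distinct) nodes $N_{i.a}$ and $N_{i.b}$, suppose the conclusion fails, i.e. ${\mathcal R}_i[r] = {\mathcal R}_i[s]$. These are precisely the premises of Corollary~\ref{equivalentTh1}, so that corollary applies verbatim and gives $W_k(i, r) = W_k(i, s)$, contradicting the hypothesis $W_k(i, r) \neq W_k(i, s)$. Hence ${\mathcal R}_i[r] \neq {\mathcal R}_i[s]$, as claimed. If a self-contained argument is preferred over quoting the contrapositive, the same conclusion follows from the weight-propagation rule directly: because $r$ and $s$ are equivalent ($2r \equiv 2s \pmod{8}$) they share the same self-bit, fall in the same branch of the weight definition (both in $\{2,3,6,7\}$ or both in $\{0,1,4,5\}$), and feed the identical sibling pair $\{\,2r \bmod 8,\; 2r{+}1 \bmod 8\,\}$ into level $i+1$; were they on the same edge type, the rule would add the same increment to each, so distinct incoming weights would hand the common child RMT two different weights across the two subtrees, violating the uniqueness guaranteed by Theorem~\ref{basic1} and Corollary~\ref{basic2}.

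I expect no genuine obstacle here, since the hypotheses of Corollary~\ref{equivalentTh1} are met word for word and the contrapositive is immediate. The only point deserving a moment's care is confirming that equivalent RMTs always share a self-bit, and therefore the same weight-update branch; this follows directly from the equivalence pairs $\{0,4\}, \{1,5\}, \{2,6\}, \{3,7\}$ recorded after Definition~\ref{equi}.
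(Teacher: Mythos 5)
Your proposal is correct. The paper does not use the contrapositive route at all: its proof of Corollary~\ref{1110} is a fresh run of the weight-uniqueness argument, i.e.\ exactly your ``self-contained'' fallback --- assume ${\mathcal R}_i[r]={\mathcal R}_i[s]$ with $W_k(i,r)\neq W_k(i,s)$, observe that the equivalent RMTs $r$ and $s$ feed the same RMT $2r\equiv 2s \pmod 8$ into the next level with two different weights, and contradict Theorem~\ref{basic1}. Your primary observation, that the statement is literally the contrapositive of Corollary~\ref{equivalentTh1} (whose hypotheses match word for word), is the more economical route and is logically airtight; it buys a one-line proof at the cost of leaning on Corollary~\ref{equivalentTh1}, which was itself established by the same weight-propagation argument, so nothing substantive changes. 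One point in your favour over the paper's write-up: you explicitly verify that equivalent RMTs (the pairs $\{0,4\},\{1,5\},\{2,6\},\{3,7\}$) share the same self-bit and hence the same branch of the weight-update rule, which is needed for the common child $2r \pmod 8$ to inherit the \emph{same} increment from both parents; the paper's proof uses this fact silently.
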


\begin{proof}
Here, RMTs $r$ and $s$ contribute same set of RMTs to the nodes followed from ${N_{i.a}}$ and ${N_{i.b}}$. If ${\mathcal R}_i[r] = {\mathcal R}_i[s]$ when $W_k(i, r) \neq W_k(i, s)$, the weights of RMT $2r \in \Gamma_k^{N_{i.a}}$ and $2s \in \Gamma_k^{N_{i.b}}$ where $2r=2s$ are not unique, which violates Theorem~\ref{basic1}. Hence ${\mathcal R}_i[r] \neq {\mathcal R}_i[s]$ when $W_k(i, r) \neq W_k(i, s)$.
\end{proof}

\begin{corollary}
\label{1111}
$|W_k(i, r) - W_k(i, s)|$ = 0 or 1 where $r \in \Gamma_k^{N_{i.a}}$ and $s \in \Gamma_k^{N_{i.b}}$ ($k \in \{0, 1, 2, 3\}$) are two equivalent RMTs.
\end{corollary}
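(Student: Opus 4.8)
The plan is to compare $W_k(i, r)$ and $W_k(i, s)$ \emph{indirectly}, by looking at a child RMT that both $r$ and $s$ force into the next level, and then using the uniqueness of weights (Corollary~\ref{basic2}) to equate the two resulting expressions. First I would record two elementary facts about equivalent RMTs. Since $r$ and $s$ are equivalent, i.e. $2r \equiv 2s \pmod{8}$, they form one of the pairs $\{0,4\}, \{1,5\}, \{2,6\}, \{3,7\}$ of Tab.~\ref{relation}; hence they lie \emph{together} either in $\{2,3,6,7\}$ or in $\{0,1,4,5\}$ (they share the same middle bit), and, by Point~\ref{rtd4} of Definition~\ref{Rtree_def}, they contribute exactly the same pair of sibling RMTs $2r \pmod{8}$ and $2r+1 \pmod{8}$ to the next level. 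Fix one such common child and call it $c$.

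Next I would read the weight increment straight off the weight definition: when we pass from a level-$i$ RMT $t$ to its children at level $i+1$, the quantity added to $W_k(i,t)$ is $g(t) = 1 - {\mathcal R}_i[t]$ when $t \in \{2,3,6,7\}$ and $g(t) = -{\mathcal R}_i[t]$ when $t \in \{0,1,4,5\}$. Because $r$ and $s$ sit in the same one of these two classes, a one-line computation in either case gives the key identity $g(r) - g(s) = {\mathcal R}_i[s] - {\mathcal R}_i[r]$.

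The crucial step is then to write the weight of the common child $c$ in two ways and equate them. The RMT $r \in \Gamma_k^{N_{i.a}}$ contributes $c$ with weight $W_k(i, r) + g(r)$, while $s \in \Gamma_k^{N_{i.b}}$ contributes the \emph{same} $c$ (possibly in a different node at level $i+1$) with weight $W_k(i, s) + g(s)$. By Corollary~\ref{basic2} the weight of $c$ in $\Gamma_k$ is identical in every node of level $i+1$, so these two values must coincide: $W_k(i, r) + g(r) = W_k(i, s) + g(s)$. Rearranging and substituting the increment identity yields $W_k(i, r) - W_k(i, s) = {\mathcal R}_i[r] - {\mathcal R}_i[s]$, and since ${\mathcal R}_i[r], {\mathcal R}_i[s] \in \{0,1\}$ the right-hand side lies in $\{-1,0,1\}$, giving $|W_k(i,r) - W_k(i,s)| = 0$ or $1$. (As a consistency check, this recovers Corollary~\ref{equivalentTh1} in the case ${\mathcal R}_i[r] = {\mathcal R}_i[s]$ and is the contrapositive complement of Corollary~\ref{1110} in the case ${\mathcal R}_i[r] \neq {\mathcal R}_i[s]$.)

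The part needing the most care — and where the NCCA hypothesis is indispensable — is the move that declares the two expressions for the weight of $c$ to be equal: this is precisely the cross-node uniqueness of Theorem~\ref{basic1} and Corollary~\ref{basic2}, which would simply fail for a non-NCCA. I would also dispose of the boundary levels separately: at $i = n-2$ and $i = n-1$ the children are filtered by Points~\ref{rtd5} and~\ref{rtd6}, but that filtering depends only on the RMT value and on $k$, not on which parent produced it, so $r$ and $s$ still emit the same surviving child $c$ and the argument is unchanged; and at the leaf level $i = n$ all weights are $0$, so the difference is trivially $0$.
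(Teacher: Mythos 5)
Your proposal is correct and rests on the same mechanism as the paper's own proof: equivalent RMTs $r$ and $s$ inject the same child RMT into level $i+1$, and the uniqueness of that child's weight (Theorem~\ref{basic1}, Corollary~\ref{basic2}) forces the parents' weights to differ by at most the difference of the increments, which is at most $1$. The only distinction is presentational --- the paper splits into the cases ${\mathcal R}_i[r]={\mathcal R}_i[s]$ (citing Corollary~\ref{equivalentTh1}) and ${\mathcal R}_i[r]\neq{\mathcal R}_i[s]$, whereas you derive the single identity $W_k(i,r)-W_k(i,s)={\mathcal R}_i[r]-{\mathcal R}_i[s]$ that covers both cases at once.
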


\begin{proof}
According to Corollary~\ref{equivalentTh1}, $W_k(i, r)$ = $W_k(i, s)$ when ${\mathcal R}_i[r]$ = ${\mathcal R}_i[s]$. But when ${\mathcal R}_i[r]$ $\neq$ ${\mathcal R}_i[s]$ then in case of NCCA, difference between weights of RMTs $r$ and $s$ is 1, because $2r \pmod {8}$ = $2s \pmod {8}$ and so $W_k(i+1, 2r \pmod {8})$ = $W_k(i+1, 2s \pmod {8})$. Hence, $|W_k(i, r) - W_k(i, s)|$ = 0 or 1.
\end{proof}

\begin{corollary}
\label{basic3}
For two equivalent RMTs $r$ and $s$ that belong to $\Gamma_{k_1}^{N_{i.a}}$ and $\Gamma_{k_2}^{N_{i.b}}$ respectively where $k_1, k_2 \in \{0, 1, 2, 3\}$ and $k_1 \neq k_2$, $W_{k_1}(i, r) - W_{k_1}(i, s)$ = $W_{k_2}(i, r) - W_{k_2}(i, s)$.
\end{corollary}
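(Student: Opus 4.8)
The plan is to prove the sharper, completely $k$-free identity
\[
W_k(i, r) - W_k(i, s) \;=\; {\mathcal R}_i[r] - {\mathcal R}_i[s]
\]
for every admissible set index $k$ and every pair of equivalent RMTs $r,s$ at level $i$; since the right-hand side does not mention $k$, the corollary drops out immediately by reading this identity once at $k=k_1$ and once at $k=k_2$ and equating the two differences. So the whole task reduces to establishing this identity, which I would do by the same ``common child'' device used in the proof of Corollary~\ref{1111}, but now tracked carefully enough to pin down the \emph{signed} value rather than just its magnitude.

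First I would use that $r$ and $s$ are equivalent, i.e.\ $2r \equiv 2s \pmod{8}$, so both parents feed the \emph{same} RMT $t = 2r \bmod 8 = 2s \bmod 8$ (with its sibling) into the $k$-th set of level $i+1$ (Point~\ref{rtd4} of Definition~\ref{Rtree_def}). By Theorem~\ref{basic1}, together with Corollary~\ref{basic2} to identify weights of $t$ across the two child nodes descending from $N_{i.a}$ and $N_{i.b}$, the value $W_k(i+1, t)$ is single-valued, so the two ways of producing it must coincide:
\[
W_k(i, r) + \big(m - {\mathcal R}_i[r]\big) \;=\; W_k(i+1, t) \;=\; W_k(i, s) + \big(m - {\mathcal R}_i[s]\big),
\]
where $m$ is the common middle bit of $r$ and $s$. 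Here I am using that the increment prescribed by the weight definition equals $(\text{middle bit of the parent}) - {\mathcal R}_i[\cdot]$ in all three of its cases, and that equivalent RMTs share middle and right bits (differing only in the leftmost bit), so the middle-bit terms are identical. Cancelling $m$ and rearranging gives exactly the claimed identity, and crucially the increment depends only on the parent RMT and on ${\mathcal R}_i$, never on $k$.

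The step I expect to need the most care is the single-valuedness of $W_k(i+1, t)$: this is precisely where the NCCA hypothesis enters, through Theorem~\ref{basic1} and Corollary~\ref{basic2}, and without it the two contributions could disagree and the identity would fail. Everything else is bookkeeping. As a consistency check I would note that the identity specialises correctly to the earlier results: when ${\mathcal R}_i[r] = {\mathcal R}_i[s]$ it reduces to $W_k(i, r) = W_k(i, s)$, recovering Corollary~\ref{equivalentTh1}, and when ${\mathcal R}_i[r] \neq {\mathcal R}_i[s]$ it forces the difference to be $\pm 1$, recovering the magnitude bound of Corollary~\ref{1111}; in both regimes the difference is visibly independent of $k$, which is the content of Corollary~\ref{basic3}.
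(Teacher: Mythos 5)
Your proof is correct, but it takes a genuinely different route from the paper's. The paper argues by contradiction, leaning entirely on the preceding corollaries: if the two differences disagreed, then by Corollary~\ref{1111} one of them would have absolute value $0$ and the other absolute value $1$, and Corollaries~\ref{equivalentTh1} and~\ref{1110} would then force ${\mathcal R}_i[r] = {\mathcal R}_i[s]$ and ${\mathcal R}_i[r] \neq {\mathcal R}_i[s]$ simultaneously, a contradiction. You instead derive the closed-form identity $W_k(i,r) - W_k(i,s) = {\mathcal R}_i[r] - {\mathcal R}_i[s]$ directly from the weight recurrence (increment $= m - {\mathcal R}_i[\cdot]$, with $m$ the middle bit shared by equivalent RMTs) together with the single-valuedness of the common child's weight (Theorem~\ref{basic1} and Corollary~\ref{basic2}). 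Your approach buys something real: the right-hand side of your identity is manifestly independent of $k$ as a \emph{signed} quantity, so you obtain the signed equality that the statement actually asserts, whereas the paper's argument as written only concludes that the two absolute values agree and leaves the sign agreement implicit (it does follow, since the sign is determined by ${\mathcal R}_i[r]-{\mathcal R}_i[s]$, which does not depend on $k$, but the paper never says so). Your identity also subsumes Corollaries~\ref{equivalentTh1} and~\ref{1111} as immediate special cases, as you observe. One small point to tighten: at levels $n-2$ and $n-1$ the filtering in Points~\ref{rtd5} and~\ref{rtd6} of Definition~\ref{Rtree_def} removes one member of each sibling pair at the child level, so the common-child step should be phrased in terms of whichever of $2r \bmod 8$ and $(2r+1) \bmod 8$ actually survives; since siblings carry equal weight by definition, this does not affect the identity.
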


\begin{proof}
Consider for $k_1 \neq k_2$, $W_{k_1}(i, r) - W_{k_1}(i, s)$ $\neq$ $W_{k_2}(i, r) - W_{k_2}(i, s)$. According to Corollary~\ref{1111}, $|W_{k_1}(i, r) - W_{k_1}(i, s)|$ = 0 or 1. So, if $|W_{k_1}(i, r) - W_{k_1}(i, s)|$ = 0 then $|W_{k_2}(i, r) - W_{k_2}(i, s)|$ = 1. Now according to Corollary~\ref{equivalentTh1}, ${\mathcal R}_i[r]$ = ${\mathcal R}_i[s]$ when $|W_{k_1}(i, r) - W_{k_1}(i, s)|$ = 0 and according to Corollary~\ref{1110}, ${\mathcal R}_i[r]$ $\neq$ ${\mathcal R}_i[s]$ when $|W_{k_2}(i, r) - W_{k_2}(i, s)|$ = 1. Hence, a contradiction. So, $|W_{k_1}(i, r) - W_{k_1}(i, s)|$ and $|W_{k_2}(i, r) - W_{k_2}(i, s)|$ are same. Hence proved.
\end{proof}

Next we define {\em sub-nodes}, the irrelevant nodes in deciding NCCA.

\begin{definition}
A node $N_{i.p}$ is \textit{sub-node} of another node $N_{i.q}$ if $\Gamma_k^{N_{i.p}} \subseteq \Gamma_k^{N_{i.q}}$ for each $k \in \{0, 1, 2, 3\}$. We write $N_{i.p} \subseteq N_{i.q}$.
\end{definition}

In Fig.~\ref{NPCAweight}, $N_{3.0}$ is the sub-node of $N_{3.2}$ ($N_{3.0}\subseteq N_{3.2}$). Similarly, $N_{3.1}\subseteq N_{3.3}$ and $N_{3.7}\subseteq N_{3.3}$. 

It is obvious from Corollary~\ref{basic1} that an RMT $r \in \Gamma_k^{N_{i.p}}$ has the same weight as it is in $\Gamma_k^{N_{i.q}}$ for any $k \in \{0, 1, 2, 3\}$, when $N_{i.p} \subseteq N_{i.q}$. This implies, if the weights of RMTs at leaves of the sub-tree rooted at ${N_{i.q}}$ are 0, then the weights of RMTs at leaves of the sub-tree rooted at $N_{i.p}$ are also 0. Hence, sub-nodes have no additional role in deciding NCCA, and the nodes excluding their sub-nodes can successfully decide whether the CA is NCCA. During the construction of the tree, therefore, if a sub-tree of a node is found, we can omit the sub-node. This saves time and space in constructing the reachability tree for deciding NCCA.

\section{Analysis of NCCAs}
\label{property}

Based on the theories developed in previous section, we can easily decide whether a given CA of size $n$ is an NCCA or not. Following are the steps of such a decision procedure.

\begin{enumerate}
\item Form the root of reachability tree. Assign weights to RMTs at root as 0.\\\vspace{-6mm}
\item Get the nodes of the next level and find weights of RMTs at each node.\\\vspace{-6mm}
\item If Theorem~\ref{basic1} or Corollary~\ref{basic2} is disobeyed, output `No' and stop.\\\vspace{-6mm}
\item Remove the sub-nodes, if any, and proceed to the next level with only remaining nodes.\\\vspace{-6mm}
\item Repeat steps 2 to 4 until the leaves are received.\\\vspace{-6mm}
\item If there is any RMT at (leaf) node with non-zero weight, output `No'; otherwise, output `Yes'.
\end{enumerate}

The above procedure can successfully decide a finite CA as NCCA. In this procedure, reachability tree for the given CA is constructed. The number of nodes in a reachability tree grows exponentially. However, the scenario improves if we get a good number of sub-nodes in a level. Practically, number of nodes excluding their sub-nodes at any level is limited, and the upper limit is $4^4 = 256$. Hence, the tree does not grow exponentially in the procedure, which implies it is efficient.

The decision procedure can further be improved if we find the weights of individual RMTs at nodes. In the theories of previous section, we have used weights, but have not found out possible weights of RMTs. If we can get possible weights of an RMT $r \in \Gamma_k^{N_{i.j}}$ and if we observe that $W_k(i, r)$ is not one of the possible weights, we will be able to conclude that the CA is not an NCCA. In this case, we need not to wait upto last level to get negative answer.

\subsection{Weights of RMTs}
\label{weight_RMT}

To find the possible weights of RMTs of rules that participate in NCCA, Lemma~\ref{universal} is instrumental. For each rule ${\mathcal R}_{i}$ of an NCCA, ${\mathcal R}_{i}[0]$ = 0 and ${\mathcal R}_{i}[7]$ = 1. Obviously, $W_0(i, 0) = W_3(i, 7)$ = 0 for each $i \in \{0, 1, \cdots, n\}$. Since sibling RMTs have same weights (see definition of weight in Section~\ref{RTncca}) in any node, so $W_0(i, 1) = W_3(i, 6)$ = 0.

Using these weights, we will be able to find the possible weights of other RMTs. Please observe that all of the 8 RMTs can be present in $\Gamma_k^{N_{i.j}}$ when $i \geq 2$. From the definition of weight, we understand that the weights of RMTs of different levels are related. For example, $W_k(i+1, 0)$ = $W_k(i, 4)$ -1 if ${\mathcal R}_{i}[4]$ = 1, and $W_k(i+1, 0)$ = $W_k(i, 4)$ if ${\mathcal R}_{i}[4]$ = 0. Now, since $W_0(i+1, 0)$ = 0, the weight of RMT 4 $\in \Gamma_0^{N_{i.j}}$ has to be either 0 or 1. Similarly, the weights of RMTs 2, 3 $\in \Gamma_0^{N_{i.j}}$ can either be -1 or 0 because $W_0(i-1, 1)$ = 0 and $W_0(i, 2)$ = $W_0(i-1, 1)$-1 if ${\mathcal R}_{i-1}$[1] = 1 but $W_0(i, 2)$ = $W_0(i-1, 1)$ if ${\mathcal R}_{i-1}$[1] = 0. Now, since possible weights of RMT 3 is -1 or 0, possible weights for RMTs 6 and 7 are -1, 0 and 1. Hence we get possible weights of all RMTs that belong to $\Gamma_0^{N_{i.j}}$, considering $W_0(i, 0) = W_0(i, 1)$ = 0. One can similarly find the possible weights of all RMTs that belong to $\Gamma_3^{N_{i.j}}$ considering $W_3(i, 6) = W_3(i, 7)$ = 0. These weights are noted in Tab.~\ref{possible_weight}.

We can also find the weights of RMTs that are in $\Gamma_1^{N_{i.j}}$ and  $\Gamma_2^{N_{i.j}}$. In case of $\Gamma_1^{N_{i.j}}$, only RMTs 2 and 3 exists when $i = n$. So, for NCCA, $W_1(n, 2) = W_1(n, 3)$ = 0. Again, $W_1(0, 2) = W_1(0, 3)$ = 0. Hence, $W_1(1, 4)$ can be either 0 or 1 depending on ${\mathcal R}_{0}$. Now, if 
$W_1(1, 4)$ = 0 and ${\mathcal R}_{1}$[4] = 0, then $W_1(2, 0) = W_1(2, 1)$ = 0; and if $W_1(1, 4)$ = 0 but ${\mathcal R}_{1}$[4] = 1, then $W_1(2, 0) = W_1(2, 1)$ = -1. On the other hand, if $W_1(1, 4)$ = 1 then weights of RMTs 0, 1 $\in \Gamma_1^{N_{2.j}}$ can be 0 or 1. From the above logic, we get the possible weights of RMTs 0, 1 as -1, 0 and 1. However, once the RMT 0 gets a weight, that weight is carried forward to the lower levels, because in NCCA ${\mathcal R}_{i}[0]$ = 0 for any rule ${\mathcal R}_{i}$. Hence, $W_1(n-1, 0) = W_1(n-1, 1)$ = -1 when $W_1(2, 0)$ = -1. If this happens, then $W_1(n, 2)$ and $W_1(n, 3)$ can never be 0 for any rule ${\mathcal R}_{n-1}$. This implies, -1 can not be a possible weight for RMTs 0, 1 $\in \Gamma_1^{N_{i.j}}$ for any $i$. Hence, possible weights of RMTs 0, 1 $\in \Gamma_1^{N_{i.j}}$ are 0 and 1. With the similar logic we can get that possible weights of RMTs 6, 7 $\in \Gamma_1^{N_{i.j}}$ are 0 and 1. 

We can further get the possible weights of RMTs 2, 3 $\in \Gamma_1^{N_{i.j}}$ apart from weight 0. The possible weights of RMTs 0, 1 $\in \Gamma_1^{N_{i.j}}$ are 0 and 1. Now, $W_1(i, 2)$ (or $W_1(i, 3)$) and $W_1(i-1, 1)$ are related. Depending on ${\mathcal R}_{i-1}$, weights of RMTs 2, 3 $\in \Gamma_1^{N_{i.j}}$ can be -1, 0 and 1 (see the definition of weights in Section~\ref{RTncca}). The possible weights of RMTs 6, 7 $\in \Gamma_1^{N_{i.j}}$ are 0 and 1.  On the other hand,  $W_1(i, 4)$/$W_1(i, 5)$ is dependent on $W_1(i-1, 6)$ and ${\mathcal R}_{i-1}$. Hence, possible weights of RMTs 4, 5 $\in \Gamma_1^{N_{i.j}}$ are 0, 1 and 2. Hence, we get possible weights of all the RMTs that may exist in $\Gamma_1^{N_{i.j}}$. Following above structures of argument, one can also find the possible weights of RMTs that may exist in $\Gamma_2^{N_{i.j}}$.

Tab.~\ref{possible_weight} notes the possible weights of all possible RMTs of a node. First column shows the RMTs of a rule, whereas next four columns state the possible weights of RMTs that are in four consecutive sets of a node.

\begin{table}
\caption{Possible Weights of RMTs}
\begin{center}
\label{possible_weight}
\begin{tabular}{|c|c|c|c|c|}
\hline 
\multirow{2}{*}{RMTs of ${\mathcal R}_{i}$} & \multicolumn{4}{c|}{Possible Weights of RMTs in} \\ \cline{2-5}
 & $\Gamma_0^{N_{i.j}}$ & $\Gamma_1^{N_{i.j}}$ & $\Gamma_2^{N_{i.j}}$ & $\Gamma_3^{N_{i.j}}$ \\ 
\hline 
0 & 0 & 0, 1 & -1, 0 & -1, 0, 1 \\ 
\hline 
1 & 0 & 0, 1 & -1, 0 & -1, 0, 1 \\ 
\hline 
2 & -1, 0 & -1, 0, 1 & -2, -1, 0 & -1, 0 \\ 
\hline 
3 & -1, 0 & -1, 0, 1 & -2, -1, 0 & -1, 0 \\ 
\hline 
4 & 0, 1 & 0, 1, 2 & -1, 0, 1 & 0, 1  \\ 
\hline
5 & 0, 1 & 0, 1, 2 & -1, 0, 1 & 0, 1  \\  
\hline
6 & -1, 0, 1 & 0, 1 & -1, 0 & 0 \\ 
\hline
7 & -1, 0, 1 & 0, 1 & -1, 0 & 0 \\ 
\hline
\end{tabular} 
\end{center}
\end{table}

As mentioned before, possible weights of RMTs can improve our decision algorithm. As soon as we get a weight of an RMT which is not consistent with Tab.~\ref{possible_weight}, we declare the CA as not an NCCA. However, we can get little more improvement. Observe again that, weight of an RMT $r$ (i.e., $W_k(i+1, r)$) depends on weight of another RMT $s$ ($W_k(i, s)$) and ${\mathcal R}_{i}[s]$ where $r$ = $2s$ or $2s+1 \pmod{8}$. Now, when $W_k(i, s)$ is consistent with Tab.~\ref{possible_weight}, then by observing ${\mathcal R}_{i}[s]$ one can declare whether $W_k(i+1, r)$ is going to be consistent or not. If not consistent, we can declare the CA as not an NCCA. It help us decide negatively before observing an inconsistent weight. In case of NCCA, the weights are always consistent. Following are the values of an RMT $r \in \Gamma_k^{N_{i.j}}$ (that is, ${\mathcal R}_{i}[r]$) when $W_k(i, r)$ attains a specific value.\\

\textbf{For $\Gamma_0^{N_{i.j}}$:}
\begin{enumerate}
\item If $W_0(i, 2)$ = -1 then ${\mathcal R}_{i}[2]$ = 0.
\item If $W_0(i, 4)$ = 0 then ${\mathcal R}_{i}[4]$ = 0. If $W_0(i, 4)$ = 1 then ${\mathcal R}_{i}[4]$ = 1.
\item If $W_0(i, 5)$ = 1 then ${\mathcal R}_{i}[5]$ = 1.
\item If $W_0(i, 6)$ = -1 then ${\mathcal R}_{i}[6]$ = 0. If $W_0(i, 6)$ = 1 then ${\mathcal R}_{i}[6]$ = 1.
\end{enumerate}

\textbf{For $\Gamma_1^{N_{i.j}}$:}

\begin{enumerate}[resume]
\item If $W_1(i, 2)$ = -1 then ${\mathcal R}_{i}[2]$ = 0.
\item If $W_1(i, 3)$ = -1 then ${\mathcal R}_{i}[3]$ = 0. If $W_1(i, 3)$ = 1 then ${\mathcal R}_{i}[3]$ = 1.
\item If $W_1(i, 4)$ = 0 then ${\mathcal R}_{i}[4]$ = 0. If $W_1(i, 4)$ = 2 then ${\mathcal R}_{i}[4]$ = 1.
\item If $W_1(i, 5)$ = 2 then ${\mathcal R}_{i}[5]$ = 1.
\end{enumerate}

\textbf{For $\Gamma_2^{N_{i.j}}$:}

\begin{enumerate}[resume]
\item If $W_2(i, 2)$ = -2 then ${\mathcal R}_{i}[2]$ = 0.
\item If $W_2(i, 3)$ = -2 then ${\mathcal R}_{i}[3]$ = 0. If $W_2(i, 3)$ = 0 then ${\mathcal R}_{i}[3]$ = 1.
\item If $W_2(i, 4)$ = -1 then ${\mathcal R}_{i}[4]$ = 0. If $W_2(i, 4)$ = 1 then ${\mathcal R}_{i}[4]$ = 1.
\item If $W_2(i, 5)$ = 1 then ${\mathcal R}_{i}[5]$ = 1.
\end{enumerate}

\textbf{For $\Gamma_3^{N_{i.j}}$:}

\begin{enumerate}[resume]
\item If $W_3(i, 1)$ = -1 then ${\mathcal R}_{i}[1]$ = 0. If $W_3(i, 1)$ = 1 then ${\mathcal R}_{i}[1]$ = 1.
\item If $W_3(i, 2)$ = -1 then ${\mathcal R}_{i}[2]$ = 0.
\item If $W_3(i, 3)$ = -1 then ${\mathcal R}_{i}[3]$ = 0. If $W_3(i, 3)$ = 0 then ${\mathcal R}_{i}[3]$ = 1.
\item If $W_3(i, 5)$ = 1 then ${\mathcal R}_{i}[5]$ = 1.
\end{enumerate}

In our proposed decision algorithm, we use these conditions to decide an NCCA. Apart from the above relations, however, we can also derive relations between weights of different RMTs, exploring the results of previous section. Following are two obvious results. These are also needed in the decision algorithm.

\begin{corollary}
\label{coro_40} 
For any level $i$, and any $k \in \{0, 1, 2, 3\}$: 
\begin{enumerate}
\vspace{-1mm}
\item $W_k(i, 4)$ $\geq$ $W_k(i, 0)$. If $W_k(i, 4)$ = $W_k(i, 0)$ then ${\mathcal R}_i[4]$ = 0. If $W_k(i, 4)$ $>$ $W_k(i, 0)$ then ${\mathcal R}_i[4]$ = 1. \vspace{-2mm}
\item $W_k(i, 5)$ $\geq$ $W_k(i, 1)$. If $W_k(i, 5)$ $>$ $W_k(i, 1)$ then ${\mathcal R}_i[5]$ = 1 and ${\mathcal R}_i[1]$ = 0. 
\end{enumerate} 
\end{corollary}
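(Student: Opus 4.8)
The plan is to prove both inequalities by pushing the weights one level down the reachability tree and invoking the uniqueness of weights guaranteed by Theorem~\ref{basic1}; the two parts are then linked through the fact that sibling RMTs automatically carry equal weights.

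First I would handle part~(1). Since RMTs $0$ and $4$ are equivalent ($2\cdot 0 \equiv 2\cdot 4 \equiv 0 \pmod{8}$), both contribute RMT $0$ of $\mathcal{R}_{i+1}$ to level $i+1$ (Point~\ref{rtd4} of Definition~\ref{Rtree_def}), and by Corollary~\ref{basic2} the quantities $W_k(i,0)$ and $W_k(i,4)$ are well defined regardless of the node. Computing $W_k(i+1,0)$ along the two contributions: the propagation from RMT $0$ gives $W_k(i+1,0)=W_k(i,0)$ because $\mathcal{R}_i[0]=0$ by Lemma~\ref{universal}, while the propagation from RMT $4$ gives $W_k(i+1,0)=W_k(i,4)-\mathcal{R}_i[4]$ (the decrement occurring exactly when $\mathcal{R}_i[4]=1$, as $4\in\{0,1,4,5\}$). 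Theorem~\ref{basic1} forces these to coincide, so $W_k(i,4)=W_k(i,0)+\mathcal{R}_i[4]$. As $\mathcal{R}_i[4]\in\{0,1\}$, this yields $W_k(i,4)\ge W_k(i,0)$, with equality exactly when $\mathcal{R}_i[4]=0$ and strict inequality exactly when $\mathcal{R}_i[4]=1$.

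For part~(2) I would run the same computation on the equivalent pair $1,5$, both of which feed RMT $2$ of $\mathcal{R}_{i+1}$; uniqueness of $W_k(i+1,2)$ then gives $W_k(i,5)-W_k(i,1)=\mathcal{R}_i[5]-\mathcal{R}_i[1]$, so the difference lies in $\{-1,0,1\}$. The hard part is pinning down its sign: this relation alone does not exclude $W_k(i,5)<W_k(i,1)$. I would close this gap not by a fresh calculation but by feeding part~(1) through the sibling identities $W_k(i,5)=W_k(i,4)$ and $W_k(i,1)=W_k(i,0)$, which are immediate from the definition of the weight (it sets $W_k(i,2r)=W_k(i,2r+1)$ for the sibling pairs $(0,1)$ and $(4,5)$). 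Then $W_k(i,5)=W_k(i,4)\ge W_k(i,0)=W_k(i,1)$, giving $W_k(i,5)\ge W_k(i,1)$; and if the inequality is strict the difference equals $1$, whence $\mathcal{R}_i[5]-\mathcal{R}_i[1]=1$ forces $\mathcal{R}_i[5]=1$ and $\mathcal{R}_i[1]=0$. Everything beyond this sign argument is a routine application of Theorem~\ref{basic1} and Lemma~\ref{universal}.
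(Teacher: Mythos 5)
Your proof is correct and follows essentially the same route as the paper: part (1) comes from propagating the weights of the equivalent RMTs $0$ and $4$ into RMT $0$ of level $i+1$, using ${\mathcal R}_i[0]=0$ (Lemma~\ref{universal}) and uniqueness of weights (Theorem~\ref{basic1}), and part (2) is reduced to part (1) via the sibling identities $W_k(i,1)=W_k(i,0)$ and $W_k(i,5)=W_k(i,4)$. Your explicit computation $W_k(i,5)-W_k(i,1)={\mathcal R}_i[5]-{\mathcal R}_i[1]$ merely spells out a step the paper leaves implicit in its Case~2.
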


\begin{proof}
{\em Case 1:} From the definition of weight, we get $W_k(i+1, 0)$ is either $W_k(i, 4)$ - 1 if ${\mathcal R}_i[4]$ = 1, or $W_k(i, 4)$ if ${\mathcal R}_i[4]$ = 0. Further, $W_k(i+1, 0)$ = $W_k(i, 0)$ as ${\mathcal R}_i[0]$ = 0 for an NCCA. Hence, $W_k(i, 4)$ $\geq$ $W_k(i, 0)$ for an NCCA. And, when $W_k(i, 4)$ = $W_k(i, 0)$
then ${\mathcal R}_i[4]$ = 0, and when $W_k(i, 4)$ $>$ $W_k(i, 0)$ then ${\mathcal R}_i[4]$ = 1.

{\em Case 2:} As RMT 1 and RMT 5 are the sibling of RMT 0 and RMT 4 respectively, then at any level $i$, 
$W_k(i, 5)$ $\geq$ $W_k(i, 1)$ for each $k$. Further, when $W_k(i, 5)$ $>$ $W_k(i, 1)$ then ${\mathcal R}_i[5]$ = 1 and ${\mathcal R}_i[1]$ = 0.
\end{proof}

\begin{corollary} 
\label{coro_37}
For any level $i$, and any $k \in \{0, 1, 2, 3\}$: 
\begin{enumerate}
\vspace{-1mm}
\item $W_k(i, 3)$ $\leq$ $W_k(i, 7)$. If $W_k(i, 3)$ = $W_k(i, 7)$ then ${\mathcal R}_i[3]$ = 1. If $W_k(i, 3)$ $<$ $W_k(i, 7)$ then ${\mathcal R}_i[3]$ = 0. \vspace{-2mm}
\item $W_k(i, 2)$ $\leq$ $W_k(i, 6)$. If $W_k(i, 2)$ $<$ $W_k(i, 6)$ the ${\mathcal R}_i[2]$ = 0 and ${\mathcal R}_i[6]$ = 1. 
\end{enumerate}
\end{corollary}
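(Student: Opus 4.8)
The plan is to mirror the proof of Corollary~\ref{coro_40}, with the roles reversed: there the asymmetry came from RMT $0$ being pinned to next-state value $0$ by Lemma~\ref{universal}; here the asymmetry will come from RMT $7$ being pinned to next-state value $1$ by the same lemma. I would first establish the statement for the equivalent pair $3,7$ (Case~1), and then deduce the statement for the equivalent pair $2,6$ (Case~2) from it using the sibling identities $W_k(i,2)=W_k(i,3)$ and $W_k(i,6)=W_k(i,7)$.

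For Case~1 I would compute $W_k(i+1,7)$ in two ways, using that RMT $7$ at level $i+1$ is produced as $2r+1\pmod 8$ by both of its equivalent parents $r=3$ and $r=7$ at level $i$ (Point~\ref{rtd4} of Definition~\ref{Rtree_def}). From the parent $r=7$, since $\mathcal{R}_i[7]=1$ by Lemma~\ref{universal}, the weight definition takes its ``otherwise'' branch, giving $W_k(i+1,7)=W_k(i,7)$. From the parent $r=3$, the same definition gives $W_k(i,3)+1$ when $\mathcal{R}_i[3]=0$ and $W_k(i,3)$ when $\mathcal{R}_i[3]=1$. Theorem~\ref{basic1} forces the two computations to agree, so $W_k(i,7)=W_k(i,3)+1$ if $\mathcal{R}_i[3]=0$ and $W_k(i,7)=W_k(i,3)$ if $\mathcal{R}_i[3]=1$; this gives $W_k(i,3)\le W_k(i,7)$ together with both stated equivalences. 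The inequality of Case~2 is then immediate: since RMTs $2,3$ are siblings and RMTs $6,7$ are siblings, the weight definition yields $W_k(i,2)=W_k(i,3)$ and $W_k(i,6)=W_k(i,7)$, so $W_k(i,2)=W_k(i,3)\le W_k(i,7)=W_k(i,6)$.

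For the remaining implication of Case~2 I would argue through the shared child RMT $4$, which at level $i+1$ is produced as $2r\pmod 8$ by both equivalent parents $r=2$ and $r=6$. The weight definition gives $W_k(i+1,4)\in\{W_k(i,2),\,W_k(i,2)+1\}$ from the parent $r=2$ (the larger value precisely when $\mathcal{R}_i[2]=0$) and $W_k(i+1,4)\in\{W_k(i,6),\,W_k(i,6)+1\}$ from the parent $r=6$ (the larger value precisely when $\mathcal{R}_i[6]=0$). Assuming $W_k(i,2)<W_k(i,6)$, Corollary~\ref{1111} forces $W_k(i,6)=W_k(i,2)+1$, and then Theorem~\ref{basic1} (the two computations of $W_k(i+1,4)$ must coincide) leaves the single consistent value $W_k(i+1,4)=W_k(i,2)+1=W_k(i,6)$, which is attained only when $\mathcal{R}_i[2]=0$ and $\mathcal{R}_i[6]=1$.

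I expect the delicate point to be this last implication. Unlike Case~1, neither RMT $2$ nor RMT $6$ has its next-state value fixed by Lemma~\ref{universal}, so the conclusion cannot be read off a single parent; the work is done by combining the uniqueness of the weight of the common child RMT $4$ (Theorem~\ref{basic1}) with the a-priori gap bound $|W_k(i,2)-W_k(i,6)|\le 1$ from Corollary~\ref{1111}. Together these rule out every assignment of $(\mathcal{R}_i[2],\mathcal{R}_i[6])$ except $(0,1)$, since any other choice either produces equal weights or a mismatch at RMT $4$. As in Corollary~\ref{coro_40}, I would treat each $W_k(i,r)$ as a well-defined global quantity (justified by Corollary~\ref{basic2}), so that these relations among weights hold independently of which particular node is examined.
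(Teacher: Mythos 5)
Your proposal is correct and follows essentially the same route as the paper: Case 1 is obtained by computing $W_k(i+1,7)$ from both equivalent parents $3$ and $7$ using ${\mathcal R}_i[7]=1$ (Lemma~\ref{universal}) and the uniqueness of weights (Theorem~\ref{basic1}), and the inequality in Case 2 follows from the sibling identities $W_k(i,2)=W_k(i,3)$ and $W_k(i,6)=W_k(i,7)$. The only difference is that for the final implication of Case 2 the paper merely asserts the conclusion, whereas you actually justify it via the common child RMT $4$ together with Corollary~\ref{1111} and Theorem~\ref{basic1}; your argument there is valid and supplies a detail the paper leaves implicit.
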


\begin{proof}
{\em Case 1:} According to the definition of weight, $W_k(i+1, 7)$ = $W_k(i, 3)$ if ${\mathcal R}_i[3]$ = 1; $W_k(i+1, 7)$ = $W_k(i, 3)$ + 1 if ${\mathcal R}_i[3]$ = 0. Also, $W_k(i+1, 7)$ = $W_k(i, 7)$ as ${\mathcal R}_i[7]$ = 1 for an NCCA. Hence, $W_k(i, 3)$ $\leq$ $W_k(i, 7)$. And, when $W_k(i, 3)$ $<$ $W_k(i, 7)$, ${\mathcal R}_i[3]$ = 0; otherwise ${\mathcal R}_i[3]$ = 1.

{\em Case 2:} As RMT 2 and RMT 6 are the sibling of RMT 3 and RMT 7 respectively, then at any level $i$, 
$W_k(i, 2)$ $\leq$ $W_k(i, 6)$ for each $k$. Further, ${\mathcal R}_i[2]$ = 0 and ${\mathcal R}_i[6]$ = 1 when $W_k(i, 2)$ $<$ $W_k(i, 6)$.
\end{proof}

From all the above discussions, it is clear that an arbitrary rule cannot take part in a rule vector of NCCA. For example, if next state value of RMTs 0  and 7 of a rule are not 0 and 1 respectively, the rule can not participate in an NCCA (Lemma~\ref{universal}). Based on this, we classify the rules as {\em number conserving} and {\em non-number conserving} rules. Next we identify the number conserving rules.

\subsection{Number conserving rules}
\label{NC_rule}

\begin{definition}
A rule is \textbf{non-number conserving rule} if its presence in a rule vector makes the CA non-NCCA. Otherwise, it is a \textbf{number conserving rule}.
\end{definition}

\begin{example}
The 5-cell CA with rule vector $\langle 170, 240, 238, 192, 204 \rangle$ is a NCCA. Therefore, all of the five rules are number conserving rules. On the other hand, a CA with rule vector $\langle 170, 240, 239, 192, 204 \rangle$ is a non-NCCA. The rule 239 makes the CA non-NCCA. So, rule 239 is a {\em non-number conserving rule}.
\end{example}

\begin{theorem}
\label{rule_select}
A rule ${R}$ is number conserving rule if the following conditions are satisfied--

\begin{enumerate}
\item \label{rule_select_c1} ${R}[0]$ = 0 and ${R}[7]$ = 1.
\item \label{rule_select_c2} ${R}[0]$ = ${R}[4]$ and ${R}[1]$ = ${R}[5]$, \textbf{or} ${R}[0]$ = ${R}[1]$ and ${R}[4]$ = ${R}[5]$.
\item \label{rule_select_c3} ${R}[2]$ = ${R}[6]$ and ${R}[3]$ = ${R}[7]$, \textbf{or} ${R}[2]$ = ${R}[3]$ and ${R}[6]$ = ${R}[7]$.
\end{enumerate}
\end{theorem}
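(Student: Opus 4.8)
The plan is to prove sufficiency by showing that a rule $R$ obeying conditions~\ref{rule_select_c1}--\ref{rule_select_c3} can be weighted so that it never obstructs the bookkeeping demanded by Theorem~\ref{basic1}, and then to realise this compatibility as an actual NCCA containing $R$. Condition~\ref{rule_select_c1} needs no work: it is precisely Lemma~\ref{universal}, and the whole derivation of Tab.~\ref{possible_weight} in Section~\ref{weight_RMT} already rests on $R[0]=0$ and $R[7]=1$. The substance is to read conditions~\ref{rule_select_c2} and~\ref{rule_select_c3} through the weights.

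The key step is a structural reading. Grouping the eight RMTs by the self-bit, $\{0,1,4,5\}$ are the self-$0$ RMTs and $\{2,3,6,7\}$ the self-$1$ RMTs. The first alternative of condition~\ref{rule_select_c2} makes the equivalent pairs $Equi_0$ and $Equi_1$ agree (for a self-$0$ cell the output ignores the left neighbour), while the second alternative makes the sibling pairs $Sibl_0$ and $Sibl_2$ agree (it ignores the right neighbour). Condition~\ref{rule_select_c3} is the identical dichotomy for the self-$1$ RMTs through $Equi_2,Equi_3$ and $Sibl_1,Sibl_3$. Thus each condition asserts that, separately for self-state $0$ and self-state $1$, the cell is insensitive to one of its two neighbours, which is the local signature of a single-direction flow of $1$s.

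I would then feed this into the weight recurrence of Section~\ref{RTncca} and check the four resulting combinations. Two facts carry the argument: sibling RMTs always share a weight, and equivalent RMTs differ in weight by at most $1$ (Corollary~\ref{1111}). Under an ``equivalent'' alternative the two equivalent parents output the same value, so by Corollary~\ref{equivalentTh1} they carry equal weight and hand their shared sibling-children a single value, honouring Theorem~\ref{basic1}. Under a ``sibling'' alternative the equivalent parents may output different values, but then the unit weight gap exactly cancels the unit increment of the recurrence, again yielding a unique child weight; Corollaries~\ref{coro_40}, \ref{coro_37} and~\ref{1110} record precisely these cancellations, and a pass through Tab.~\ref{possible_weight} confirms the weights never leave the tabulated ranges. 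By contrast, a rule satisfying neither alternative of condition~\ref{rule_select_c2} (an XOR-like self-$0$ action) forces some equivalent pair to feed two incompatible weights into a common child, breaking Theorem~\ref{basic1}; so the conditions are exactly what removes the immediate local obstruction.

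The step I expect to be the real obstacle is upgrading this \emph{local} consistency to a genuine NCCA that contains $R$, which is what being a number conserving rule ultimately requires. The uniform vector $\langle R,\dots,R\rangle$ will not serve in general: rule $136$ satisfies all three conditions yet computes $S_i^{t+1}=S_i^t\wedge S_{i+1}^t$, whose uniform CA loses $1$s, and even $\langle R,204,\dots,204\rangle$ fails. One must therefore supply a genuinely non-uniform companion vector, its remaining rules again drawn from those permitted by conditions~\ref{rule_select_c1}--\ref{rule_select_c3}, chosen so that the weights accumulated along every root-to-leaf path of the reachability tree return to $0$ at the leaf and the periodic cycle closes, as in $\langle 136,252,238,192\rangle$. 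Producing and verifying such a closing vector for every admissible $R$ by the leaf-weight test is the delicate part, and it is exactly what the synthesis scheme of Section~\ref{Synthesis_ncca} is designed to deliver.
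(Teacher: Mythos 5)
Your proposal uses the same weight machinery as the paper but attacks the theorem in the opposite logical direction, and the comparison is instructive. The paper's proof is a proof by contradiction of the \emph{necessity} of conditions~\ref{rule_select_c2} and~\ref{rule_select_c3}: it places $R$ at position $i\geq 2$ in a vector of length $n\geq 5$ (so that all eight RMTs occur in the sets $\Gamma_k^{N_{i.j}}$), supposes a condition fails, and works through the cases (${\mathcal R}_i[0]={\mathcal R}_i[4]$ but ${\mathcal R}_i[1]\neq{\mathcal R}_i[5]$; ${\mathcal R}_i[0]={\mathcal R}_i[1]$ but ${\mathcal R}_i[4]\neq{\mathcal R}_i[5]$; all four distinct so that ${\mathcal R}_i[0]={\mathcal R}_i[5]$ and ${\mathcal R}_i[1]={\mathcal R}_i[4]$), each time combining the sibling-weight equality with Corollaries~\ref{equivalentTh1}, \ref{1110} and~\ref{coro_40} to contradict Theorem~\ref{basic1}. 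That is exactly the mechanism you compress into your one sentence about an XOR-like self-$0$ action, so your necessity sketch is faithful to the paper but would need this explicit case split to be complete; your cancellation argument for the ``sibling'' alternative ($W_k(i,4)=W_k(i,0)+1$ offset by the $-1$ of the recurrence) is correct and is essentially Corollary~\ref{1111} in action. Where you genuinely diverge is in reading the statement at face value as a sufficiency claim: under the paper's definition of a number conserving rule, sufficiency requires \emph{exhibiting} an NCCA containing $R$, which local weight consistency alone cannot deliver, as your rule-$136$ example correctly shows. Deferring that existence step to the synthesis scheme of Section~\ref{Synthesis_ncca} leaves your argument formally incomplete at that point --- but the paper's own proof never closes this direction either (it proves the converse implication and lets Corollary~\ref{Th2_true}, the listed nine rules and the worked examples carry the sufficiency burden), so your reading exposes a real mismatch between what the theorem says and what its proof establishes. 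Your ``insensitive to one neighbour per self-state'' gloss on the conditions is a nice structural observation absent from the paper, though it does no logical work in either argument.
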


\begin{proof}
{\em Case 1:} We get the condition 1 directly from Lemma~\ref{universal}.

To prove Case~\ref{rule_select_c2} and Case~\ref{rule_select_c3}, let us consider a CA including $R$ as ${\mathcal R}_i$ -- $i^{th}$ rule of a rule vector of size $n$, where $i \geq 2$ and $n \geq 5$. Therefore, all the 8 RMTs are present in $\cup_j\Gamma_k^{N_{i.j}}$ for any value of $k \in \{0, 1, 2, 3\}$. Without loss of generality we further consider that all the properties of NCCA, discussed till now are maintained for rules upto ${\mathcal R}_{i-1}$. Here, we prove Case~\ref{rule_select_c2} and Case~\ref{rule_select_c3} by method of contradiction. That is, we consider ${\mathcal R}_i$ does not obey the conditions of Case~\ref{rule_select_c2} and Case~\ref{rule_select_c3}. Then we show that the CA can never be an NCCA. 

{\em Case 2:} First we consider that ${\mathcal R}_i[0]$ = ${\mathcal R}_i[4]$ and ${\mathcal R}_i[1] \neq {\mathcal R}_i[5]$. Now according to the Corollary~\ref{1110}, $W_k(i, 1) \neq W_k(i, 5)$. Further, RMT 5 is the sibling of RMT 4 and RMT 1 is the sibling of RMT 0. That is, $W_k(i, 4) = W_k(i, 5)$ and $W_k(i, 0) = W_k(i, 1)$ (by definition of weight). So, $W_k(i, 0) \neq W_k(i, 4)$, which is not possible when ${\mathcal R}_i[0]$ = ${\mathcal R}_i[4]$ (Corollary~\ref{equivalentTh1}). Hence, ${\mathcal R}_i$ can not be a number conserving rule when ${\mathcal R}_i[0]$ = ${\mathcal R}_i[4]$ and ${\mathcal R}_i[1] \neq {\mathcal R}_i[5]$. In this way we can also prove that ${\mathcal R}_i[1]$ = ${\mathcal R}_i[5]$ but ${\mathcal R}_i[0] \neq {\mathcal R}_i[4]$ is not possible. 

Next we consider that ${\mathcal R}_i[0]$ = ${\mathcal R}_i[1]$ and ${\mathcal R}_i[4] \neq {\mathcal R}_i[5]$. That is, if ${\mathcal R}_i[4]$ = 0 then ${\mathcal R}_i[5]$ = 1. According to the Lemma~\ref{universal}, ${\mathcal R}_i[0]$ = 0, so ${\mathcal R}_i[1]$ = 0. Now, ${\mathcal R}_i[0]$ = 0 and ${\mathcal R}_i[4]$ = 0 implies $W_k(i, 0)$ = $W_k(i, 4)$ (Corollary~\ref{equivalentTh1}). Further, RMT 1 is the sibling of RMT 0 and RMT 5 is the sibling of RMT 4, so $W_k(i, 1)$ = $W_k(i, 5)$ since $W_k(i, 4) = W_k(i, 5)$ and $W_k(i, 0) = W_k(i, 1)$. But it violates Corollary~\ref{1110} as ${\mathcal R}_i[1]$ = 0 and ${\mathcal R}_i[5]$ = 1. So, ${\mathcal R}_i[0]$ = ${\mathcal R}_i[1]$ and ${\mathcal R}_i[4] \neq {\mathcal R}_i[5]$ is not possible. One can also assume that ${\mathcal R}_i[4]$ = 1 and ${\mathcal R}_i[5]$ = 0. It can easily be shown that this is also not possible. Hence, ${\mathcal R}_i$ can not be a number conserving rule when ${\mathcal R}_i[0]$ = ${\mathcal R}_i[1]$ and ${\mathcal R}_i[4] \neq {\mathcal R}_i[5]$. Similarly, we can prove that ${\mathcal R}_i[4]$ = ${\mathcal R}_i[5]$ but ${\mathcal R}_i[0] \neq {\mathcal R}_i[1]$ is not possible.

Finally consider that ${\mathcal R}_i[0] \neq {\mathcal R}_i[4]$, ${\mathcal R}_i[1] \neq {\mathcal R}_i[5]$, ${\mathcal R}_i[0] \neq {\mathcal R}_i[1]$ and ${\mathcal R}_i[4] \neq {\mathcal R}_i[5]$. That means ${\mathcal R}_i[0]$ = ${\mathcal R}_i[5]$ and ${\mathcal R}_i[1]$ = ${\mathcal R}_i[4]$. Since ${\mathcal R}_i[0] \neq {\mathcal R}_i[4]$ and ${\mathcal R}_i[1] \neq {\mathcal R}_i[5]$, according to Corollary~\ref{1110}, $W_k(i, 0) \neq W_k(i, 4)$ and $W_k(i, 1) \neq W_k(i, 5)$. Now, according to Corollary~\ref{coro_40}, $W_k(i, 5) \geq W_k(i, 1)$. In this case, $W_k(i, 5) > W_k(i, 1)$. Now, ${\mathcal R}_i[0]$ = ${\mathcal R}_i[5]$ and ${\mathcal R}_i[0] \neq {\mathcal R}_i[1]$ means ${\mathcal R}_i[5]$ = 0 and ${\mathcal R}_i[1]$ = 1. RMT 1 and RMT 5 are equivalent RMTs, so both of them contribute same set of RMTs, but the weight of the RMTs in both the cases are different because $W_k(i, 5) > W_k(i, 1)$, ${\mathcal R}_i[5]$ = 0 and ${\mathcal R}_i[1]$ = 1. According to the Theorem~\ref{basic1}, a RMT present in $\Gamma_k^{N_{i.j}}$ with different weight is not possible. So, our consideration is false.

Combining all the above, we conclude that a number conserving rule has to obey the condition of Case~\ref{rule_select_c2}.\\
{\em Case 3:} The proof for the RMTs 2, 3, 6 and 7 is identical upto the 0/1 exchange of RMTs 0, 1, 4 and 5.\\

However, the conditions of Case~\ref{rule_select_c2} and Case~\ref{rule_select_c3} are also applicable to first rule (${\mathcal R}_0$) and second rule (${\mathcal R}_1$). Under periodic boundary condition any rule can be considered as first rule and this consideration does not alter the behavior of CA, if the sequence of the rules remains same. That is, the dynamic behavior of two CAs -- $\langle {\mathcal R}_{0}, {\mathcal R}_{1}, \cdots, {\mathcal R}_{i}, \cdots, {\mathcal R}_{n-1} \rangle$ and $\langle {\mathcal R}_{i}, {\mathcal R}_{i+1},\cdots, {\mathcal R}_{n-1}, {\mathcal R}_{0}, {\mathcal R}_{1}, \cdots, {\mathcal R}_{i-1} \rangle$ are same. Hence the proof.
\end{proof} 

There are nine number conserving rules that respect the conditions of Theorem~\ref{rule_select}. The rules are 136, 170, 184, 192, 204, 226, 238, 240, 252. These rules can form a rule vector of a NCCA of size $n$. However, for very small values of $n$, we can get some additional rules as number conserving rules. Following corollary states this fact.

\begin{corollary} 
\label{Th2_true}
The rules that obey the conditions of Theorem~\ref{rule_select} are the only number conserving rules which can form a rule vector of size $n \geq 5$.
\end{corollary}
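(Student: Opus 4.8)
The plan is to establish both directions of Corollary~\ref{Th2_true}: first that every rule satisfying Theorem~\ref{rule_select} is number conserving for any $n$, and second that \emph{no other} rule can participate in a length-$n$ NCCA when $n \geq 5$. The forward direction is immediate: Theorem~\ref{rule_select} already exhibits the nine rules $\{136, 170, 184, 192, 204, 226, 238, 240, 252\}$ as number conserving. So the real content is the converse, and I would phrase it as a completeness claim: any rule ${R}$ that violates at least one of the three conditions of Theorem~\ref{rule_select} must, when placed as some ${\mathcal R}_i$ in a rule vector with $n \geq 5$, force the CA to be non-NCCA.

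First I would argue that the proof of Theorem~\ref{rule_select} itself is essentially an ``only if'' argument in disguise, and I would make this explicit. The key hypothesis used there was that $i \geq 2$, $n \geq 5$, and consequently that \emph{all eight RMTs are present in} $\cup_j \Gamma_k^{N_{i.j}}$ for every $k \in \{0,1,2,3\}$. Under that hypothesis, the contradiction arguments (via Corollary~\ref{1110}, Corollary~\ref{equivalentTh1}, Corollary~\ref{coro_40}, and Theorem~\ref{basic1}) show that violating Case~\ref{rule_select_c2} or Case~\ref{rule_select_c3} produces an RMT carrying two distinct weights in the reachability tree, hence a nonzero weight at some leaf, hence non-conservation. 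So the strategy is to show that for $n \geq 5$ this ``all eight RMTs appear'' condition can always be realized: given any candidate rule ${R}$, one can embed it as ${\mathcal R}_i$ (with $2 \leq i \leq n-3$, which is nonempty precisely when $n \geq 5$) in a rule vector whose preceding rules have already been chosen number-conserving, so that the node sets at level $i$ do contain every RMT. The rotational-invariance remark at the end of Theorem~\ref{rule_select}'s proof then removes any apparent asymmetry for the boundary positions.

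The main obstacle, and the step I would spend the most care on, is verifying the reachability-tree \emph{saturation} claim: that for $n \geq 5$ we can guarantee $\cup_j \Gamma_k^{N_{i.j}} = \{0,1,\dots,7\}$ at the level where ${R}$ is tested. The root supplies all RMTs split across the four sibling sets, and Point~\ref{rtd4} of Definition~\ref{Rtree_def} doubles RMTs downward via $r \mapsto \{2r, 2r+1 \bmod 8\}$; I would check that after two expansion steps every set $\Gamma_k$ can be made to contain all eight RMTs (this is exactly why $i \geq 2$ was needed), and that there is enough room below position $i$ (again forcing $n \geq 5$) for the induced weight conflict to propagate to a leaf rather than being truncated by the special level $n-2$ and $n-1$ rules of Points~\ref{rtd5} and \ref{rtd6}. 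Once saturation holds, the contradiction machinery of Theorem~\ref{rule_select} applies verbatim.

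Finally I would close the argument by noting that the threshold $n \geq 5$ is sharp in the sense the corollary intends: the phrase ``for very small values of $n$'' in the preceding paragraph signals that when $n < 5$ the tree is too shallow for all eight RMTs to coexist and for a weight conflict to reach a leaf, so extra rules can sneak in as number conserving; hence the restriction $n \geq 5$ is exactly what makes the nine rules of Theorem~\ref{rule_select} an \emph{exhaustive} list. I would therefore present the corollary as the contrapositive of the completeness claim, assembling Theorem~\ref{rule_select} (for the nine rules being conserving) with the saturation lemma plus the Theorem~\ref{rule_select} contradiction argument (for no rule outside the nine being conserving), and invoke the rotational symmetry to handle every cell position uniformly.
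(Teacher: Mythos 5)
Your proposal is correct and follows essentially the same route as the paper: the key point in both is that the contradiction arguments of Theorem~\ref{rule_select} require all eight RMTs to be present in $\cup_j\Gamma_k^{N_{i.j}}$, which by the expansion rule happens only at levels $2 \le i \le n-3$ (since levels $n-2$ and $n-1$ are truncated by Points~\ref{rtd5} and \ref{rtd6}), and this interval is nonempty exactly when $n \ge 5$. Your additional remarks on leaf propagation and rotational symmetry are consistent with, and already implicit in, the paper's treatment.
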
 

\begin{proof}
In the root, $\Gamma_0^{N_{0.0}}$ = \{0, 1\}, $\Gamma_1^{N_{0.0}}$ = \{2, 3\}, $\Gamma_2^{N_{0.0}}$ = \{4, 5\} and $\Gamma_3^{N_{0.0}}$ = \{6, 7\}. In level 1, $\Gamma_k^{N_{1.0}}$ $\cup$ $\Gamma_k^{N_{1.1}}$ is either \{0, 1, 2, 3\} or \{4, 5, 6, 7\}. In level 2, however, $\cup_j\Gamma_k^{N_{2.j}}$ is \{0, 1, 2, 3, 4, 5, 6, 7\}. To prove the effectiveness of conditions of Case 2 and Case 3 of Theorem~\ref{rule_select}, we need to get all the 8 RMTs. Since $\cup_j\Gamma_k^{N_{n-2.j}}$ and $\cup_j\Gamma_k^{N_{n-1.j}}$ do not contain all the 8 RMTs (Point~\ref{rtd5} and Point~\ref{rtd6} of Definition~\ref{Rtree_def}), we can get all the RMTs at least in one level if $n-3 \geq 2$. This implies $n \geq 5$. Hence proved.
\end{proof}

For example, if $n = 4$, then we get additional 6 number conserving rules -- 160, 172, 202, 216, 228 and 250. Obviously, these rules are non-number conserving rules when $n \geq 5$.

However, an arbitrary arrangement of number conserving rules do not form a rule vector of an NCCA. As example, let us consider two rule vectors-- ${\mathcal R}$ = $\langle 192, 136, 184, 252, 204, 238 \rangle$ and ${\mathcal R}^{''}$ = $\langle 252, 204, 192,\\ 136, 184, 238 \rangle$. All the rules of rule vectors are number conserving rules. Now, ${\mathcal R}$ is an NCCA but ${\mathcal R}^{''}$ is not. This implies a specific sequence of number conserving rules forms a NCCA.

\begin{lemma}
\label{sequenceOfRule}
Only a specific sequence of number conserving rules forms a NCCA.
\end{lemma}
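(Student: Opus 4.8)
The plan is to establish Lemma~\ref{sequenceOfRule} constructively by exhibiting two rule vectors built from the \emph{same} multiset of number conserving rules, one of which is an NCCA and one of which is not. Since Theorem~\ref{rule_select} only constrains each rule \emph{individually} (through its RMT next-state values), it cannot by itself guarantee number conservation of the assembled vector; the example ${\mathcal R} = \langle 192, 136, 184, 252, 204, 238 \rangle$ versus ${\mathcal R}^{''} = \langle 252, 204, 192, 136, 184, 238 \rangle$ already stated in the text supplies exactly such a pair, differing only in the ordering of the nine admissible rules. So the skeleton of the argument is: first invoke Theorem~\ref{rule_select} (and the list of nine number conserving rules following it) to confirm every rule appearing in both vectors is a number conserving rule; then verify that ${\mathcal R}$ satisfies the full weight-consistency requirement while ${\mathcal R}^{''}$ violates it at some level.

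First I would fix the two rule vectors and, for each, build the reachability tree following Definition~\ref{Rtree_def}, attaching to every RMT in every node the weight $W_k(i,r)$ defined in Section~\ref{RTncca}. For the NCCA ${\mathcal R}$, the obligation is to check that all leaf RMTs carry weight $0$; by the decision procedure of Section~\ref{property} this certifies number conservation. For the non-NCCA ${\mathcal R}^{''}$, I would locate the earliest level $i$ at which the weight-consistency conditions break down --- concretely, a node $N_{i.j}$ in which some RMT $r \in \Gamma_k^{N_{i.j}}$ acquires a weight that is not one of the values permitted by Tab.~\ref{possible_weight}, or at which Theorem~\ref{basic1} / Corollary~\ref{basic2} is contradicted because an equivalent-RMT pair is assigned mismatched weights while sharing the same next-state bit. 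Exhibiting a single such violation suffices to conclude, via Theorem~\ref{basic1}, that some leaf RMT must carry nonzero weight, hence that ${\mathcal R}^{''}$ is not number conserving.

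The conceptual point the lemma is really making is that number conservation is a \emph{global, order-sensitive} property of the rule vector rather than a cell-local one: the weight of an RMT at level $i$ is propagated from the weights at level $i-1$ through the recurrence defining $W_k$, and that recurrence mixes in the next-state bits ${\mathcal R}_{i-1}[\cdot]$, so permuting the rules changes which weights flow into which RMTs and can destroy the consistency that Theorem~\ref{basic1} demands of an NCCA. I would phrase the conclusion as: the two vectors share an identical multiset of (individually admissible) rules yet differ in number conservation, so being number conserving depends on the \emph{sequence}, not merely the collection, of the rules; this is precisely the statement of the lemma.

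The main obstacle is purely computational rather than conceptual: I must actually carry the weight propagation down the reachability tree of ${\mathcal R}^{''}$ far enough to pin down an explicit offending node, and do so without error, since the claim hinges on one concrete counterexample. The delicate bookkeeping is keeping straight the four sets $\Gamma_0,\ldots,\Gamma_3$ at each node, applying the level-$(n-2)$ and level-$(n-1)$ restrictions (Points~\ref{rtd5} and~\ref{rtd6} of Definition~\ref{Rtree_def}) correctly, and pruning sub-nodes so the tree stays manageable; the risk is an arithmetic slip in the modular RMT map $r \mapsto \{2r, 2r+1\} \pmod 8$ or in the $\pm 1$ weight updates. Once the violating weight in ${\mathcal R}^{''}$ and the all-zero leaf weights for ${\mathcal R}$ are verified, the lemma follows immediately.
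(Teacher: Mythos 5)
Your proposal is correct and takes essentially the same route as the paper: the paper offers no formal proof of this lemma, justifying it instead by the very example pair $\langle 192, 136, 184, 252, 204, 238 \rangle$ versus $\langle 252, 204, 192, 136, 184, 238 \rangle$ given in the paragraph immediately preceding the lemma, which is exactly the counterexample you propose to verify via weight propagation. Your version is, if anything, more explicit than the paper's about what actually needs to be checked.
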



\section{The Decision Algorithm}
\label{algorithms}

In this section, we present a decision algorithm (Algorithm~\ref{analysisNCCA}), which decides whether a given rule vector is NCCA or not. To do this, the algorithm uses the theories developed in previous sections. However, the theories of previous sections lead to the following observations :\\
(i) weights of an RMT $r$ of ${\mathcal R}_{i}$ that belongs to $\Gamma_k^{N_{i.j_1}}$, $\Gamma_k^{N_{i.j_2}}$, $\cdots$ are same,\\ (ii) behavior of weight of an RMT does not depend on node number of a level, but depends on the set number in a node $N_{i.j}$ -- whether it is $\Gamma_0^{N_{i.j}}$, $\Gamma_1^{N_{i.j}}$, $\Gamma_2^{N_{i.j}}$ or $\Gamma_3^{N_{i.j}}$, and\\ (iii) final decision is taken based only on the weights of RMTs at leaves.

Therefore, what only matters is the weight of an RMT that belongs to $\Gamma_k^{N_{i.j}}$ for a given $k \in \{0, 1, 2, 3\}$.

Now consider, $\Gamma_k^i$ = $\cup_j\Gamma_k^{N_{i.j}}$. By definition, weight of an RMT $r \in \Gamma_k^0$ is 0. Then, we can get weights of all RMTs of $\Gamma_k^1$ following ${\mathcal R}_{0}$ and the definition of weight. In this way, we can get the weights of all RMTs of $\Gamma_k^i$ for any $i$ ($0 \leq i \leq n$) after knowing ${\mathcal R}_{i-1}$. During the weight calculation, if we observe that the weight of an RMT is not consistent with Tab.~\ref{possible_weight}, we conclude that the CA is not an NCCA. It can be noticed here that, $\Gamma_k^i$ contains all the 8 RMTs for all $i \geq 2$ (but $\leq n-3$).

This discussion implies that, to decide a CA as NCCA, we need not to develop all the nodes of a tree. Rather it is sufficient to work only with $\Gamma_0^i$, $\Gamma_1^i$, $\Gamma_2^i$ and $\Gamma_3^i$ for any level $i$. Here, we can think of a {\em super node} ${\mathcal N}_i$ = ($\Gamma_0^i$, $\Gamma_1^i$, $\Gamma_2^i$, $\Gamma_3^i$). Please note that any node $N_{i.j}$ of a level $i$ is sub-node of ${\mathcal N}_i$.

The proposed algorithm develops the {\em super node} of a level $i$, and finds the weights of RMTs of the node. The algorithm uses two data structures, $\Gamma_k$ -- to store the RMTs present in $\Gamma_k^i$ of level $i$, and $W_k$ -- to store the weight of RMTs in $\Gamma_k$ for each $k \in \{0, 1, 2, 3\}$. We do not use level number in the data structures, because the algorithm deals with RMTs of only one level and their weights. The algorithm performs two tasks -- (i) finding of weights of RMTs in a super node of a level, and (ii) verifying whether the weights are consistent with Tab~\ref{possible_weight}. If weight of any RMT is inconsistent, the algorithm stops with negative answer, otherwise it continues in finding of weights of RMTs and verification of inconsistency of weights, if any.

To find the weights of RMTs, we develop a procedure, named {\em FindNextWeight()}. As argument, the procedure takes a rule ${R}$, a set of RMTs $\Gamma$ (= $\Gamma_k^i$) and $W$ (that is, $W_k(i, r)$ for each $r \in \Gamma_k^i$). It uses two temporary data structures -- $Temp\Gamma$ and $TempW$. The procedure finds the RMTs of $\Gamma_k^{i+1}$ and weights of RMTs. These new set of RMTs and their weights are assigned to $\Gamma$ and $W$, respectively. 

\begin{procedure}[hbtp]
	\BlankLine
	
	\Begin{
				Temp$\Gamma \leftarrow \emptyset$ \;
				\ForEach {RMT $r \in \Gamma$}
				{ 
					\If{$r \in \{0, 1, 4, 5\}$}
					{
						\If{${\mathcal R}[r]$=1}
						{
							Temp$W[2r \pmod {8}]$ $\leftarrow$ $W[r] - 1$ \;
							Temp$W[(2r+1) \pmod {8}]$ $\leftarrow$ $W[r] - 1$ \;		
						}
						\Else
						{
							Temp$W[2r \pmod {8}]$ $\leftarrow$ $W[r]$ \;
							Temp$W[(2r+1) \pmod {8}]$ $\leftarrow$ $W[r]$ \;	
						}				
					}
					\Else
					{
						\If{${\mathcal R}[r]$=0}
						{
							Temp$W[2r \pmod {8}]$ $\leftarrow$ $W[r] + 1$ \;
							Temp$W[(2r+1) \pmod {8}]$ $\leftarrow$ $W[r] + 1$ \;	
						}
						\Else
						{
							Temp$W[2r \pmod {8}]$ $\leftarrow$ $W[r]$ \;
							Temp$W[(2r+1) \pmod {8}]$ $\leftarrow$ $W[r]$ \;	
						}
					}
						Temp$\Gamma$ $\leftarrow$ Temp$\Gamma$ $\cup$ \{$2r \pmod {8}, (2r+1) \pmod {8}$\} \;				
				}
						Assign $\Gamma$ $\leftarrow$ Temp$\Gamma$ \; 
						~~~~~~~~~~~~$W \leftarrow$ Temp$W$ \;
			}		
	\caption{FindNextWeight(${R}$, $\Gamma$, $W$)}
	\label{weight_procedure}
\end{procedure}

After getting the weights, the proposed algorithm verifies whether the weights of RMTs, are consistent with Tab.~\ref{possible_weight}. To do so, the algorithm implements the conditions relating $W_k(i, r)$ and ${\mathcal R}_{i}$ (see Section~\ref{weight_RMT}). Further, it uses Corollary~\ref{coro_40} and Corollary~\ref{coro_37} to do the same.

The descriptions of the steps of the algorithm are noted in Algorithm~\ref{analysisNCCA}. The algorithm takes an $n$-cell CA, and outputs `Yes'' if the CA is an NCCA; ``No'' otherwise.

\begin{algo}
\caption{Decide if a given CA is NCCA}
\label{analysisNCCA}
{\bf Input:} CA with rule vector ${\mathcal{R}}= \langle{\mathcal{R}}_0, {\mathcal{R}}_1, \cdots, {\mathcal{R}}_{n-1}\rangle$\\
{\bf Output:} `Yes' if the CA is NCCA; `No' otherwise.\\
\rule[4pt]{0.95\textwidth}{0.95pt}\\

\textbf{Step 1 :}  Set $\Gamma_0 \leftarrow \{0, 1\}$, $\Gamma_1 \leftarrow \{2, 3\}$, $\Gamma_2 \leftarrow \{4, 5\}$, $\Gamma_3 \leftarrow \{6, 7\}$,\\~~~~~~~~~~~~~ $W_k[r] \leftarrow 0$ if $r \in \Gamma_k$, for each $k$, $0 \leq k \leq 3$.\\
\textbf{Step 2 :}  If ${\mathcal{R}}_0 \notin$ \{136, 170, 184, 192, 204, 226, 238, 240, 252\} then report `No' and Return.\\
~~~~~~~~~~~~~~Otherwise call {\it FindNextWeight(${\mathcal R}_{0}$, $\Gamma_k$, $W_k$)} for each $k \in \{0, 1, 2, 3\}$. \\
\textbf{Step 3 :}  Set $i \leftarrow 1 $.\\
\textbf{Step 4 :}  If ${\mathcal{R}}_i \notin$ \{136, 170, 184, 192, 204, 226, 238, 240, 252\} then report `No' and Return.\\
\textbf{Step 5 :}  If any of the following conditions is not satisfied then report `No' and Return.\\
~~~~~~~~~~~~~~(i) $W_0[2]$ = -1 $\implies$ ${\mathcal{R}}_i[2]$  = 0 ;\\ 
~~~~~~~~~~~~~~(ii) ($W_0[4]$ = 0 $\implies$ ${\mathcal{R}}_i[4]$  = 0) $\wedge$ ($W_0[4]$ = 1 $\implies$ ${\mathcal{R}}_i[4]$  = 1);\\
~~~~~~~~~~~~~~(iii) $W_0[5]$ = 1 $\implies$ ${\mathcal{R}}_i[5]$  = 1 ;\\
~~~~~~~~~~~~~~(iv) ($W_0[6]$ = -1 $\implies$ ${\mathcal{R}}_i[6]$  = 0) $\wedge$ ($W_0[6]$ = 1 $\implies$ ${\mathcal{R}}_i[6]$  = 1);\\
~~~~~~~~~~~~~~(v) $W_1[2]$ = -1 $\implies$ ${\mathcal{R}}_i[2]$  = 0 ;\\
~~~~~~~~~~~~~~(vi) ($W_1[3]$ = -1 $\implies$ ${\mathcal{R}}_i[3]$  = 0) $\wedge$ ($W_1[3]$ = 1 $\implies$ ${\mathcal{R}}_i[3]$  = 1);\\
~~~~~~~~~~~~~~(vii) ($W_1[4]$ = 0 $\implies$ ${\mathcal{R}}_i[4]$  = 0) $\wedge$ ($W_1[4]$ = 2 $\implies$ ${\mathcal{R}}_i[4]$  = 1);\\
~~~~~~~~~~~~~~(viii) $W_1[5]$ = 2 $\implies$ ${\mathcal{R}}_i[5]$  = 1 ;\\
~~~~~~~~~~~~~~(ix) $W_2[2]$ = -2 $\implies$ ${\mathcal{R}}_i[2]$  = 0 ;\\
~~~~~~~~~~~~~~(x) ($W_2[3]$ = -2 $\implies$ ${\mathcal{R}}_i[3]$  = 0) $\wedge$ ($W_2[3]$ = 0 $\implies$ ${\mathcal{R}}_i[3]$  = 1);\\
~~~~~~~~~~~~~~(xi) ($W_2[4]$ = -1 $\implies$ ${\mathcal{R}}_i[4]$  = 0) $\wedge$ ($W_2[4]$ = 1 $\implies$ ${\mathcal{R}}_i[4]$  = 1);\\
~~~~~~~~~~~~~~(xii) $W_2[5]$ = 1 $\implies$ ${\mathcal{R}}_i[5]$  = 1 ;\\
~~~~~~~~~~~~~~(xiii) ($W_3[1]$ = -1 $\implies$ ${\mathcal{R}}_i[1]$  = 0) $\wedge$ ($W_3[1]$ = 1 $\implies$ ${\mathcal{R}}_i[1]$  = 1);\\
~~~~~~~~~~~~~~(xiv) $W_3[2]$ = -1 $\implies$ ${\mathcal{R}}_i[2]$  = 0 ;\\
~~~~~~~~~~~~~~(xv) ($W_3[3]$ = -1 $\implies$ ${\mathcal{R}}_i[3]$  = 0) $\wedge$ ($W_3[3]$ = 0 $\implies$ ${\mathcal{R}}_i[3]$  = 1);\\
~~~~~~~~~~~~~~(xvi) ($W_3[5]$ = 1 $\implies$ ${\mathcal{R}}_i[5]$  = 1) ;\\
\textbf{Step 6 :}  If any of the following conditions is not satisfied for any $k \in$ \{0, 1, 2, 3\} then report `No' and return\;
~~~~~~~~~~~~~~(i) $W_k[4]$ = $W_k[0]$ $\implies {\mathcal{R}}_i[4]$  = 0; \\
~~~~~~~~~~~~~~(ii) $W_k[4]$ $>$ $W_k[0]$ $\implies {\mathcal{R}}_i[4]$  = 1 ;\\
~~~~~~~~~~~~~~(iii) $W_k[5]$ $>$ $W_k[1]$ $\implies ({\mathcal{R}}_i[5]$  = 1 $\wedge$ ${\mathcal{R}}_i[1]$  = 0) ;\\
~~~~~~~~~~~~~~(iv) $W_k[3]$ = $W_k[7]$ $\implies {\mathcal{R}}_i[3]$  = 1 ;\\
~~~~~~~~~~~~~~(v) $W_k[3]$ $<$ $W_k[7]$ $\implies {\mathcal{R}}_i[3]$  = 0 ;\\
~~~~~~~~~~~~~~(vi) $W_k[2]$ $<$ $W_k[6]$ $\implies ({\mathcal{R}}_i[2]$  = 0 $\wedge$ ${\mathcal{R}}_i[6]$  = 1) ;\\
\textbf{Step 7 :}  Call {\it FindNextWeight(${\mathcal R}_{i}$, $\Gamma_k$, $W_k$)} for each $k \in \{0, 1, 2, 3\}$ ;\\
\textbf{Step 8 :}  $i \leftarrow i+1$; \\
\textbf{Step 9 :}  If ($i < n-2$) then goto Step 4 ;\\
~~~~~~~~~~~~~ If ($i = n-2$) then \\
~~~~~~~~~~~~~ Set $\Gamma_0 \leftarrow \Gamma_0 \cap \{0, 2, 4, 6\}$ ; $\Gamma_1 \leftarrow \Gamma_1 \cap \{0, 2, 4, 6\}$ ;\\
~~~~~~~~~~~~~~~~~~~ $\Gamma_2 \leftarrow \Gamma_2 \cap \{1, 3, 5, 7\}$ ; $\Gamma_3 \leftarrow \Gamma_3 \cap \{1, 3, 5, 7\}$ ; \\
~~~~~~~~~~~~~ goto Step 6;\\
~~~~~~~~~~~~~ If ($i = n-1$) then \\
~~~~~~~~~~~~~ Set $\Gamma_0 \leftarrow \Gamma_0 \cap \{0, 4\}$ ; $\Gamma_1 \leftarrow \Gamma_1 \cap \{1, 5\}$ ;  \\
~~~~~~~~~~~~~~~~~~~ $\Gamma_2 \leftarrow \Gamma_2 \cap \{2, 6\}$ ; $\Gamma_3 \leftarrow \Gamma_3 \cap \{3, 7\}$ ; \\
~~~~~~~~~~~~~ goto Step 6;\\
\textbf{Step 10 :} If $W_k[r] \neq 0 $ for any $r \in \Gamma_k$ and any $k \in \{0, 1, 2, 3\}$ then report `No';\\
~~~~~~~~~~~~~~~ Otherwise, `Yes'; 
\end{algo}

In Step 1, the algorithm forms the root (which is also a super node of level 0) of the tree, and initializes the weights of RMTs. Next, it checks whether the rule ${\mathcal R}_{i}$ is number conserving or not (Step 2 and Step 4). If ${\mathcal R}_{i}$ is not a number conserving rule, the algorithm stops with negative answer. Otherwise, for $i = 0$, the weights of RMTs of ${\mathcal N}_{1}$, the super node of level 1 are found out. For $i > 0$, first some conditions relating weights of RMTs of ${\mathcal N}_{i}$ and ${\mathcal R}_{i}$ are checked (Step 5), and the conditions of Corollary~\ref{coro_40} and \ref{coro_37} are verified (Step 6), then the weights of RMTs of ${\mathcal N}_{i+1}$ are found out. In fact, Step 5 implements the 16 conditions presented in Section~\ref{weight_RMT}. The conditions of Step 5 and Step 6 can be read in the following way:\\
``$W_0[2]$ = -1 $\implies$ ${\mathcal{R}}_i[2]$  = 0" : If $W_0[2]$ = -1 and ${\mathcal{R}}_i[2]$  = 0 then the condition is true. But if $W_0[2]$ = -1 but ${\mathcal{R}}_i[2]$ $\neq$ 0, the condition is false.

If any one of the conditions of Step 5 and 6 is violated then the algorithm stops with output `No'. This procedure repeats when $i \leq n-3$. For $i=n-2$ and $i=n-1$, first do the Step 9 and then Step 6, after that do the Step 7. If any one condition of Step 6 is violated then report `No'. At the end ($i=n$), in each $\Gamma_k$, check the weight of each RMT. Finally if any non-zero weight is found, then report `No', otherwise report `Yes'.\\  

\noindent
\textbf{Complexity:} The time requirement of Algorithm~\ref{analysisNCCA} depends on $n$, the size of CA only. Hence, the worse case time complexity of Algorithm~\ref{analysisNCCA} is $O (n)$.

\begin{theorem}
\label{correct_Analysis}
Algorithm~\ref{analysisNCCA} correctly checks whether a rule vector ${\mathcal{R}}= \langle{\mathcal{R}}_0, {\mathcal{R}}_1, \cdots, {\mathcal{R}}_{n-1}\rangle$ is NCCA or not, where $n \geq 5$.
\end{theorem}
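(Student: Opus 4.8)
The plan is to prove the two-sided correctness statement: Algorithm~\ref{analysisNCCA} outputs ``Yes'' precisely when the input rule vector defines an NCCA. The whole argument rests on the three observations (i)--(iii) stated just above the algorithm, namely that the weight of an RMT depends only on its level $i$ and its set index $k$ (not on the node index $j$), and that the verdict depends only on the leaf weights. These observations, which follow from Corollary~\ref{basic2} and Theorem~\ref{basic1}, justify collapsing each level into a single \emph{super node} ${\mathcal N}_i = (\Gamma_0^i, \Gamma_1^i, \Gamma_2^i, \Gamma_3^i)$ with ${\Gamma_k^i = \bigcup_j \Gamma_k^{N_{i.j}}}$, without losing any information relevant to the decision.

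The first step is to establish, by induction on the level $i$, the invariant that the data structures $\Gamma_k$ and $W_k$ maintained by the algorithm satisfy $\Gamma_k = \Gamma_k^i$ and $W_k[r] = W_k(i,r)$ for every $r \in \Gamma_k$, as long as the algorithm has not yet reported ``No''. The base case $i=0$ is exactly Step~1, which reproduces the root of Definition~\ref{Rtree_def}. For the inductive step, \textit{FindNextWeight} directly implements the RMT propagation $r \mapsto \{2r \pmod 8,\, 2r+1 \pmod 8\}$ of Point~\ref{rtd4} of Definition~\ref{Rtree_def} together with the weight recurrence of Section~\ref{RTncca}; the special assignments in Step~9 at levels $n-2$ and $n-1$ realize the restrictions of Points~\ref{rtd5} and~\ref{rtd6}. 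The subtle point is that two \emph{equivalent} RMTs at level $i$ both write to the same cell of $TempW$, so the stored weight is well defined only if their contributions agree; this agreement is guaranteed exactly by the checks of Step~6 (Corollaries~\ref{coro_40} and~\ref{coro_37}), so the invariant holds precisely on the branch where no check has failed.

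The second step is to verify that every test performed by the algorithm is a \emph{necessary} condition for the CA to be an NCCA. Steps~2 and~4 reject any rule outside the nine number-conserving rules, which is justified for $n \geq 5$ by Theorem~\ref{rule_select} and Corollary~\ref{Th2_true}; this is exactly where the hypothesis $n \geq 5$ is used, since for smaller $n$ additional rules are admissible. Step~5 is the list of sixteen weight/next-state implications derived in Section~\ref{weight_RMT} from Tab.~\ref{possible_weight} and Lemma~\ref{universal}, and Step~6 encodes Corollaries~\ref{coro_40} and~\ref{coro_37}; each of these holds for every NCCA. Consequently, if the input is an NCCA then no test fails, the induction of the first step carries the genuine per-node weights all the way to level $n$, and conversely if any test fails the rejection is correct.

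It remains to assemble the two directions through the final criterion. By the weight interpretation of Section~\ref{RTncca}, the leaves carry the RMTs of ${\mathcal R}_0$ and $W_k[r]$ records the surplus or deficiency of 1s of the state, read as an RMT sequence, relative to its successor; the CA is an NCCA if and only if all leaf weights are $0$, which is exactly Step~10. Combining this with the invariant and with the necessity of all checks yields the equivalence: the algorithm reports ``Yes'' iff it reaches Step~10 without rejecting and finds all leaf weights zero, iff the CA is an NCCA. I expect the main obstacle to be the sufficiency half of this equivalence --- ruling out that the single-weight-per-RMT super node silently masks an inconsistency that the full tree would expose. The resolution is that Step~6 catches every situation in which two equivalent RMTs would acquire different weights \emph{before} \textit{FindNextWeight} can overwrite one value with another, so on any surviving branch the computed super node weights coincide with those of every individual node $N_{i.j}$, and hence the Step~10 test on the super node is equivalent to the leaf test on the entire reachability tree that characterizes NCCA.
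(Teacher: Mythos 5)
Your proposal is correct and follows essentially the same route as the paper's own proof: both argue that Steps 2 and 4 enforce the necessary rule-level conditions of Theorem~\ref{rule_select}, that Steps 5 and 6 enforce the necessary weight conditions of Section~\ref{weight_RMT} and Corollaries~\ref{coro_40} and~\ref{coro_37}, and that the zero-weight test at the leaves (Step 10) is the defining criterion, so the verdict is correct in both directions. You are in fact more explicit than the paper on the one delicate point --- that the super-node collapse must not silently overwrite conflicting weights of equivalent RMTs before an inconsistency is detected --- which the paper's proof leaves implicit in its closing appeal to ``the theories, which have been developed before.''
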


\begin{proof}
Let us consider a rule vector ${\mathcal{R}}= \langle{\mathcal{R}}_0, {\mathcal{R}}_1, \cdots, {\mathcal{R}}_{n-1}\rangle$ as input to Algorithm~\ref{analysisNCCA}. It is followed from Theorem~\ref{rule_select} that if rule $\mathcal{R}_i$ of ${\mathcal{R}}$ is not a number conserving rule, then ${\mathcal{R}}$ is not an NCCA. This is verified by Step 2 and Step 4 of Algorithm~\ref{analysisNCCA}. However, if all the rules of ${\mathcal{R}}$ are number conserving rule, then it does not necessarily imply that the CA is an NCCA (Lemma~\ref{sequenceOfRule}). So, we find the {\em weights} of RMTs of each set after scanning each rule of ${\mathcal{R}}$ from ${\mathcal{R}}_{0}$ (Step 2 and Step 7). It is already identified that in case of NCCA, the weights follow some conditions, which are summarized in Section~\ref{weight_RMT}. Whether these conditions are satisfied are checked at Step 5. Further, conditions given by Corollary~\ref{coro_40} and Corollary~\ref{coro_37} are verified by Step 6. If any condition is not satisfied, the algorithm decides "No". Finally, Algorithm~\ref{analysisNCCA} returns "Yes" if after scanning all rules of ${\mathcal{R}}$ the weight of each RMT of $\Gamma_k$ ($k \in \{0, 1, 2, 3\}$) is 0 (Step 10). Therefore, correctness of the algorithm is verified by the theories, which have been developed before.
\end{proof}

\begin{figure}[h]
\begin{center}
\includegraphics[height=3.8in, width=3.4in]{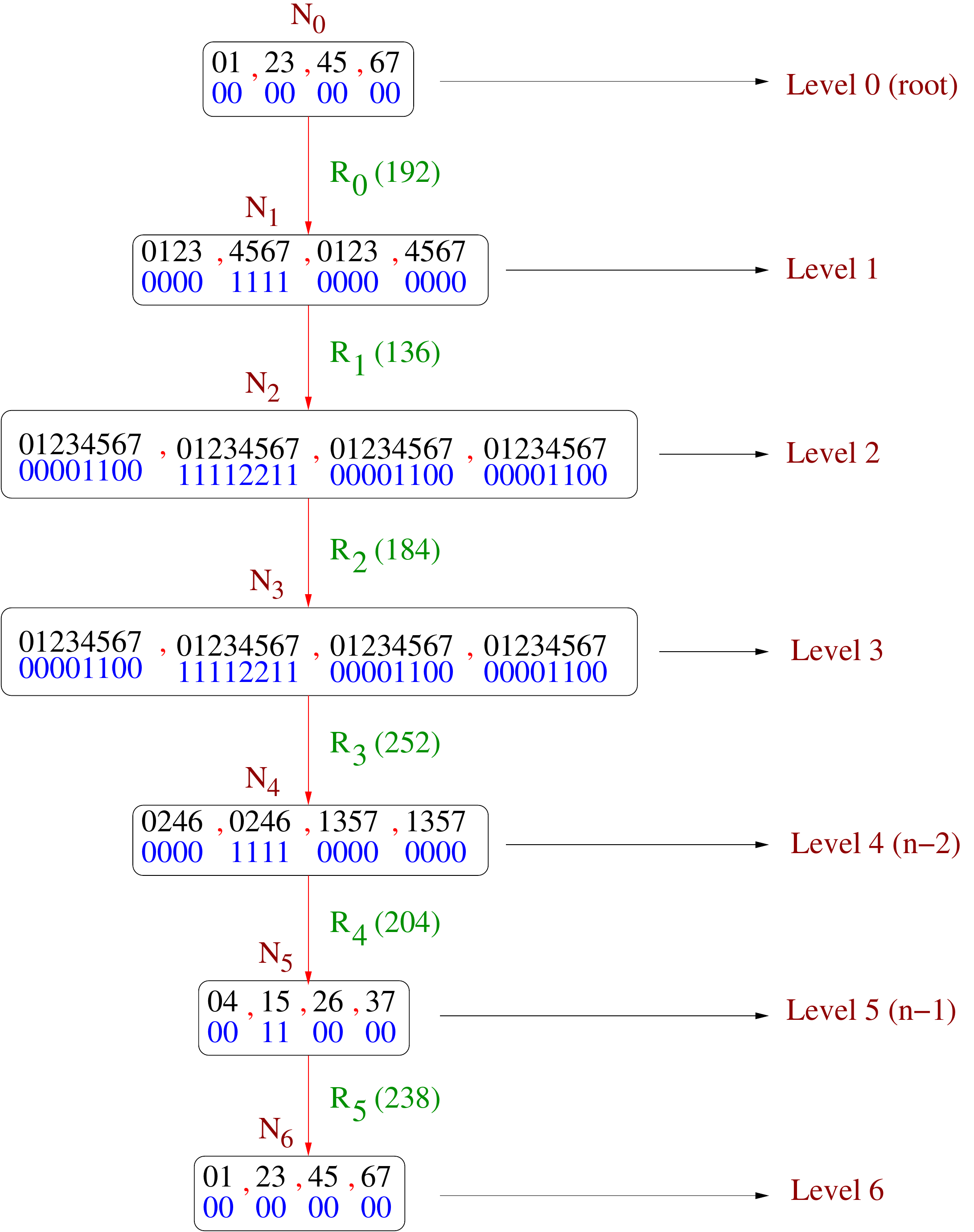}
\caption{Super-node of each level for rule vector $\langle{192, 136, 184, 252, 204, 238}\rangle$}
\label{algo_steps}
\end{center}
\end{figure}

\begin{sidewaysfigure}[hbtp]
 \centering
\includegraphics[height=3.7in,width=6.5in]{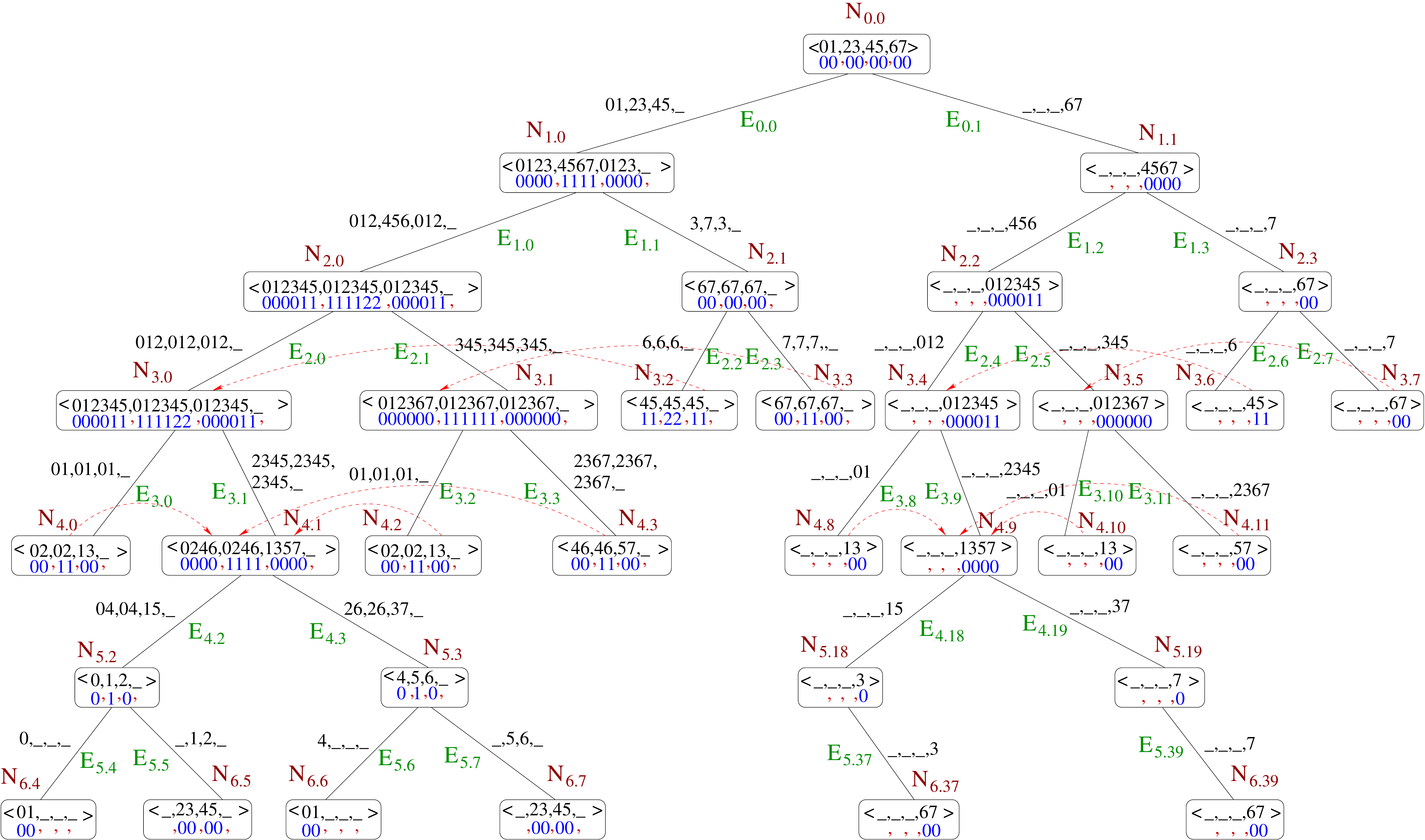}
\caption{The reachability tree of the rule vector $\langle{192, 136, 184, 252, 204, 238}\rangle$ }
\label{NPCAweight}
\end{sidewaysfigure}

\begin{example}
\label{examAnalysis}
Let us consider the CA $\langle{192, 136, 184, 252, 204, 238}\rangle$ as input to Algorithm~\ref{analysisNCCA}. To decide it as NCCA, the algorithm first forms root node ${\mathcal{N}}_{0}$ with four sibling pairs (\{0,1\}, \{2,3\}, \{4,5\}, \{6,7\}), and assigns the weight of each RMT as 0 (Step 1). After that we find the weights of RMTs which are in ${\mathcal{N}}_{1}$ (level 1) by taking ${\mathcal{R}}_{0}$ = 192. The RMTs of level 1 and weight of those RMTs are shown in node ${\mathcal{N}}_{1}$ of Fig.~\ref{algo_steps}. Then we consider the next rule ${\mathcal{R}}_{1}$ = 136, and check whether it is a number conserving rule or not (Step 4). ${\mathcal{R}}_{1}$ is a number conserving rule, so we next check all the conditions of Step 5 and Step 6. We find that the only condition (x) of Step 5 is applicable. In condition (x) of Step 5, ($W_2[3]$ = 0 $\implies$ ${\mathcal{R}}_i[3]$  = 1) means that in $\Gamma_2^1$, the weight of RMT 3 is 0 implies that ${\mathcal{R}}_{1}[3]$ = 1. In rule 136, the next state value of RMT 3 is 1, so the condition (x) of Step 5 is true. In this way, we check Step 6 also. Note that in case of rule 136, no condition of Step 5 and Step 6 are disobeyed. After that in Step 7, we find the weight of RMTs which are in ${\mathcal{N}}_{2}$ (level 2 of Fig.~\ref{algo_steps}). For rule 184 (${\mathcal{R}}_{2}$) and 252 (${\mathcal{R}}_{3}$), we do the similar processing, and observe that no conditions are disobeyed. After that we consider rule 204, which is ${\mathcal{R}}_{n-2}$, so we remove some RMTs (Step 9) and then check the conditions of Step 6. We find that only conditions (i) and (iv) of Step 6 are applicable. From the node ${\mathcal{N}}_{4}$ (level $n-2$) of Fig.~\ref{algo_steps}, we see that in $\Gamma_0^{n-2}$ and $\Gamma_1^{n-2}$, the weights of RMTs 4 and 0 are equal, and ${\mathcal{R}}_{n-2}[4]$ = 0. Similarly, in $\Gamma_2^{n-2}$ and $\Gamma_3^{n-2}$, the weights of RMTs 3 and 7 are equal, and ${\mathcal{R}}_{n-2}[3]$ = 1. Since 204[4] = 0 and 204[3] = 1, the conditions (i) and (iv) of Step 6 are true. Next we find the weights of RMTs of ${\mathcal{N}}_{5}$ (level $n-1$ of Fig.~\ref{algo_steps}). Lastly we consider rule 238, as ${\mathcal{R}}_{n-1}$, and use Step 9. Then we check the conditions of Step 6. From the node ${\mathcal{N}}_{5}$ of Fig.~\ref{algo_steps}, we see that in $\Gamma_0^{n-1}$ the weights of RMTs 4 and 0 are equal, and ${\mathcal{R}}_{n-1}[4]$ = 0. Similarly, in $\Gamma_3^{n-1}$, the weights of RMTs 3 and 7 are equal, and ${\mathcal{R}}_{n-1}[3]$ = 1. Since 238[4] = 0 and 238[3] = 1, the conditions (i) and (iv) of Step 6 are true. Again we calculate the weight of each RMT which is in ${\mathcal{N}}_{6}$ (level 6 of Fig.~\ref{algo_steps}). If we get any non-zero weight for an RMT then return `No'(Step 10). But from node ${\mathcal{N}}_{6}$ (level $n$) of Fig.~\ref{algo_steps}, we clearly see that no RMT has non-zero weight. So, Algorithm~\ref{analysisNCCA} returns `Yes'. The reachability tree of the rule vector ${\mathcal{R}}$ is shown in Fig.~\ref{NPCAweight}. In the reachability tree, we can also see that in the leaves, all the RMTs have zero-weight.
\end{example}

\section{Synthesis of NCCA}
\label{Synthesis_ncca}

{\em Synthesis} is the converse of decision problem. Here we need to find individual rules to get an $n$-cell NCCA $\langle {\mathcal{R}}_0, {\mathcal{R}}_1, \cdots, {\mathcal{R}}_{n-1} \rangle$. The problem can be stated as following :
\begin{itemize}
\item[]{\em Given a finite $n \geq 5$, select the individual cell rules to get a rule vector of an $n$-cell NCCA.}
\end{itemize}
To solve this problem, we utilize the theories developed in previous sections. As a first step, we arbitrarily choose a number conserving rule as ${\mathcal{R}}_0$. Then we choose another number conserving rule which can act as ${\mathcal{R}}_1$. However, to get ${\mathcal{R}}_1$ (and other rules), we set the next state values of individual RMTs of the rule following some conditions, which finally make the CA as NCCA. For example, we set the next states of RMTs 0 and 7 of a rule ${\mathcal{R}}_i$ to be selected, as 0 and 1 respectively due to Lemma~\ref{universal}. Similarly, if weight of RMT $2 \in \Gamma_0^{N_{i.j}}$ is -1, we set ${\mathcal{R}}_i[2]$ as 0 to keep weight of the RMT consistent with Tab.~\ref{possible_weight}.

The synthesis procedure, like the decision algorithm, develops super node ${\mathcal{N}}_{i}$ for each level $i$, finds weights of RMTs in $\Gamma_k^{i}$ ($k \in \{0, 1, 2, 3\}$), and based on the weights, it {\em synthesizes} ${\mathcal{R}}_i$. However, we need to take special care for choosing ${\mathcal{R}}_1$, ${\mathcal{R}}_{n-2}$ and ${\mathcal{R}}_{n-1}$ due to some reasons which will be discussed later. Next we discuss how we can synthesize ${\mathcal{R}}_i$ by setting the next state values of RMTs. For ease of understanding, we provide four flowcharts (Fig.~\ref{flowchat2}, Fig.~\ref{flowchat1}, Fig.~\ref{flowchat4} and Fig.~\ref{flowchat3}), which summarize the selection techniques of ${\mathcal{R}}_i$, ${\mathcal{R}}_1$, ${\mathcal{R}}_{n-1}$ and ${\mathcal{R}}_{n-2}$.


\subsection{Selecting ${\mathcal{R}}_i$}

As stated before, we set the next states of RMTs to get ${\mathcal{R}}_i$. As a first step, we set the following: ${\mathcal{R}}_i[0] \leftarrow 0$ and ${\mathcal{R}}_i[7] \leftarrow 1$ (Lemma~\ref{universal}). Now according to Corollary~\ref{coro_40}, for any $k$, $W_k(i, 4)$ $\geq$ $W_k(i, 0)$. However, if $W_k(i, 4)$ = $W_k(i, 0)$ then in an NCCA, ${\mathcal{R}}_i[4] = {\mathcal{R}}_i[0]$ (Corollary~\ref{equivalentTh1}). That is, to get NCCA, we need to set ${\mathcal{R}}_i[4] \leftarrow 0$. On the other hand, if $W_k(i, 4)$ $>$ $W_k(i, 0)$, then set ${\mathcal{R}}_i[4] \leftarrow 1$ (Corollary~\ref{1110}). Interestingly, if $W_k(i, 4)$ $>$ $W_k(i, 0)$ for one value of $k$, say $k$ = 0, then $W_k(i, 4)$ $>$ $W_k(i, 0)$ for all other $k$, that is for $k$ = 1, 2 and 3. 

As RMTs 1 and 5 are sibling of RMTs 0 and 4 respectively, so if ${\mathcal{R}}_i[4] \neq {\mathcal{R}}_i[0]$ then we set ${\mathcal{R}}_i[5] \leftarrow 1$ and ${\mathcal{R}}_i[1] \leftarrow 0$ using Theorem~\ref{rule_select}. However, if ${\mathcal{R}}_i[4] = {\mathcal{R}}_i[0]$, we can not set next state values of RMTs 1 and 5 following Theorem~\ref{rule_select}. Hence when ${\mathcal{R}}_i[4] = {\mathcal{R}}_i[0]$, to set the next state values of RMTs 1 and 5, we need to check some additional conditions relating weights of these RMTs. Please recall the following conditions of Section~\ref{weight_RMT}, which are to be followed by any NCCAs.

\begin{center}
(i) If $W_0(i, 5)$ = 1 then ${\mathcal R}_{i}[5]$ = 1
~~~~~~~~~~~~~~~~(ii) If $W_1(i, 5)$ = 2 then ${\mathcal R}_{i}[5]$ = 1\\
(iii) If $W_2(i, 5)$ = 1 then ${\mathcal R}_{i}[5]$ = 1
~~~~~~~~~~~~~~~(iv) If $W_3(i, 5)$ = 1 then ${\mathcal R}_{i}[5]$ = 1\\
(v) If $W_3(i, 1)$ = 1 then ${\mathcal R}_{i}[1]$ =1
~~~~~~~~~~~~~~~~~(vi) If $W_3(i, 1)$ = -1 then ${\mathcal R}_{i}[1]$ = 0 
\end{center}

Therefore, when any of the conditions (i) to (iv) is true, we set ${\mathcal R}_{i}[5]$ $\leftarrow$ 1. Following Theorem~\ref{rule_select}, we can set ${\mathcal R}_{i}[1]$ $\leftarrow$ 1. If condition (v) is true,  we set ${\mathcal R}_{i}[1]$ $\leftarrow$ 1 and to respect Theorem~\ref{rule_select}, we set ${\mathcal R}_{i}[5]$ $\leftarrow$ 1. That is, if any of the conditions (i) to (v) is true, we set ${\mathcal R}_{i}[1]$ $\leftarrow$ 1, ${\mathcal R}_{i}[5]$ $\leftarrow$ 1. Similarly, when condition (vi) is true, then we set as ${\mathcal R}_{i}[1]$ $\leftarrow$ 0 and ${\mathcal R}_{i}[5]$ $\leftarrow$ 0. However, one may notice that when condition (vi) is true, none of the conditions (i) to (v) can be true. Otherwise, weight of RMTs will become inconsistent.


Following the same rationale, we can set the next state values of rest RMTs of ${\mathcal{R}}_i$. According to Corollary~\ref{coro_37}, for any $k$, $W_k(i, 3)$ $\leq$ $W_k(i, 7)$. So if $W_k(i, 3)$ = $W_k(i, 7)$, then we set ${\mathcal{R}}_i[3] \leftarrow 1$ (using Corollary~\ref{equivalentTh1}), otherwise set ${\mathcal{R}}_i[3] \leftarrow 0$ (using Corollary~\ref{1110}). As RMTs 2 and 6 are the sibling of RMTs 3 and 7 respectively, so if ${\mathcal{R}}_i[3] \neq {\mathcal{R}}_i[7]$ then we set ${\mathcal{R}}_i[2] \leftarrow 0$ and ${\mathcal{R}}_i[6] \leftarrow 1$ (Theorem~\ref{rule_select}). When ${\mathcal{R}}_i[3] = {\mathcal{R}}_i[7]$ we can not set next state values of RMTs 2 and 6 using Theorem~\ref{rule_select}, as before. However, following conditions are already noted in Section~\ref{weight_RMT}.

\begin{center}
(a) If $W_0(i, 2)$ = -1 then ${\mathcal R}_{i}[2]$ = 0
~~~~~~~~~~~~~~~~(b) If $W_0(i, 6)$ = -1 then ${\mathcal R}_{i}[6]$ = 0\\ 
(c) If $W_1(i, 2)$ = -1 then ${\mathcal R}_{i}[2]$ = 0
~~~~~~~~~~~~~~~(d) If $W_2(i, 2)$ = -2 then ${\mathcal R}_{i}[2]$ = 0\\
(e) If $W_3(i, 2)$ = -1 then ${\mathcal R}_{i}[2]$ = 0
~~~~~~~~~~~~~~~~(f) If $W_0(i, 6)$ = 1 then ${\mathcal R}_{i}[6]$ = 1
\end{center}

If any of the conditions (a) to (e) is true, then our synthesis algorithms sets ${\mathcal R}_{i}[2]$ $\leftarrow$ 0 and ${\mathcal R}_{i}[6]$ $\leftarrow$ 0. Similarly, if condition (f) is true then we set ${\mathcal R}_{i}[2]$ $\leftarrow$ 1 and ${\mathcal R}_{i}[6]$ $\leftarrow$ 1. In Algorithm~\ref{Syn_NCCA}, the proposed synthesis algorithm, Step 6 and Step 7 use these ideas to get ${\mathcal{R}}_i$.
 
After taking all the above measures, it may so happens that some of the RMTs of ${\mathcal{R}}_i$ remain unfilled. In that case, we set unfilled RMTs randomly obeying the condition of Theorem~\ref{rule_select}. However, we summarize the selection procedure of ${\mathcal{R}}_i$ in the flowchart of Fig.~\ref{flowchat2}.
 
\begin{figure}
\begin{center}
\includegraphics[height=5.8in, width=5.9in]{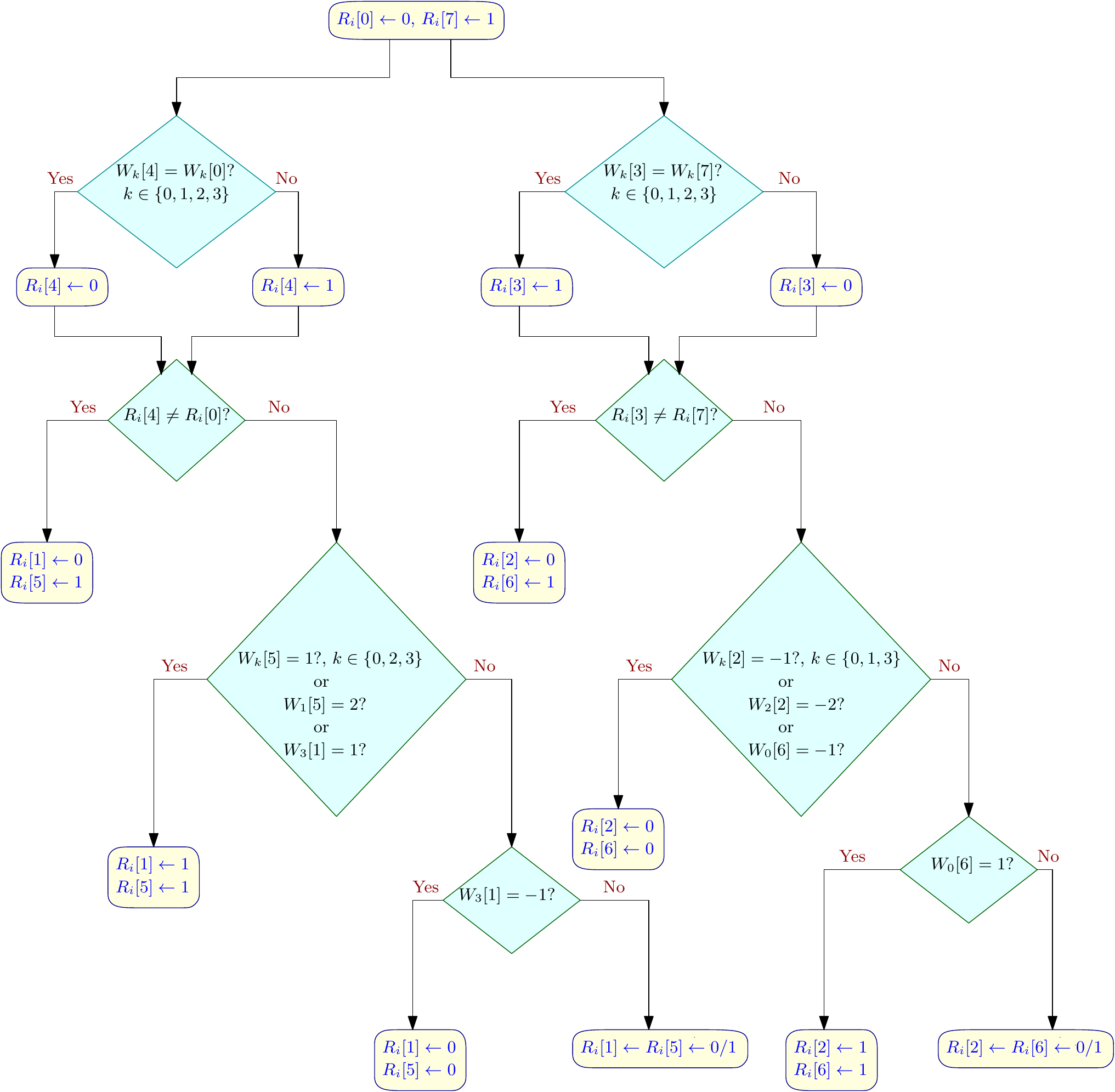}
\caption{Flowchart for the selection of ${\mathcal{R}}_i$}
\label{flowchat2}
\end{center}
\end{figure}

\subsection{Selecting ${\mathcal{R}}_1$}

Though it is not mentioned explicitly, but selection process of a rule ${\mathcal{R}}_i$ expects the presence of all the RMTs in a set $\Gamma_k^{i}$ of a super node ${\mathcal{N}}_{i}$. For example, if we want to check whether $W_k(i, 4)$ is greater than $W_k(i, 0)$, then RMTs 0 and 4 should be in $\Gamma_k^{{i}}$. However, in the super node ${\mathcal{N}}_{1}$, RMT 0 is in $\Gamma_0^1$ but RMT 4 is not. In fact, all of the 8 RMTs are not present in any $\Gamma_k^{1}$ (see Fig.~\ref{NPCAweight} for verification). Therefore, if we apply the selection process of ${\mathcal{R}}_i$ to select ${\mathcal{R}}_1$, next state value of many RMTs may remain unfilled. If the unfilled positions are filled up arbitrary, some inconsistencies in weights may arise. Hence, we take some extra measures while choosing the rule ${\mathcal{R}}_1$.

What we do in this case is, we relate weights of RMTs, present in different sets of ${\mathcal{N}}_{1}$. As always, we set ${\mathcal R}_{1}[0]$ $\leftarrow$ 0 and ${\mathcal R}_{1}[7]$ $\leftarrow$ 0. However, following four conditions, out of sixteen, reported in section~\ref{weight_RMT}, can arise in case of ${\mathcal{N}}_{1}$, when ${\mathcal R}_{0}$ is a number conserving rule.
\begin{center}
(i) If $W_0(1, 2) = -1$ then ${\mathcal R}_{1}[2] = 0$~~~~~~~~~(ii) If $W_1(1, 4) = 0$ then ${\mathcal R}_{1}[4] = 0$\\
(iii) If $W_2(1, 3) = 0$ then ${\mathcal R}_{1}[3] = 1$~~~~~~~~~(iv) If $W_3(1, 5) = 1$ then ${\mathcal R}_{1}[5] = 1$
\end{center}

If any of the above conditions arise, we can set the next state value of the corresponding RMT directly. For example, if $W_1(1, 4) = 0$, to get an NCCA, we need to set ${\mathcal R}_{1}[4]$ $\leftarrow$ 0. Interestingly, conditions (i) and (iii)(similarly, condition (ii) and (iv)) are mutually exclusive -- that is, both can not be true simultaneously. However, in the proposed synthesis algorithm, we start by setting the next state value of RMT 2. Here, Corollary~\ref{basic3} and Theorem~\ref{rule_select} are instrumental. However, in this discussion, we mainly use the RMTs and their weights of two sets -- $\Gamma_0^1$ and $\Gamma_1^1$. One can reach to same result, if she uses other sets, like $\Gamma_2^1$ and $\Gamma_3^1$.

As before, if $W_0(1, 2)$ = -1, we set ${\mathcal R}_{1}[2]$ $\leftarrow$ 0. Hence, $W_0(2, 4) = W_0(2, 5)$ = 0. That is, $W_0(2, 0) - W_0(2, 4)$ = 0. According to Corollary~\ref{basic3}, $W_1(2, 0) - W_1(2, 4)$ is also to be 0. To get this, we set ${\mathcal R}_{1}[4]$ $\leftarrow$ 0 and ${\mathcal R}_{1}[6]$ $\leftarrow$ 1 when $W_1(1, 4) = W_1(1, 6)$. When $W_1(1, 4) = W_1(1, 6) = 0$, then condition (ii) of the above is true. However, we need not to check this condition, as we have already set ${\mathcal R}_{1}[4]$ $\leftarrow$ 0 using Corollary~\ref{basic3}. One may notice that if $W_1(1, 4) = W_1(1, 6)$ then $W_3(1, 4) = W_3(1, 6)$ when ${\mathcal{R}}_0$ is a number conserving rule. When ${\mathcal{R}}_1[6]$ is set as 1, then using Theorem~\ref{rule_select}, we set ${\mathcal R}_{1}[3]$ $\leftarrow$ 0. The rest RMTs, that is RMTs 1 and 5, are arbitrarily set either as ${\mathcal R}_{1}[1]$ $\leftarrow$ ${\mathcal R}_{1}[5]$ $\leftarrow$ 0 or as ${\mathcal R}_{1}[1]$ $\leftarrow$ ${\mathcal R}_{1}[5]$ $\leftarrow$ 1. 

But when $W_1(1, 4) \neq W_1(1, 6)$, then the only possibility is: $W_1(1, 4)$ = 1 and  $W_1(1, 6)$ = 0 (and, $W_3(1, 4)$ = 1 and $W_3(1, 6)$ = 0). Further, $W_3(1, 5) = W_3(1, 4) = 1$. So, condition (iv) of the above is arise, and we set ${\mathcal R}_{1}[5]$ $\leftarrow$ 1. To meet the condition of Corollary~\ref{basic3} at level 2, we need to arbitrarily set either as ${\mathcal R}_{1}[4]$ $\leftarrow$ ${\mathcal R}_{1}[6]$ $\leftarrow$ 0 or as ${\mathcal R}_{1}[4]$ $\leftarrow$ ${\mathcal R}_{1}[6]$ $\leftarrow$ 1, because when condition (iv) is true then condition (ii) can not be true. As a next step, using Theorem~\ref{rule_select}, we set ${\mathcal R}_{1}[1]$ $\leftarrow 1$ and ${\mathcal R}_{1}[3]$ $\leftarrow 1$ if ${\mathcal R}_{1}[4]$ = 0, set ${\mathcal R}_{1}[1]$ $\leftarrow 0$ and ${\mathcal R}_{1}[3]$ $\leftarrow$ 0 otherwise.

Now if $W_0(1, 2)$ $\neq -1$, that is, if $W_0(1, 2)$ $= 0$, ${\mathcal R}_{1}[2]$ can be anything 0 or 1. Let us set ${\mathcal R}_{1}[2]$ $\leftarrow$ 0. Then, $W_0(2, 4) = W_0(2, 5) = 1$, and $W_0(2, 0) - W_0(2, 4) = -1$. Like the previous case, we need to set same next state value to ${\mathcal R}_{1}[4]$ and ${\mathcal R}_{1}[6]$ when $W_1(1, 4) = W_1(1, 6)$, in order to get $W_1(2, 0) - W_1(2, 4) = -1$ (otherwise Corollary~\ref{basic3} will be violated). But if $W_1(1, 4) = 0$ then to satisfy condition (ii) of the above, we need to set ${\mathcal R}_{1}[4]$ $\leftarrow$ ${\mathcal R}_{1}[6]$ $\leftarrow$ 0. Now using Theorem~\ref{rule_select}, we set ${\mathcal R}_{1}[3]$ $\leftarrow$ 1. On the other hand, if $W_2(1, 3) = 0$, then following condition (iii) of the above, we set ${\mathcal R}_{1}[3]$ $\leftarrow$ 1. Using Theorem~\ref{rule_select} again, we need to set ${\mathcal R}_{1}[6]$ $\leftarrow$ 0, and hence ${\mathcal R}_{1}[4]$ $\leftarrow$ ${\mathcal R}_{1}[6]$ $\leftarrow$ 0. However, if $W_1(1, 4) = W_1(1, 6) = 1$, then arbitrarily set either as ${\mathcal R}_{1}[4]$ $\leftarrow$ ${\mathcal R}_{1}[6]$ $\leftarrow$ 0 or ${\mathcal R}_{1}[4]$ $\leftarrow$ ${\mathcal R}_{1}[6]$ $\leftarrow$ 1, and RMT 3 is set following Theorem~\ref{rule_select}. However, if $W_1(1, 4) = 1$ and $W_1(1, 6) = 0$ (that is, when $W_1(1, 4) \neq W_1(1, 6)$), then to meet the condition of Corollary~\ref{basic3} at level 2, we need to set ${\mathcal R}_{1}[4]$ $\leftarrow$ 1 and ${\mathcal R}_{1}[6]$ $\leftarrow$ 0. When $W_1(1, 4) \neq W_1(1, 6)$, then $W_3(1, 5) = 1$, so we set ${\mathcal R}_{1}[5]$ $\leftarrow$ 1. Now using Theorem~\ref{rule_select}, we can set ${\mathcal R}_{1}[1]$ $\leftarrow$ 0 and ${\mathcal R}_{1}[3]$ $\leftarrow$ 1.

Now let us set ${\mathcal R}_{1}[2]$ $\leftarrow 1$ when $W_0(1, 2) = 0$. Using Theorem~\ref{rule_select}, we can set ${\mathcal R}_{1}[3]$ $\leftarrow 1$ and ${\mathcal R}_{1}[6]$ $\leftarrow 1$. When $W_0(1, 2) = 0$, then, $W_0(2, 4) = W_0(2, 5) = 0$ and $W_0(2, 0) - W_0(2, 4) = 0$. A similar case has already been discussed before. Following that rationale, we set the rest RMTs as following: If $W_1(1, 4) = W_1(1, 6)$ then set ${\mathcal R}_{1}[4]$ $\leftarrow 0$, otherwise set ${\mathcal R}_{1}[4]$ $\leftarrow 1$. Now if ${\mathcal R}_{1}[4] = 0$ then arbitrarily set either as ${\mathcal R}_{1}[1]$ $\leftarrow$ ${\mathcal R}_{1}[5]$ $\leftarrow 0$ or as ${\mathcal R}_{1}[1]$ $\leftarrow$ ${\mathcal R}_{1}[5]$ $\leftarrow 1$, otherwise set ${\mathcal R}_{1}[1]$ $\leftarrow 0$ and ${\mathcal R}_{1}[5]$ $\leftarrow 1$.

All the points, noted above, are summarized in flowchart of Fig.~\ref{flowchat1}.

\begin{figure}
\begin{center}
\includegraphics[height=5.8in, width=5.9in]{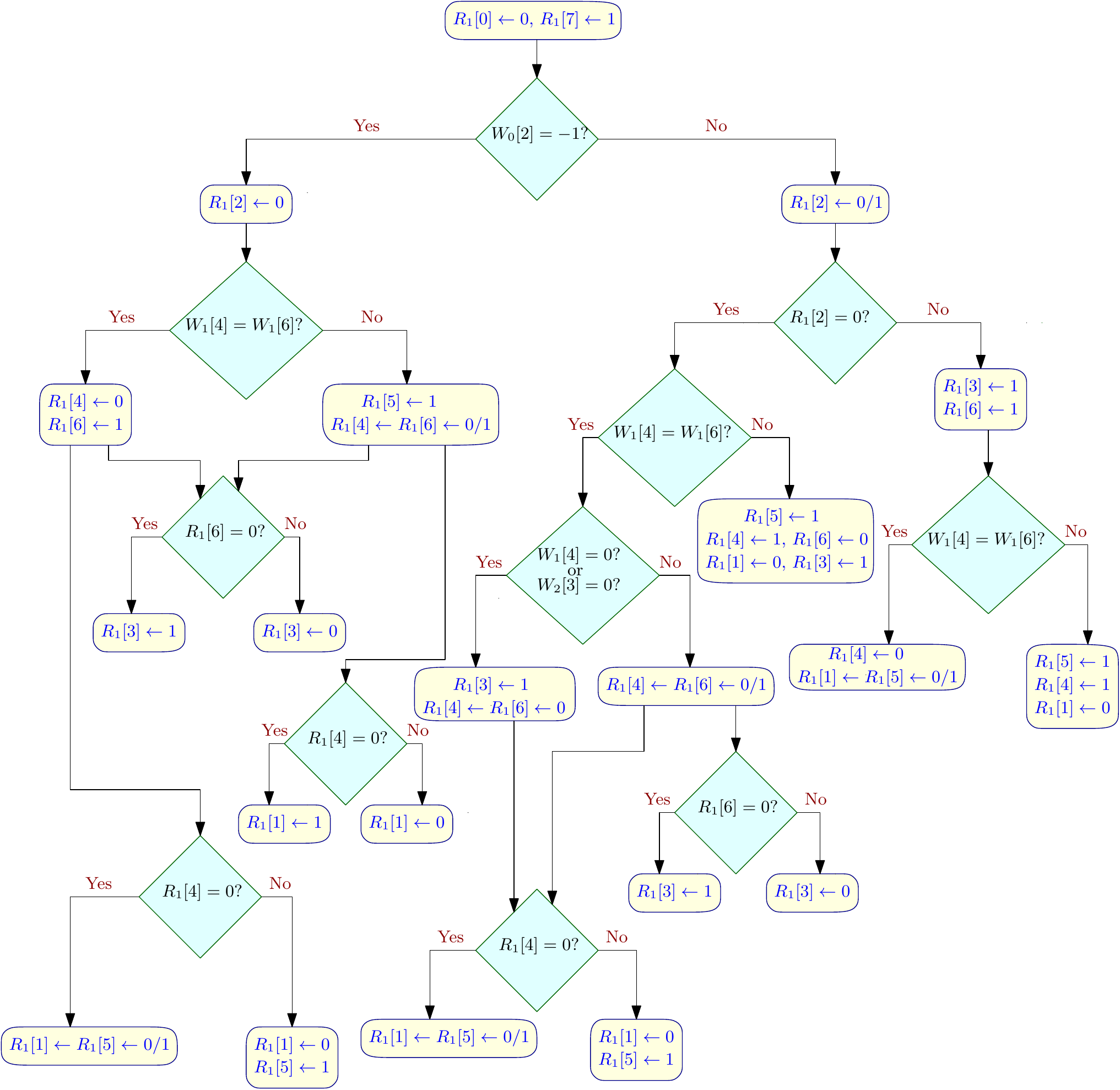}
\caption{Flowchart for the selection of ${\mathcal{R}}_1$}
\label{flowchat1}
\end{center}
\end{figure} 

\subsection{Selecting ${\mathcal{R}}_{n-2}$ and ${\mathcal{R}}_{n-1}$}

In case of selecting of ${\mathcal{R}}_{n-2}$ and ${\mathcal{R}}_{n-1}$, we can not use the general technique of choosing a rule ${\mathcal{R}}_i$. Because, if we do so, some of the RMTs at leaves of the reachability tree may attain non-zero weights. So we impose some additional restrictions while ${\mathcal{R}}_{n-2}$ and ${\mathcal{R}}_{n-1}$ are chosen.

\vspace{2mm}
\noindent
\textbf{Selection of ${\mathcal{R}}_{n-1}$:} All the RMTs can not be present in $\Gamma_k^{N_{n-1.j}}$ of an arbitrary node of level $n-1$ for any $k \in \{0, 1, 2, 3\}$ (see point~\ref{rtd6} of Definition.~\ref{Rtree_def}), and in general, the super node ${\mathcal{N}}_{n-1}$ = ($\{0, 4\}, \{1, 5\}, \{2, 6\}, \{3, 7\}$). As a first step, we set ${\mathcal R}_{n-1}[0]$ $\leftarrow 0$ and ${\mathcal R}_{n-1}[7]$ $\leftarrow 1$ (Lemma~\ref{universal}). However, the RMTs of $\Gamma_k^{n-1}$ can not attain all the possible weights which are noted in Tab.~\ref{possible_weight}. For example, no RMT of $\Gamma_k^{n-1}$ can have weight 2 or -2. Possible weights of RMTs at ${\mathcal{N}}_{n-1}$ are following:

\hspace{18mm} (i) $W_0(n-1, 4)$ = 0 or 1, ~~~~~~~~~~~(ii) $W_1(n-1, 1)$ = 0 or 1,

\hspace{18mm}(iii) $W_1(n-1, 5)$ = 0 or 1, ~~~~~~~~~~(iv) $W_2(n-1, 2)$ = 0 or -1,

\hspace{18mm}(v) $W_2(n-1, 6)$ = 0 or -1  ~~~~~~~~~~~(vi) $W_3(n-1, 3)$ = 0 or -1. 

This indicates that we have to select ${\mathcal{R}}_{n-2}$ in such a way that the RMTs can attain above possible weights.

As usual, if $W_0(n-1, 4) = 0$ then set ${\mathcal R}_{n-1}[4]$ $\leftarrow 0$, otherwise set ${\mathcal R}_{n-1}[4]$ $\leftarrow 1$. Similarly, set ${\mathcal R}_{n-1}[3]$ $\leftarrow 1$ if $W_3(n-1, 3) = 0$, otherwise set ${\mathcal R}_{n-1}[3]$ $\leftarrow 0$. However, we impose following obvious restrictions during selection of ${\mathcal{R}}_{n-1}$. 
\begin{enumerate}
\item If $W_1(n-1,r)$ = 0 where $r \in \{1, 5\}$ then set ${\mathcal R}_{n-1}[r]$ $\leftarrow$ 0, otherwise set ${\mathcal R}_{n-1}[r]$ $\leftarrow$ 1. 
\item If $W_2(n-1,r)$ = 0 where $r \in \{2, 6\}$ then set ${\mathcal R}_{n-1}[r]$ $\leftarrow$ 1, otherwise set  ${\mathcal R}_{n-1}[r]$ $\leftarrow$ 0. 
\end{enumerate} 

The flowchart of Fig.~\ref{flowchat4} summarizes the selection procedure of ${\mathcal{R}}_{n-1}$.

\begin{figure}[h]
\begin{center}
\includegraphics[height=3.1in, width=5.9in]{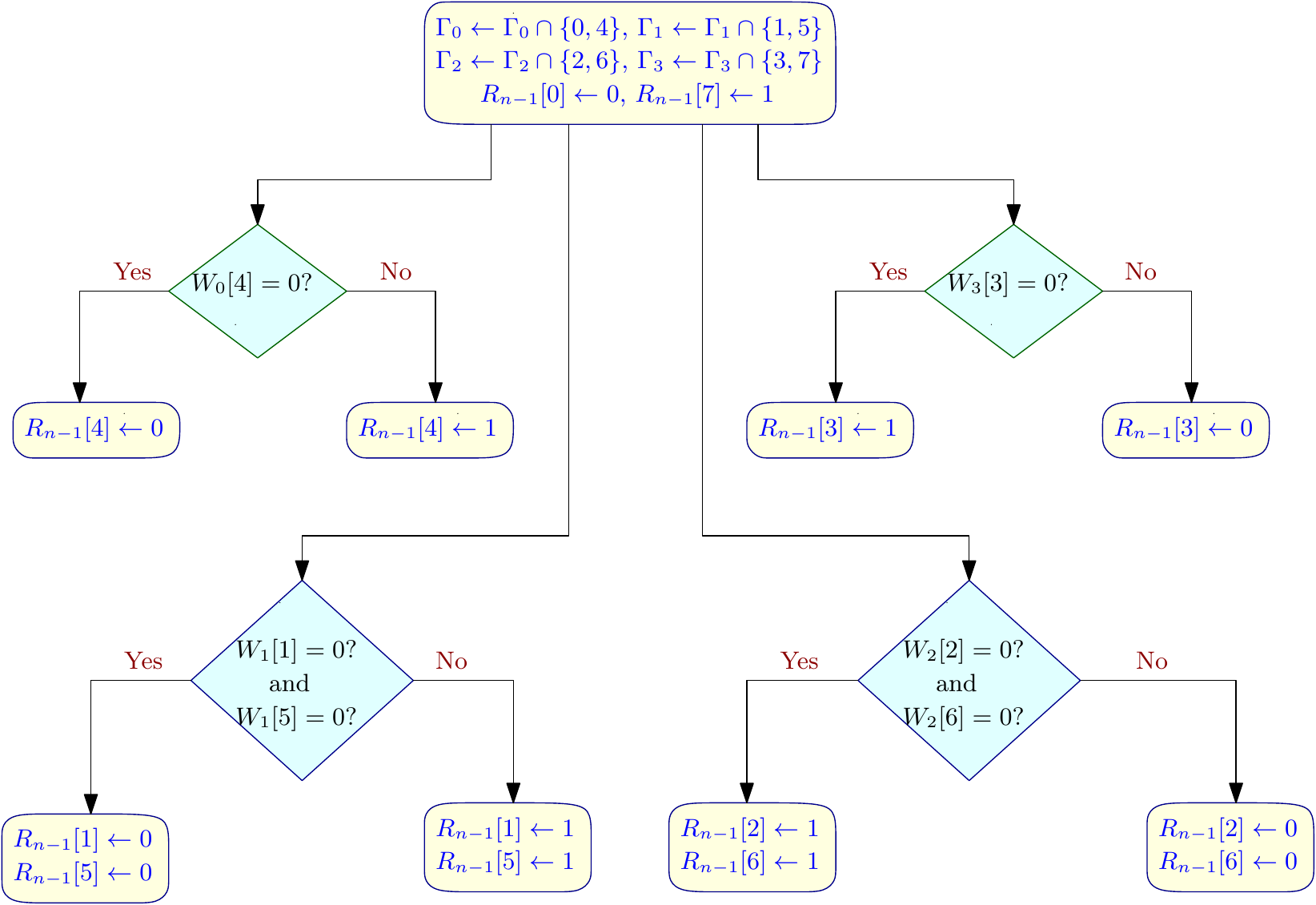}
\caption{Flowchart for the selection of ${\mathcal{R}}_{n-1}$}
\label{flowchat4}
\end{center}
\end{figure} 

\vspace{2mm}
\noindent
\textbf{Selection of ${\mathcal{R}}_{n-2}$:} According to point.~\ref{rtd5} of Definition~\ref{Rtree_def}, all the RMTs are not present in $\Gamma_k^{{n-2.j}}$ of an arbitrary node of level $n-2$ for any $k \in \{0, 1, 2, 3\}$. In general, the super node ${\mathcal{N}}_{n-1}$ = ($\{0, 2, 4, 6\}, \{0, 2, 4, 6\}$, $\{1, 3, 5, 7\}, \{1, 3, 5, 7\}$). Though the RMTs of ${\mathcal{N}}_{n-2}$ can attain all possible weights as described in Tab.~\ref{possible_weight}, we have to select the next state values of RMTs of ${\mathcal{R}}_{n-2}$ in such a fashion that the possible weights of RMTs of ${\mathcal{N}}_{n-1}$ can be the above stated weights only. However, if following conditions arise, we take the same actions which we have taken in selecting ${\mathcal{R}}_{i}$:

\begin{enumerate}
\item \label{n-2_1}If $W_k(n-2,4)$ = $W_k(n-2,0)$, $k \in \{0, 1\}$ then set ${\mathcal R}_{n-2}[4] \leftarrow 0$. Otherwise set ${\mathcal R}_{n-2}[4] \leftarrow 1$.

\item \label{n-2_2}If $W_k(n-2,3)$ = $W_k(n-2,7)$, $k \in \{2, 3\}$ then set ${\mathcal R}_{n-2}[3] \leftarrow 1$. Otherwise set ${\mathcal R}_{n-2}[3] \leftarrow 0$. 

\item \label{n-2_3}If $W_k(n-2,2)$ $<$ $W_k(n-2,6)$, $k \in \{0, 1\}$ then set ${\mathcal R}_{n-2}[2] \leftarrow 0$ and ${\mathcal R}_{n-2}[6] \leftarrow 1$. 

\item \label{n-2_4}If $W_k(n-2,5)$ $>$ $W_k(n-2,1)$, $k \in \{2, 3\}$ then set ${\mathcal R}_{n-2}[5] \leftarrow 1$ and ${\mathcal R}_{n-2}[1] \leftarrow 0$.

\end{enumerate}

If all the conditions, related to weight, arise, we can set the next states of all the RMTs of ${\mathcal{R}}_{n-2}$. However, if one or more conditions are not met, some RMTs next state values remain unfilled, and then we need to take special care due to the reasons stated above. Following are the conditions, which if arise, we specially set the next state values of RMTs. Obviously, these conditions are not specially dealt in selecting ${\mathcal{R}}_{i}$.

\begin{enumerate}
\item \label{n-2_5} If $W_0(n-2,2)$ = $W_0(n-2,6)$ = -1 then set ${\mathcal R}_{n-2}[2]$ $\leftarrow$ 0 and ${\mathcal R}_{n-2}[6]$ $\leftarrow$ 0.

Because both the RMTs contribute RMT 4 at level $(n-1)$ with weight $W_0(n-1,4)$ = 0. If ${\mathcal R}_{n-2}[2]$ and ${\mathcal R}_{n-2}[6]$ are 1, then they contribute RMT 4 at level $(n-1)$ with weight $W_0(n-1,4)$ = -1, which can not contribute RMT 0, 1 at leaf nodes with weight 0. 

\item \label{n-2_6}If $W_1(n-2,2)$ = $W_1(n-2,6)$ = 1 then set ${\mathcal R}_{n-2}[2]$ $\leftarrow$ 1 and ${\mathcal R}_{n-2}[6]$ $\leftarrow$ 1. 

Because both the RMTs contribute RMT 5 at level $(n-1)$ with weight $W_1(n-1,5)$ = 1. If ${\mathcal R}_{n-2}[2]$ and ${\mathcal R}_{n-2}[6]$ are 0, then they contribute RMT 5 at level $(n-1)$ with weight $W_1(n-1,5)$ = 2, which can not contribute RMT 2, 3 at leaf nodes with weight 0. 

\item \label{n-2_7}If $W_2(n-2,1)$ = $W_2(n-2,5)$ = -1 then set ${\mathcal R}_{n-2}[1]$ $\leftarrow$ 0 and ${\mathcal R}_{n-2}[5]$ $\leftarrow$ 0.

Because both the RMTs contribute RMT 2 at level $(n-1)$ with weight $W_2(n-1,2)$ = -1. If ${\mathcal R}_{n-2}[1]$ and ${\mathcal R}_{n-2}[5]$ are 1, then they contribute RMT 2 at level $(n-1)$ with weight $W_2(n-1,2)$ = -2, which can not contribute RMT 4, 5 at leaf nodes with weight 0. 

\item \label{n-2_8}If $W_3(n-2,1)$ = $W_3(n-2,5)$ = 1 then set ${\mathcal R}_{n-2}[1]$ $\leftarrow$ 1 and ${\mathcal R}_{n-2}[5]$ $\leftarrow$ 1. 

Because they contribute RMT 3 at level $(n-1)$ with weight $W_3(n-1,3)$ = 0. If ${\mathcal R}_{n-2}[1]$ and ${\mathcal R}_{n-2}[5]$ are 0, then they contribute RMT 3 at level $(n-1)$ with weight $W_3(n-1,3)$ = 1, which can not contribute RMT 6, 7 at leaf nodes with weight 0.

\item \label{n-2_9}If $W_k(n-2,2)$ = $W_k(n-2,6)$ = 0 where $k \in \{0, 1\}$, then arbitrarily set either as ${\mathcal R}_{n-2}[2]$ $\leftarrow$ ${\mathcal R}_{n-2}[6]$ $\leftarrow$ 0 or as ${\mathcal R}_{n-2}[2]$ $\leftarrow$ ${\mathcal R}_{n-2}[6]$ $\leftarrow$ 1. 

\item \label{n-2_10}If $W_k(n-2,1)$ = $W_k(n-2,5)$ = 0 where $k \in \{2, 3\}$, then arbitrarily set either as ${\mathcal R}_{n-2}[1]$ $\leftarrow$ ${\mathcal R}_{n-2}[5]$ $\leftarrow$ 0 or as ${\mathcal R}_{n-2}[1]$ $\leftarrow$ ${\mathcal R}_{n-2}[5]$ $\leftarrow$ 1. 
\end{enumerate}

One can verify that conditions~\ref{n-2_5} and \ref{n-2_6} can not be true together. Similarly, conditions~\ref{n-2_7} and \ref{n-2_8} are not true together. Hence, we are able to set next state values of all the RMTs of ${\mathcal{R}}_{n-2}$. For easy reference, please see the flowchart of Fig.~\ref{flowchat3} which summarizes the selection of ${\mathcal{R}}_{n-2}$.

\begin{figure}
\begin{center}
\includegraphics[height=5.8in, width=5.9in]{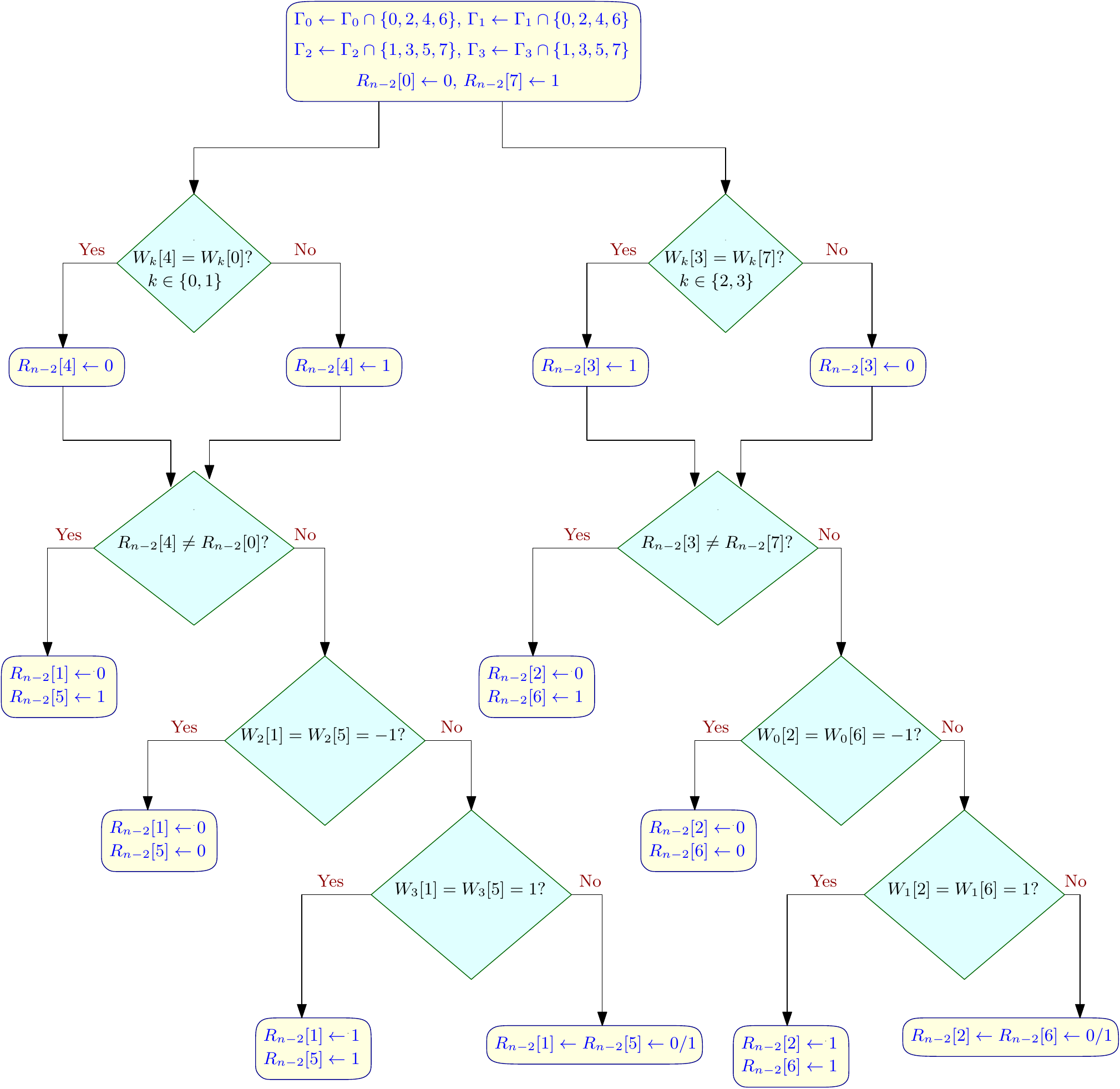}
\caption{Flowchart for the selection of ${\mathcal{R}}_{n-2}$}
\label{flowchat3}
\end{center}
\end{figure}

\subsection{Algorithm}

\begin{algorithm}[hbtp]
\scriptsize
\SetKwInOut{Input}{Input}
\SetKwInOut{Output}{Output}

\Input{$n (\geq 5)$}
\Output{NCCA with rule vector ${\mathcal{R}}= \langle{\mathcal{R}}_0, {\mathcal{R}}_1, \cdots, {\mathcal{R}}_{n-1}\rangle$}

\rule[5pt]{0.95\textwidth}{0.85pt}\\

\Begin{
	\textbf{Step 1 :}\label{st1} Set $i \leftarrow 2$,  $\Gamma_0 \leftarrow \{0, 1\}$, $\Gamma_1 \leftarrow \{2, 3\}$, $\Gamma_2 \leftarrow \{4, 5\}$, $\Gamma_3 \leftarrow \{6, 7\}$, $W_k[r] \leftarrow 0$ if $r \in \Gamma_k$, for each $k$, $0 \leq k \leq 3$\;

	\textbf{Step 2 :}\label{st2} Set ${\mathcal{R}}_j[0] \leftarrow 0$, ${\mathcal{R}}_j[7] \leftarrow 1$, $0 \leq j \leq n-1$\;
	
	\textbf{Step 3 :}\label{st3} Randomly select a rule ${\mathcal{R}}_0$ from rule set \{136, 170, 184, 192, 204, 226, 238, 240, 252\}\;	
\hspace{9mm}Call {\it FindNextWeight(0, $\Gamma_k$, $W_k$)} for each $k \in \{0, 1, 2, 3\}$\;

	 \textbf{Step 4 :}\label{st4} Select the next state values of RMTs of ${\mathcal{R}}_1$ according to the following:\\
				\eIf{$W_0[2] = -1$}
					{
						${\mathcal{R}}_1[2]$ $\leftarrow 0$\;
						\eIf{$W_1[4] = W_1[6]$}
						{
							${\mathcal{R}}_1[3] \leftarrow {\mathcal{R}}_1[4] \leftarrow 0$, ${\mathcal{R}}_1[6]$ $\leftarrow 1$ \;
							${\mathcal{R}}_1[1] \leftarrow {\mathcal{R}}_1[5] \leftarrow \alpha$ where $\alpha \in \{0, 1\}$ \;
						}
						{
							${\mathcal{R}}_1[5]$ $\leftarrow 1$, ${\mathcal{R}}_1[4] \leftarrow {\mathcal{R}}_1[6] \leftarrow \alpha$ where $\alpha \in \{0, 1\}$\;
							\eIf{${\mathcal{R}}_1[4] = 0$}
							{
								${\mathcal{R}}_1[1] \leftarrow {\mathcal{R}}_1[3] \leftarrow 1$\;
							}
							{
								${\mathcal{R}}_1[1] \leftarrow {\mathcal{R}}_1[3] \leftarrow 0$\;
							}						
						}
					}
						{
							${\mathcal{R}}_1[2]$ $\leftarrow \alpha$ where $\alpha \in \{0, 1\}$\;
							\eIf{${\mathcal{R}}_1[2] = 0$}
							{
								\eIf{$W_1[4] = W_1[6]$}
								{
									\eIf{$W_1[4] = 0$ or $W_2[3] = 0$}
									{
										${\mathcal{R}}_1[3] \leftarrow 1$, ${\mathcal{R}}_1[4] \leftarrow {\mathcal{R}}_1[6] \leftarrow 0$\;
									}
									{															
											${\mathcal{R}}_1[4] \leftarrow {\mathcal{R}}_1[6] \leftarrow \alpha$ where $\alpha \in \{0, 1\}$ \;
											\eIf{${\mathcal{R}}_1[6] = 0$}
											{
												${\mathcal{R}}_1[3] \leftarrow 1$\;
											}
											{
												${\mathcal{R}}_1[3] \leftarrow 0$\;
											}
									}
									\eIf{${\mathcal{R}}_1[4] = 0$}
									{
										${\mathcal{R}}_1[1] \leftarrow {\mathcal{R}}_1[5] \leftarrow \alpha$ where $\alpha \in \{0, 1\}$ \;
									}
									{
										${\mathcal{R}}_1[1]$ $\leftarrow 0$, ${\mathcal{R}}_1[5]$ $\leftarrow 1$\;
									}
								}
								{
									${\mathcal{R}}_1[5] \leftarrow 1$\;
									${\mathcal{R}}_1[4]$ $\leftarrow 1$, ${\mathcal{R}}_1[6]$ $\leftarrow 0$\;
									${\mathcal{R}}_1[1] \leftarrow 0$, ${\mathcal{R}}_1[3] \leftarrow 1$\;
								}
							}
							{
								${\mathcal{R}}_1[3]$ $\leftarrow 1$, ${\mathcal{R}}_1[6]$ $\leftarrow 1$\;
								\eIf{$W_1[4] = W_1[6]$}
								{
									${\mathcal{R}}_1[4]$ $\leftarrow 0$\;
									${\mathcal{R}}_1[1] \leftarrow {\mathcal{R}}_1[5] \leftarrow \alpha$ where $\alpha \in \{0, 1\}$ \;
								}
								{
									${\mathcal{R}}_1[5] \leftarrow 1$\;
									${\mathcal{R}}_1[4]$ $\leftarrow 1$, ${\mathcal{R}}_1[1] \leftarrow 0$\;
								}	
							}
						}

\textbf{Step 5 :}\label{st5} Call {\it FindNextWeight(1, $\Gamma_k$, $W_k$)} for each $k \in \{0, 1, 2, 3\}$\;
\textbf{Step 6 :}\label{st6} Select the next state values of RMTs 1, 4 and 5 of ${\mathcal{R}}_i$ according to the following:\\
				For any $k \in \{0, 1, 2, 3\}$\\
				\eIf{$W_k[4] = W_k[0]$}
					{
						${\mathcal{R}}_i[4]$ $\leftarrow 0$
					}
					{
						${\mathcal{R}}_i[4]$ $\leftarrow 1$
					}
				\eIf{${\mathcal{R}}_i[4] \neq {\mathcal{R}}_i[0]$}
					{
						${\mathcal{R}}_i[1]$ $\leftarrow 0$, ${\mathcal{R}}_i[5]$ $\leftarrow 1$\;
					}
					{
						\uIf{($W_0[5] = 1$) or ($W_1[5] = 2$) or ($W_2[5] = 1$) or ($W_3[5] = 1$) or ($W_3[1] = 1$)}
					
								{
									${\mathcal{R}}_i[1]$ $\leftarrow 1$, ${\mathcal{R}}_i[5]$ $\leftarrow 1$ \;
								}
						\uElseIf{$W_3[1] = -1$}
								{
									${\mathcal{R}}_i[1]$ $\leftarrow 0$, ${\mathcal{R}}_i[5]$ $\leftarrow 0$\;
								}
						\Else{
									${\mathcal{R}}_i[1] \leftarrow {\mathcal{R}}_i[5] \leftarrow \alpha$ where $\alpha \in \{0, 1\}$ \;
							}
						}		

}

\caption{\emph{Synthesize NCCA}}
\label{Syn_NCCA}
\end{algorithm}

\setcounter{algocf}{0}
\begin{algorithm}[hbtp]
\scriptsize
\SetKwInOut{Input}{Input}
\SetKwInOut{Output}{Output}

		\textbf{Step 7 :}\label{st7} Select the next state values of RMTs 2, 3 and 6 of ${\mathcal{R}}_i$ according to the following:\\
					For any $k \in \{0, 1, 2, 3\}$\\
				\eIf{$W_k[3] = W_k[7]$}
					{
						${\mathcal{R}}_i[3]$ $\leftarrow 1$
					}
					{
						${\mathcal{R}}_i[3]$ $\leftarrow 0$
					}
				\eIf{${\mathcal{R}}_i[3] \neq {\mathcal{R}}_i[7]$}
					{
						${\mathcal{R}}_i[2]$ $\leftarrow 0$, ${\mathcal{R}}_i[6]$ $\leftarrow 1$\;
					}
					{
						\uIf{($W_0[2] = -1$) or ($W_1[2] = -1$) or ($W_2[2] = -2$) or ($W_3[1] = -1$) or ($W_0[6] = -1$)}
					
								{
									${\mathcal{R}}_i[2]$ $\leftarrow 0$, ${\mathcal{R}}_i[6]$ $\leftarrow 0$ \;
								}
						\uElseIf{$W_0[6] = 1$}
								{
									${\mathcal{R}}_i[2]$ $\leftarrow 1$, ${\mathcal{R}}_i[6]$ $\leftarrow 1$\;
								}
						\Else{
									${\mathcal{R}}_i[2] \leftarrow {\mathcal{R}}_i[6] \leftarrow \alpha$ where $\alpha \in \{0, 1\}$ \;
							}
						}	

\textbf{Step 8 :}\label{st8} Call {\it FindNextWeight($i$, $\Gamma_k$, $W_k$)} for each $k \in \{0, 1, 2, 3\}$\;
\textbf{Step 9 :}\label{st9} $i \leftarrow i + 1$\\
		\lIf{$i<n-2$}{goto Step 6}
		Set $\Gamma_0 \leftarrow \Gamma_0 \cap \{0, 2, 4, 6\}$, $\Gamma_1 \leftarrow \Gamma_1 \cap \{0, 2, 4, 6\}$, $\Gamma_2 \leftarrow \Gamma_2 \cap \{1, 3, 5, 7\}$, $\Gamma_3 \leftarrow \Gamma_3 \cap \{1, 3, 5, 7\}$\;
		Set ${\mathcal{R}}_{n-2}[0] \leftarrow 0$, ${\mathcal{R}}_{n-2}[7] \leftarrow 1$\;
		Check the bellow conditions and set the next state values of RMTs of ${\mathcal{R}}_{n-2}$\\
		 \eIf{$W_k[4] = W_k[0]$, $k \in \{0, 1\}$}
				{
					${\mathcal{R}}_{n-2}[4]$ $\leftarrow 0$ \;
				}
				{
					${\mathcal{R}}_{n-2}[4]$ $\leftarrow 1$ \;
				}
				
		 \eIf{${\mathcal{R}}_{n-2}[4] \neq {\mathcal{R}}_{n-2}[0]$}
				{
					${\mathcal{R}}_{n-2}[1]$  $\leftarrow 0$, ${\mathcal{R}}_{n-2}[5]$  $\leftarrow 1$\;
				}
				{
					\uIf{$W_2[1]=W_2[5]=-1$}
						{
							${\mathcal{R}}_{n-2}[1]$  $\leftarrow 0$, ${\mathcal{R}}_{n-2}[5]$  $\leftarrow 0$\;
						}
					\uElseIf{$W_3[1]=W_3[5]=1$}
						{
							${\mathcal{R}}_{n-2}[1]$  $\leftarrow 1$, ${\mathcal{R}}_{n-2}[5]$  $\leftarrow 1$\;
						}
					
					\Else{${\mathcal{R}}_{n-2}[1] \leftarrow {\mathcal{R}}_{n-2}[5] \leftarrow \alpha$ where $\alpha \in \{0, 1\}$ \;}
				}

		 \eIf{$W_k[3] = W_k[7]$, $k \in \{2, 3\}$}
				{
					${\mathcal{R}}_{n-2}[3]$ $\leftarrow 1$ \;
				}
				{
					${\mathcal{R}}_{n-2}[3]$ $\leftarrow 0$ \;
				}
				
		 \eIf{${\mathcal{R}}_{n-2}[3] \neq {\mathcal{R}}_{n-2}[7]$}
				{
					${\mathcal{R}}_{n-2}[2]$  $\leftarrow 0$, ${\mathcal{R}}_{n-2}[6]$  $\leftarrow 1$\;
				}
				{
					\uIf{$W_0[2]=W_0[6]=-1$}
						{
							${\mathcal{R}}_{n-2}[2]$  $\leftarrow 0$, ${\mathcal{R}}_{n-2}[6]$  $\leftarrow 0$\;
						}
					\uElseIf{$W_1[2]=W_1[6]=1$}
						{
							${\mathcal{R}}_{n-2}[2]$  $\leftarrow 1$, ${\mathcal{R}}_{n-2}[6]$  $\leftarrow 1$\;
						}
					
					\Else{${\mathcal{R}}_{n-2}[2] \leftarrow {\mathcal{R}}_{n-2}[6] \leftarrow \alpha$ where $\alpha \in \{0, 1\}$ \;}
				}
	Call {\it FindNextWeight($n-2$, $\Gamma_k$, $W_k$)} for each $k \in \{0, 1, 2, 3\}$\;

\textbf{Step 10 :}\label{st10} Set $\Gamma_0 \leftarrow \Gamma_0 \cap \{0, 4\}$, $\Gamma_1 \leftarrow \Gamma_1 \cap \{1, 5\}$, $\Gamma_2 \leftarrow \Gamma_2 \cap \{2, 6\}$, $\Gamma_3 \leftarrow \Gamma_3 \cap \{3, 7\}$\;

		Set ${\mathcal{R}}_{n-1}[0] \leftarrow 0$, ${\mathcal{R}}_{n-1}[7] \leftarrow 1$\;
		\eIf{$W_0[4] = 0$}
			{
				${\mathcal{R}}_{n-1}[4] \leftarrow 0$\;
			}
			{
				${\mathcal{R}}_{n-1}[4] \leftarrow 1$\;
			}
		\eIf{$W_3[3] = 0$}
			{
				${\mathcal{R}}_{n-1}[3] \leftarrow 1$\;
			}
			{
				${\mathcal{R}}_{n-1}[3] \leftarrow 0$\;
			}
		For each $r \in \{1, 5\}$\\	
		\eIf{$W_1[r] = 0$}
			{
				${\mathcal{R}}_{n-1}[r] \leftarrow 0$\;
			}
			{
				${\mathcal{R}}_{n-1}[r] \leftarrow 1$\;
			}
		For each $r \in \{2, 6\}$\\
		\eIf{$W_2[r] = 0$}
			{
				${\mathcal{R}}_{n-1}[r] \leftarrow 1$\;
			}
			{
				${\mathcal{R}}_{n-1}[r] \leftarrow 0$\;
			}

			Call {\it FindNextWeight($n-1$, $\Gamma_k$, $W_k$)} for each $k \in \{0, 1, 2, 3\}$\;

\textbf{Step 11 :}\label{st11} Output the NCCA rule vector ${\mathcal{R}}= \langle{\mathcal{R}}_0, {\mathcal{R}}_1, \cdots, {\mathcal{R}}_{n-1}\rangle$\;

\caption{\emph{Synthesize NCCA contd...}}
\label{Syn_NCCA}
\end{algorithm}
 
The steps of the algorithm is noted in Algorithm~\ref{Syn_NCCA} which, like Algorithm~\ref{analysisNCCA}, also develops the {\em super node} of a level $i$, and finds the weights of RMTs of the node. Individual rules of the CA, to be synthesized, are selected primarily based on these weights. As input, the algorithm takes $n \geq 5$, the CA size, and outputs an $n$-cell NCCA. The algorithm uses two data structures: $\Gamma_k$ -- to store the RMTs present in $\Gamma_k^i$ of level $i$, and $W_k$ -- to store the weight of RMTs in $\Gamma_k$ for each $k \in \{0, 1, 2, 3\}$.

In Step 1, the algorithm forms the root node ${\mathcal{N}}_0$ (which is also a super node of level 0) of the tree, and initializes the weights of RMTs. In Step 2, we set ${\mathcal{R}_i}[7] \leftarrow 1$ and ${\mathcal{R}_i}[0] \leftarrow 0$ for each rule ${\mathcal{R}_i}$ of rule vector ${\mathcal{R}}$. After that, we randomly select a number conserving rule as ${\mathcal{R}}_{0}$ and find the weights of the RMTs, which are present in ${\mathcal{N}}_1$ (Step 3). To find the weights of RMTs, we use the procedure, already developed in Section~\ref{algorithms}. By Step 4, however we find out ${\mathcal{R}}_{1}$, then the weights of RMTs of ${\mathcal{N}}_2$ are found out (Step 5). To find out ${\mathcal{R}}_{2}$ to ${\mathcal{R}}_{n-3}$ ($i > 1$) we repeat Steps 6, 7 and 8. To find out ${\mathcal{R}}_{n-2}$ and ${\mathcal{R}}_{n-1}$, the algorithm uses Step 9 and Step 10.

\noindent {\bf Complexity:} The time requirement of Algorithm~\ref{Syn_NCCA} depends on $n$, the size of CA only. Obviously time complexity of Algorithm~\ref{Syn_NCCA} is $O(n)$.

\begin{theorem}
\label{correct_Analysis}
Algorithm~\ref{Syn_NCCA} correctly synthesizes a NCCA of size $n \geq 5$.
\end{theorem}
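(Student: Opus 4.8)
The plan is to reduce the correctness of Algorithm~\ref{Syn_NCCA} to the already-established correctness of the decision algorithm (Algorithm~\ref{analysisNCCA}). Concretely, I would show that the rule vector $\mathcal{R}=\langle \mathcal{R}_0,\dots,\mathcal{R}_{n-1}\rangle$ produced by the synthesis procedure is exactly a vector that Algorithm~\ref{analysisNCCA} accepts. Since both algorithms propagate weights through the identical routine \emph{FindNextWeight} over the same super-nodes $\mathcal{N}_i$, the weight arrays $W_k$ computed during synthesis coincide, level by level, with those that the decision algorithm would compute on the output. Hence it suffices to verify three facts: (i) every synthesized $\mathcal{R}_i$ is a number conserving rule, so Steps~2 and 4 of Algorithm~\ref{analysisNCCA} never reject; (ii) each $\mathcal{R}_i$ satisfies all conditions checked in Steps~5 and 6 of Algorithm~\ref{analysisNCCA}; and (iii) every RMT at the leaf level carries weight $0$, so Step~10 returns ``Yes''. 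By the correctness of Algorithm~\ref{analysisNCCA}, this forces the output to be an NCCA.

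For (i) and (ii) I would argue by induction on the level $i$, maintaining the invariant that at $\mathcal{N}_i$ the weight of every RMT is unique (Theorem~\ref{basic1}) and lies in the range permitted by Table~\ref{possible_weight}. Step~2 sets $\mathcal{R}_i[0]=0$ and $\mathcal{R}_i[7]=1$ for every rule, which is condition~1 of Theorem~\ref{rule_select} (Lemma~\ref{universal}). The sibling/equivalent assignments in Steps~6 and 7 (and their analogues for $\mathcal{R}_1$ in Step~4 and $\mathcal{R}_{n-2}$ in Step~9) are chosen precisely so that $\mathcal{R}_i[4]=\mathcal{R}_i[0]$ iff $W_k(i,4)=W_k(i,0)$ and $\mathcal{R}_i[3]=\mathcal{R}_i[7]$ iff $W_k(i,3)=W_k(i,7)$, which are exactly Corollaries~\ref{coro_40} and~\ref{coro_37}; this simultaneously yields conditions~2 and 3 of Theorem~\ref{rule_select} and makes the Step~5/Step~6 checks of Algorithm~\ref{analysisNCCA} hold by construction. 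The remaining work is to confirm that each branch forcing a bit to a fixed value does so consistently with the sixteen implications of Section~\ref{weight_RMT}, and that every free ($\alpha$) choice occurs only when both $0$ and $1$ keep all weights inside Table~\ref{possible_weight} and respect Corollary~\ref{basic3} across the four sets, so that the assignment is never self-contradictory.

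The heart of the proof is (iii), the vanishing of all leaf weights, which rests on the special treatment of the last two rules. By Points~\ref{rtd5} and~\ref{rtd6} of Definition~\ref{Rtree_def}, only half the RMTs survive at level $n-2$ and a quarter at level $n-1$, so $\mathcal{N}_{n-1}$ can carry only the restricted weights noted before Step~10 ($0$ or $\pm1$, never $\pm2$). I would show that the extra restrictions placed on $\mathcal{R}_{n-2}$ (the conditions~\ref{n-2_5}--\ref{n-2_8} in the $\mathcal{R}_{n-2}$ selection) are exactly what prevents a weight of $\pm2$ from reaching level $n-1$: each forbids a pair of equivalent RMTs from contributing a doubled surplus or deficiency to the single surviving RMT of the next level. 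Granting this, $\mathcal{R}_{n-1}$ (Step~10) is set so that the update from $\mathcal{N}_{n-1}$ to the leaves cancels each remaining unit of weight — subtracting $1$ exactly when the weight is $+1$ and adding $1$ exactly when it is $-1$ — driving every leaf RMT to weight $0$. Since the leaf sets reproduce the RMT sets of $\mathcal{R}_0$ and all four $\Gamma_k$ are covered, all leaf weights vanish.

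The main obstacle will be precisely this boundary analysis: proving that the purely local assignments at $\mathcal{R}_{n-2}$ and $\mathcal{R}_{n-1}$, chosen without explicit reference to the periodic wrap-around back to $\mathcal{R}_0$, nonetheless force every leaf weight to $0$ in all four sets at once. This requires a careful case analysis of how the restricted weight set at $\mathcal{N}_{n-1}$ interacts with the sibling and equivalence relations (Corollaries~\ref{basic3},~\ref{equivalentTh1} and~\ref{1110}), together with a check that the conditions the algorithm treats as mutually exclusive (the pairs \ref{n-2_5}/\ref{n-2_6} and \ref{n-2_7}/\ref{n-2_8}, and the corresponding branches in the selection of $\mathcal{R}_1$) genuinely never co-occur, so that no leaf can inherit an un-cancellable weight and the algorithm is well-defined throughout. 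Once this boundary bookkeeping is settled, the interior levels are a routine induction and correctness follows from Algorithm~\ref{analysisNCCA}.
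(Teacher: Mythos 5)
Your proposal is considerably more structured than the paper's own argument, which consists of a single sentence asserting that correctness ``follows the correctness of the theories developed in previous sections'' because the rules are chosen using those theories. Your reduction --- show that the synthesized vector is accepted by Algorithm~\ref{analysisNCCA}, whose weight propagation via \emph{FindNextWeight} over the super-nodes is identical, and invoke its correctness --- is a genuinely different and cleaner route. In fact your item (iii) alone suffices: the paper establishes directly that a CA is an NCCA iff every leaf RMT has weight $0$, so items (i) and (ii) are only needed to guarantee the decision algorithm does not reject before reaching the leaves. The decomposition also exposes something the paper glosses over: the conditions enforced during synthesis (Lemma~\ref{universal}, Theorem~\ref{rule_select}, the sixteen implications of Section~\ref{weight_RMT}, Corollaries~\ref{coro_40} and~\ref{coro_37}) are all \emph{necessary} conditions derived from the NCCA hypothesis, and satisfying necessary conditions does not by itself yield sufficiency.

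That said, you have not closed the argument, and you say so yourself: the vanishing of all leaf weights is deferred as ``the main obstacle.'' This is a genuine gap, and it is exactly where a proof would have to do real work. Two points in particular are asserted rather than proved. First, that the restrictions on ${\mathcal{R}}_{n-2}$ (conditions~\ref{n-2_5}--\ref{n-2_8}) confine the weights at ${\mathcal{N}}_{n-1}$ to the six listed values and that the per-set assignments in Step~10 --- which fix RMTs $4$, $\{1,5\}$, $\{2,6\}$, $3$ of the \emph{same} rule ${\mathcal{R}}_{n-1}$ from conditions read off \emph{different} sets $\Gamma_0,\Gamma_1,\Gamma_2,\Gamma_3$ --- are mutually consistent and each cancel exactly one unit of weight. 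Second, that this local cancellation closes the periodic loop, i.e.\ that the leaf sets $\Gamma_k^{N_{n.j}}$, which must reproduce the sibling sets of the root, end up with weight $0$ in all four sets simultaneously; this requires the cross-set coherence of Corollary~\ref{basic3} and the claimed mutual exclusivity of the paired conditions, neither of which you verify. To be fair, the paper's own proof fills none of this in either, so your plan is not worse than the published argument --- but as it stands it is a proof outline with the decisive case analysis still owed.
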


\begin{proof}
Correctness of the algorithm follows the correctness of the theories, developed in previous sections. Because, all the rules, $\langle{\mathcal{R}}_0, {\mathcal{R}}_1, \cdots, {\mathcal{R}}_i, \cdots, {\mathcal{R}}_{n-1}\rangle$ of a rule vector of size $n$ are chosen utilizing those theories, by the algorithm.
\end{proof}

\begin{example}
\label{examSyn}
Let us synthesize a rule vector of size 6 using the Algorithm~\ref{Syn_NCCA}. The root node ${\mathcal{N}}_{0}$ is formed and assigned the weight of each RMTs to 0 (Step 1). In Step 2, we assign the next state value of RMT 7 to 1 and RMT 0 to 0 for each ${\mathcal{R}}_{i}$ ($i \in \{0, \cdots, n-1\}$). After that we randomly select a rule as ${\mathcal{R}}_{0}$ from the rule set \{136, 170, 184, 192, 204, 226, 238, 240, 252\}. Suppose, ${\mathcal{R}}_{0} = 192$. Then, find out the weights of RMTs of ${\mathcal{N}}_{1}$.

To find out the ${\mathcal{R}}_{1}$ we check the conditions of Step 4. From node ${\mathcal{N}}_{1}$ of Fig.~\ref{algo_steps}, we see that $W_0[2]=0$, $W_3[5]=0$, $W_2[3]=0$ and $W_1[4]=1$, so next state values of RMT 3 is 1. Now if consider ${\mathcal{R}}_{1}[2] = 0$ then using the other conditions, let us set the following: ${\mathcal{R}}_{1}[4] \leftarrow {\mathcal{R}}_{1}[6] \leftarrow 0$ and ${\mathcal{R}}_{1}[1] \leftarrow {\mathcal{R}}_{1}[5] \leftarrow 0$, which result in rule 136 as ${\mathcal{R}}_{1}$.

To select ${\mathcal{R}}_{2}$ and ${\mathcal{R}}_{3}$, Step 6, Step 7 and Step 8 are used. Using the conditions developed in these steps, one can choose ${\mathcal{R}}_{2} = 184$ and ${\mathcal{R}}_{3} = 252$. However, to find ${\mathcal{R}}_{4}$, we need to use Step 9, because level 4 is the ($n-2$)$^{th}$ level. From node ${\mathcal{N}}_{4}$, we see that $W_k[4]=W_k[0]$ and $W_k[3]=W_k[7]$ for any $k$. So next state values of RMT 4 and RMT 3 are set as 0 and 1 respectively. Using the other conditions, let us set the following: ${\mathcal{R}}_{4}[2] \leftarrow {\mathcal{R}}_{4}[6] \leftarrow 1$ and ${\mathcal{R}}_{4}[1] \leftarrow {\mathcal{R}}_{4}[5] \leftarrow 0$, which result in rule 204 as ${\mathcal{R}}_{4}$. Similarly, to find ${\mathcal{R}}_{5}$, we need to use Step 10, because level 5 is the ($n-1$)$^{th}$ level. From node ${\mathcal{N}}_{5}$, we see that $W_0[4] = 0$ and $W_3[3] = 0$. So the next state values of RMT 4 and RMT 3 are set as 0 and 1 respectively. Again from node ${\mathcal{R}}_{5}$, we see that $W_1[1] = W_1[5] \neq 0$ and $W_2[2] = W_2[6] = 0$. So ${\mathcal{R}}_{5}[2] \leftarrow {\mathcal{R}}_{5}[6] \leftarrow 1$ and ${\mathcal{R}}_{5}[1] \leftarrow {\mathcal{R}}_{5}[5] \leftarrow 1$ which result in rule 238 as ${\mathcal{R}}_{5}$.

Finally, Step 11 return the output, the NCCA rule vector ${\mathcal{R}}= \langle{192, 136, 184, 252, 204, 238}\rangle$.
\end{example}

\section{Conclusion}

This paper has studied the one dimensional two-state 3-neighborhood non-uniform number conserving cellular automata (NCCAs). The reachability tree has been utilized to do such study. Utilizing the reachability tree, a number of theorems, corollaries have been developed. Here, we have identified that only 9 rules out of the 256 Wolfram's CA rules can take part in non-uniform NCCAs. Finally, we report two algorithms, for deciding and synthesizing non-uniform NCCAs of size $n$. Both the algorithms run in O($n$) time.

\bibliographystyle{alpha}
\bibliography{References}

\end{document}